\newtheorem{corollary}{Corollary}
\newtheorem{lemma}{Lemma}
\newtheorem{proposition}{Proposition}
\newtheorem{remark}{Remark}
\numberwithin{equation}{section}
\newcommand{\naturals}{\mathbb{N}}
\newcommand{\reals}{\mathbb{R}}
\newcommand{\complexes}{\mathbb{C}}
\newcommand{\integers}{\mathbb{Z}}
\newcommand{\spec}[1]{\operatorname{spec}{#1}}
\newcommand{\iprod}[2]{\langle #1, #2 \rangle}
\newcommand{\tnorm}[1]{\| #1 \|_{1}}
\newcommand{\tr}[1]{\operatorname{tr}{#1}}
\newcommand{\ptr}[2]{\operatorname{tr}_{#1}{#2}}
\newcommand{\proj}[1]{\mathsf{P}_{#1}}
\newcommand{\comm}[2]{[#1,#2]}
\newcommand{\acomm}[2]{\{#1,#2\}}
\newcommand{\hadj}{*}
\newcommand{\id}{\mathrm{id}}
\newcommand{\tclass}[1]{B_1 (#1)}
\newcommand{\ket}[1]{| #1 \rangle}
\newcommand{\bra}[1]{\langle #1 |}
\newcommand{\hilbert}{\mathscr{H}}
\newcommand{\hilberte}{\mathscr{H}_{\mathrm{e}}}
\newcommand{\hilbertse}{\mathscr{H}_{\mathrm{se}}}
\newcommand{\hamint}{H_{\mathrm{int.}}}
\newcommand{\matr}[1]{M_{#1}(\complexes)}
\newcommand{\matrd}{\matr{d}}
\newcommand{\envstate}{\rho_{\mathrm{e}}}
\newcommand{\bzero}{\mathcal{B}_{0}}
\renewcommand{\vec}[1]{\boldsymbol{#1}}
\begin{document}
\title{Markovian dynamics under weak periodic coupling}
\author{Krzysztof Szczygielski}
\address{Institute of Theoretical Physics and Astrophysics, Faculty of Mathematics, Physics\newline%
\indent  and Informatics, University of Gda\'{n}sk, 80-308 Gda\'{n}sk, Poland}%
\email{krzysztof.szczygielski@ug.edu.pl}%
\date{\today}
\subjclass{Primary 05C38, 15A15; Secondary 05A15, 15A18} %

\begin{abstract}
We examine a completely positive and trace preserving evolution of finite dimensional open quantum system, coupled to large environment via periodically modulated interaction Hamiltonian. We derive a corresponding Markovian Master Equation under usual assumption of weak coupling using the projection operator techniques, in two opposite regimes of very small and very large modulation frequency. Special attention is granted to the case of uniformly (globally) modulated interaction, where some general results concerning the Floquet normal form of a solution and its asymptotic stability are also addressed.
\end{abstract}

\maketitle

\section{Introduction}

In recent years, certain advancements were made in the field of theory of open quantum systems governed by periodically modulated Hamiltonians. In particular, general theory of Markovian evolution was established in \cite{Alicki2006b,Szczygielski2014} and later led to quite diverse spectrum of various results. These included purely mathematical ones, like the open systems incarnation of Howland's time independent formalism \cite{Szczygielski2020} and Floquet description of periodic Lindbladians in commutative setting \cite{Szczygielski_2021}, as well as purely physical outcomes, ranging from periodic dynamical decoupling scheme in Markovian regime to Lindblad treatment of quantum heat engines and photovoltaic cells \cite{Szczygielski2013,Szczygielski2015,Gelbwaser-Klimovsky2015,Alicki2015a,Alicki2017,Alicki2017a,Alicki2018a,Alicki2019}. So far, the time-periodicity in the microscopic model of open system has been considered systematically in regard to the system's self-Hamiltonian, while the interaction term and the Hamiltonian of the environment usually remained time-independent (we note however, that certain advancements were made in general case of time-dependent interaction models, for example for piecewise-constant Hamiltonians as in \cite{Merkli_2008}). In this article, we reformulate existing approach by relocating a functional time dependence of the model from system's Hamiltonian to the interaction term. We present a systematic derivation of Markovian Master Equation governing the system's reduced density operator under usual assumption of weak coupling between system and its environment. We limit our analysis to \emph{periodic} interaction Hamiltonian, which is a mathematically well-tractable case. We believe that proposed formalism of periodic coupling mechanism could potentially find applications in some subareas within broadly understood quantum theory, like quantum computations and error correction, control theory and thermodynamics (possibly in theory of quantum heat engines).
\vskip\baselineskip
The main results of our analysis are presented in sections \ref{sec:WPC} and \ref{sec:UPS}. In section \ref{sec:WPC}, characterizing a general idea of \emph{weak periodic coupling}, we discuss a microscopic (Hamiltonian) model of finite-dimensional system S, which is coupled to external, infinite environment E (reservoir) via bounded interaction Hamiltonian of general form $\lambda A(t)$, where $A(t)$ is periodic with period $T>0$ and self-adjoint, while $\lambda > 0$ is a small (compared to relevant energy scale) dimensionless coupling parameter. First, we derive an integral version of quantum Nakajima-Zwanzig equation (in section \ref{sec:TheModel}) describing evolution of a compound system S+E, using a projection operator approach. Next, we consider a reduced evolution of subsystem S in \emph{weak coupling limit} $\lambda\to 0^+$ in two opposite cases of both very large (in section \ref{sec:WCL}) and very small (section \ref{sec:WCLA}) frequency $\Omega = 2\pi / T$ of interaction Hamiltonian. In the former case, we employ a traditional approach of \emph{weak coupling limit} by Davies \cite{Davies1974}, while the \emph{adiabatic limit} regime by Davies and Spohn \cite{Davies1978} is used in the latter case. Main results of this part of a paper are presented in propositions \ref{prop:FinalWCL} and \ref{prop:Afinal}, respectively, where it is shown that in either case the reduced dynamical maps describing subsystem S are completely positive and trace preserving, and are subject to Markovian Master Equations. Section \ref{sec:UPS} is devoted to a special, simplified case of \emph{uniform periodic steering} where the interaction Hamiltonian is assumed in a form $\lambda g(t) A$ for $g(t)$ real and periodic, $A$ self-adjoint and bounded, and $\lambda$ again being small. It is shown that subsystem S can be then described by a \emph{commutative} Lindbladian family and thus, the induced reduced dynamics admits a product structure (Floquet normal form) which can be calculated exactly (proposition \ref{prop:CLF}) after applying results from \cite{Szczygielski_2021}. This section is then concluded with some comments concerning general algebraic properties of the solution (in proposition \ref{prop:FloquetNormalForm}), as well as its asymptotic stability (section \ref{sec:Asymptotics}).
\vskip\baselineskip
The notation will be mostly traditional. For Hilbert space $\mathscr{X}$, the algebra of bounded linear maps on $\mathscr{X}$ will be traditionally denoted by $B(\mathscr{X})$ and its Banach subspace of \emph{trace class operators} will be $\tclass{\mathscr{X}}$. Hermitian adjoint of any operator $A$ will be denoted by $A^\hadj$. Identity element in algebra $\mathscr{A}$ will be denoted by $I_\mathscr{A}$, while identity map over a given linear space (clear from the context) will be simply $\id{}$. Partial trace of operator $A$ with respect to $\mathscr{X}$ will be $\ptr{\mathcal{X}}{A}$. Occasionally, we will use a dot symbol to indicate differentiation with respect to time variable.

\section{Weak periodic coupling}
\label{sec:WPC}

\subsection{Reduced dynamics}

We start with sketching a general microscopic model of weak periodic coupling. All the results of this section are obtained by application of the projection operator techniques employed by e.g. Nakajima \cite{Nakajima1958}, Zwanzig \cite{Zwanzig1960}, Davies \cite{Davies1974,Davies1976} and Spohn \cite{Davies1978} and we will largely accept notation used therein.

\subsubsection{The model}
\label{sec:TheModel}

Let us consider an open quantum system S, described by finite-dimensional Hilbert space $\hilbert \simeq \complexes^d$ , a constant Hamiltonian $H$ and algebra of observables $B(\hilbert)\simeq \matrd$. We introduce a following spectral decomposition of $H$ (including multiplicities),
\begin{equation}\label{eq:HspecDecomp}
	H = \sum_{k=1}^{d} \epsilon_k P_k, \quad P_k = \ket{\varphi_k}\bra{\varphi_k}, \quad \iprod{\varphi_k}{\varphi_l} = \delta_{kl},
\end{equation}
as well as a set of \emph{Bohr frequencies} $\{\omega = \epsilon_k - \epsilon_l\}$ of $H$, being at the same time a spectrum of corresponding derivation $\comm{H}{\cdot\,}$ defined as a commutator on $\matrd$.

System S is coupled to \emph{environment} E, described by its own Hilbert space $\hilberte$, Hamiltonian $H_{\mathrm{e}}\in B(\hilberte)$ and a constant density operator $\envstate$. In order to validate the \emph{Markovian approximation}, which we invoke eventually, we assume $\hilberte$ infinite-dimensional. The system S+E, treated as a whole, is then described by Hilbert space $\hilbertse = \hilbert \otimes \hilberte$ and algebra $B(\hilbertse)$.

Coupling between subsystems S and E will be realized by a self-adjoint, bounded time-periodic \emph{interaction Hamiltonian} $\hamint(t)$, which we put in general form as
\begin{equation}\label{eq:HintPeriodic}
	\hamint(t) = \sum_{\mu} g_{\mu}(t) \, S_\mu \otimes R_\mu ,
\end{equation}
where $S_\mu \in \matrd$, $R_\mu \in B(\hilberte)$ and $\{g_\mu\}$ is a finite family of complex, piecewise-continuous \emph{steering functions}, periodic with period $T$. Later on, we will simplify \eqref{eq:HintPeriodic} by putting just one steering function, so the \emph{uniform periodic steering} will take place. With no loss of generality \cite{Rivas2012}, one can choose environment operators $R_{\mu}$ appearing in \eqref{eq:HintPeriodic} to be of vanishing expectation value,
\begin{equation}\label{RmuExpVal0}
	\tr{R_\mu \envstate} = 0.
\end{equation}
The reduced density operator of S is contained inside Banach space $\matrd_{1} = (\matrd , \tnorm{\cdot})$ for $\tnorm{\cdot}$ being the trace norm, which we identify isometrically with a closed subspace
\begin{equation}
	\bzero=\matrd_{1} \otimes\envstate
\end{equation}
of $B(\hilbertse)$. We also define two projection operators $\proj{0}$, $\proj{1}$ by setting, for any $a\in B(\hilbertse)$,
\begin{equation}
	\proj{0}(a) = (\ptr{\hilberte}{a}) \otimes \envstate, \quad \proj{1} = \id - \proj{0}
\end{equation}
such that $B(\hilbertse) = \bzero \oplus \mathcal{B}_1$, for $\mathcal{B}_i = \proj{i}B(\hilbertse)$, $i \in \{ 0,\,1 \}$, being the subspaces describing systems S and E, respectively.

The whole Hamiltonian of S+E is periodic and can be decomposed into a sum of two bounded parts
\begin{equation}
	H_{\mathrm{se}}(t) = H_{\mathrm{f.}}+\lambda \hamint(t),
\end{equation}
where $H_{\mathrm{f.}} = H \otimes I_{\hilberte} + I_{\hilbert} \otimes H_{\mathrm{e}}$ is the free part and $\lambda > 0$ is a small coupling parameter. We also introduce two derivations on $B(\hilbertse)$ associated with Hamiltonians $H_{\mathrm{f.}}$ and $\hamint(t)$,
\begin{equation}\label{eq:DerivationsDefinition}
	Z = -i \comm{H_{\mathrm{f.}}}{\cdot\,}, \quad A_t = -i \comm{\hamint(t)}{\cdot\,}.\
\end{equation}

\subsubsection{Reduced evolution and Nakajima-Zwanzig equation}

As system S+E is considered closed, its joint density operator $v^{\lambda}_{t} \in \tclass{\hilbertse}$ undergoes a \emph{reversible evolution} defined by a strongly continuous and differentiable family of completely positive and trace preserving maps $\{V^{\lambda}_{t} : t\in\reals_+\}$ such that $v_t = V_t (v_0)$ for some positive semi-definite trace class operator $v_0 \in \tclass{\hilbertse}$. Family $\{V_t\}$ is subject to von Neumann equation
\begin{equation}\label{eq:vonNeumann}
	\frac{dV^{\lambda}_{t}}{dt} = (Z+\lambda A_t)V_{t}^{\lambda},
\end{equation}
being our starting point. Note, that we emphasized dependence of the solution on coupling parameter $\lambda$. Similarly to the time-independent case \cite{Davies1974,Davies1978} we introduce notation $A_{t}^{ij} = \proj{i} A_t \proj{j}$ for $i,j \in \{0,1\}$.

\begin{lemma}\label{lemma:AtProperties}
We have  $A_{t}^{00} = 0$ and $A_t = A_{t}^{10} + A_{t}^{01} + A_{t}^{11}$.
\end{lemma}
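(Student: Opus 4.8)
The plan is to dispose of the second identity first, since it is purely formal, and then to put all the work into showing $A_t^{00}=0$. Because $\proj{0}+\proj{1}=\id$, writing $A_t=(\proj{0}+\proj{1})\,A_t\,(\proj{0}+\proj{1})$ and expanding yields $A_t=A_t^{00}+A_t^{01}+A_t^{10}+A_t^{11}$ with $A_t^{ij}=\proj{i}A_t\proj{j}$; hence, once $A_t^{00}=0$ is established, the claimed decomposition $A_t=A_t^{10}+A_t^{01}+A_t^{11}$ is immediate.

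For $A_t^{00}=0$ I would take an arbitrary $a\in B(\hilbertse)$ and note that $\proj{0}(a)$ has the product form $b\otimes\envstate$ with $b=\ptr{\hilberte}{a}\in\matrd$. Using linearity of $A_t$ and the explicit form \eqref{eq:HintPeriodic} of $\hamint(t)$, I expand the commutator in \eqref{eq:DerivationsDefinition} as
\begin{equation*}
 A_t(b\otimes\envstate)=-i\sum_{\mu}g_\mu(t)\,\comm{S_\mu\otimes R_\mu}{b\otimes\envstate}
 =-i\sum_{\mu}g_\mu(t)\bigl(S_\mu b\otimes R_\mu\envstate-bS_\mu\otimes\envstate R_\mu\bigr).
\end{equation*}
Applying $\proj{0}$ term by term and using the partial-trace identity $\ptr{\hilberte}{X\otimes Y}=X\,\tr{Y}$, each summand acquires a scalar factor $\tr{R_\mu\envstate}$ or $\tr{\envstate R_\mu}$; both vanish — the first by the normalization \eqref{RmuExpVal0}, the second because $\tr{\envstate R_\mu}=\tr{R_\mu\envstate}$ by cyclicity of the trace, legitimate since $\envstate\in\tclass{\hilberte}$ and $R_\mu$ is bounded. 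Thus $\proj{0}A_t\proj{0}(a)=0$ for every $a$, i.e. $A_t^{00}=0$.

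The only point needing a word of care is that, with $\hilberte$ infinite-dimensional, $\ptr{\hilberte}{\cdot}$ is not defined on all of $B(\hilbertse)$; but this is harmless, because $\proj{0}$ is here only ever applied to operators of the form (bounded)$\,\otimes\,$(trace-class multiple of $\envstate$), so every trace written above is finite and every manipulation is justified. I do not anticipate any genuine obstacle beyond this bookkeeping — the content of the lemma is exactly the vanishing-expectation condition \eqref{RmuExpVal0}.
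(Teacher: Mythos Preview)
Your proof is correct and follows essentially the same route as the paper's: both obtain the decomposition from $\proj{0}+\proj{1}=\id$ and verify $A_t^{00}=0$ by applying $\proj{0}A_t\proj{0}$ to an arbitrary element, expanding the commutator, and invoking the vanishing-expectation condition \eqref{RmuExpVal0}. If anything, your write-up is slightly more careful than the paper's, which displays only one of the two commutator terms and omits the cyclicity remark for $\tr{\envstate R_\mu}$.
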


\begin{proof}
Take any $a\in B(\hilbertse)$ and let $\rho = \ptr{\hilberte}{a}$, so $\proj{0}(x) = \rho \otimes \envstate$; calculating explicitly, we have
\begin{equation}
	A_{t}^{00}(a) = \sum_{\mu} g_{\mu}(t) \proj{0} (S_\mu \rho \otimes R_\mu \envstate ) = \sum_{\mu} g_{\mu}(t) (\tr{R_\mu \envstate}) S_\mu \rho = 0
\end{equation}
by \eqref{RmuExpVal0}. Second claim follows directly by putting $A_t = (\proj{0}+\proj{1}) A_t (\proj{0}+\proj{1})$.
\end{proof}
Lemma \ref{lemma:AtProperties} allows to slightly rewrite von Neumann equation,
\begin{equation}
	\frac{dV^{\lambda}_{t}}{dt} = (Z+\lambda A_{t}^{11} + \lambda \Delta_t)V_{t}^{\lambda}, \quad V_{0}^{\lambda} = \id{},
\end{equation}
where $\Delta_t = A_{t}^{10} + A_{t}^{01}$. Treating $\Delta_t$ as a perturbation, one obtains, by usual techniques \cite{Kato1966,Davies1978}, a general expression for \emph{principal fundamental solution} $V_{t}^{\lambda}$ in terms of integral equation
\begin{equation}\label{eq:Vlambda}
	V^{\lambda}_{t} = U^{\lambda}_{t} + \lambda \int\limits_{0}^{t} U_{t,t'}^{\lambda} \Delta_{t'} V_{t'}^{\lambda} \, dt' ,
\end{equation}
where $U^{\lambda}_{t}$, $t\in\reals_+$ is a solution of unperturbed ODE
\begin{equation}\label{eq:vonNeumannUnperturbed}
	\frac{dU^{\lambda}_{t}}{dt} = (Z+\lambda A_{t}^{11})U^{\lambda}_{t}, \quad U^{\lambda}_{0} = \id{}
\end{equation}
and $U^{\lambda}_{t,s} = U_{t}^{\lambda}(U^{\lambda}_{s})^{-1}$, $s \in [0,t]$, is its corresponding \emph{state transition matrix} (propagator). Similarily, $U_{t,s}^{\lambda}$ may also be re-expressed in terms of integral equation by the same approach,
\begin{equation}\label{eq:UtLambda}
	U_{t,s}^{\lambda} = e^{(t-s)Z} + \lambda \int\limits_{s}^{t} e^{(t-t')Z} A_{t'}^{11} U_{t',s}^{\lambda} \, dt' ,
\end{equation}
for uniformly continuous semigroup $\{e^{tZ} : t\in\reals_+\}$ being a fundamental solution of equation $\dot{\varphi}_t = Z(\varphi_t)$. One can check, by easy computation, that the following simple lemma holds:

\begin{lemma}\label{lemma:AcommZ}
We have the following:
\begin{enumerate}
	\item $\comm{\proj{i}}{Z} = 0$, $\comm{\proj{i}}{U^{\lambda}_{t}} = 0$ for $i \in \{0,1\}$, i.e.~map $U_{t}^{\lambda}$ leaves subspaces $\bzero$, $\mathcal{B}_1$ invariant;
	\item Restriction of $U_{t}^{\lambda}$ to $\bzero$ is an isometry;
	\item $\proj{0} U_{t,s}^{\lambda} = \proj{0} e^{(t-s)Z}$.
\end{enumerate}
\end{lemma}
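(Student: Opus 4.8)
The plan is to reduce everything to a single fact: the time-dependent, bounded generator $G(t) := Z + \lambda A_{t}^{11}$ of \eqref{eq:vonNeumannUnperturbed} commutes with both spectral projections $\proj{0}$ and $\proj{1}$. Granting this, claim (1) follows because the solution of a linear Cauchy problem with bounded generator inherits commutation relations from that generator; claims (2) and (3) then fall out by restricting the dynamics to $\bzero$, on which $A_{t}^{11}$ acts as zero.

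For the generator I would handle $Z$ and $A_{t}^{11}$ separately. Since $H_{\mathrm{f.}} = H\otimes I_{\hilberte} + I_{\hilbert}\otimes H_{\mathrm e}$, the derivation $Z$ is a sum of a term acting only on the S-leg and one acting only on the E-leg. The S-term commutes with $\proj{0}$ because $\ptr{\hilberte}{\cdot\,}$ and $-i\comm{H}{\cdot\,}$ act on different tensor factors, while the E-term is annihilated by $\ptr{\hilberte}{\cdot\,}$ after invoking the cyclicity identity $\ptr{\hilberte}{(I_{\hilbert}\otimes B)\,a} = \ptr{\hilberte}{a\,(I_{\hilbert}\otimes B)}$; moreover $Z$ maps $\bzero$ into $\bzero$ precisely because the environment reference state is stationary, $\comm{H_{\mathrm e}}{\envstate} = 0$. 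Together these give $\comm{\proj{0}}{Z} = 0$, and since $\proj{0} + \proj{1} = \id$, also $\comm{\proj{1}}{Z} = 0$. For $A_{t}^{11} = \proj{1} A_t \proj{1}$ the commutation is purely formal, using only $\proj{0}\proj{1} = 0$ and $\proj{1}^{2} = \proj{1}$: indeed $\proj{0}A_{t}^{11} = 0 = A_{t}^{11}\proj{0}$ and $\proj{1}A_{t}^{11} = A_{t}^{11} = A_{t}^{11}\proj{1}$.

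To transfer this to the propagator, set $W_t := \comm{\proj{i}}{U^{\lambda}_{t}}$; differentiating \eqref{eq:vonNeumannUnperturbed} and using $\proj{i}G(t) = G(t)\proj{i}$ gives $\dot W_t = G(t)W_t$ with $W_0 = \comm{\proj{i}}{\id} = 0$, hence $W_t \equiv 0$ by uniqueness for this linear equation (equivalently, read it off the time-ordered Dyson series for $U^{\lambda}_{t}$). This is (1), and the same argument for $U_{t,s}^{\lambda}$, which solves the same equation with data at $t = s$, yields $\comm{\proj{i}}{U_{t,s}^{\lambda}} = 0$. For (2): since $A_{t}^{11}$ annihilates $\bzero$, the restriction $U^{\lambda}_{t}|_{\bzero}$ solves $\frac{d}{dt}U^{\lambda}_{t}|_{\bzero} = Z\,U^{\lambda}_{t}|_{\bzero}$ with value $\id$ at $t = 0$, so it equals $e^{tZ}|_{\bzero}$; under the isometric identification $\bzero \simeq \matrd_1$ (and using $\comm{H_{\mathrm e}}{\envstate} = 0$) this is $\rho \mapsto e^{-itH}\rho\, e^{itH}$, conjugation by a unitary, which preserves $\tnorm{\cdot}$. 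For (3): applying $\proj{0}$ to \eqref{eq:UtLambda} and using that $\proj{0}$ commutes with $e^{(t-t')Z}$ together with $\proj{0}A_{t'}^{11} = 0$, every term under the integral vanishes, leaving $\proj{0}U_{t,s}^{\lambda} = \proj{0}e^{(t-s)Z}$.

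I do not expect a conceptual obstacle — the authors themselves call it an easy computation — only two points to be careful with. First, the partial trace and the cyclicity identity used above are legitimate only on operators that are trace class on the E-leg, which is all the reduced dynamics actually requires. Second, the commutation $\comm{\proj{0}}{Z} = 0$ genuinely needs the reference state $\envstate$ to be invariant under the free environment evolution; this is a standing assumption of the weak-coupling model, and the lemma fails without it.
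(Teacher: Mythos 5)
Your proof is correct, and since the paper leaves this lemma entirely to the reader (``one can check, by easy computation''), your argument --- commuting the projections through the generator $Z+\lambda A_t^{11}$ and then propagating the commutation to $U_t^{\lambda}$ via uniqueness for the linear Cauchy problem --- is exactly the intended computation. Your closing caveat is also well taken: the identity $\comm{\proj{0}}{Z}=0$ (and hence the spectral decomposition $\proj{0}Z=Z\proj{0}=-i\sum_\omega \omega Q_\omega$ used later in the paper) does require the stationarity $\comm{H_{\mathrm e}}{\envstate}=0$, which the paper assumes only implicitly by calling $\envstate$ a constant reference state.
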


Since the projection $\proj{0}$ is defined simply as a partial trace with respect to environment degrees of freedom, projecting $v_{t}^{\lambda}$ produces a reduced density operator of subsystem S (tensorized with constant state of E). Following \cite{Davies1974}, let us define a time-dependent linear epimorphism $W_{t}^{\lambda} : \tclass{\hilbertse} \to \bzero$ by setting
\begin{equation}
	W_{t}^{\lambda} = \proj{0} V_{t}^{\lambda} \proj{0}, \quad t\in \reals_+
\end{equation}
which gives rise to the \emph{reduced denisty operator} $\rho^{\lambda}_{t}$ subject to equality
\begin{equation}
	\rho^{\lambda}_{t} \otimes \envstate = W_{t}^{\lambda} (v_0) = \Lambda_{t}^{\lambda} (\rho_{0}) \otimes \envstate,
\end{equation}
where $\Lambda_{t}^{\lambda} (\rho_{0}) = \ptr{\hilberte}{W^{\lambda}_{t}(v_{0})}$ and $v_{0} = \rho_{0} \otimes \envstate \in \bzero$ is an initial factorized state of compound system. The mapping $t \mapsto \Lambda_{t}^{\lambda}$ defines the notion of celebrated \emph{quantum dynamical map}, i.e.~a completely positive, trace norm contraction on $\matrd_1$. By strong differentiability assumption of $V_{t}^{\lambda}$ as a map on $\tclass{\hilbertse}$ and continuity of $\proj{0}$, function $t\mapsto\Lambda_{t}^{\lambda}$ is clearly also strongly continuous and differentiable on $\matrd_1$. The seemingly artificial dependence of the solution on parameter $\lambda$ will then be removed via the usual limiting procedure $\lambda \to 0^+$.

\begin{proposition}\label{prop:Wt}
Operator-valued function $t \to W_{t}^{\lambda}$, $t\in\reals_+$, satisfies the Nakajima-Zwanzig integral equation
\begin{equation}\label{eq:Wt}
	W^{\lambda}_{t} = \proj{0} e^{tZ}+ \lambda^2 \int\limits_{0}^{t} dt' \int\limits_{0}^{t'} e^{(t-t')Z} A_{t'}^{01} U^{\lambda}_{t',t''} A_{t''}^{10} W^{\lambda}_{t''} \, dt'' .
\end{equation}
\end{proposition}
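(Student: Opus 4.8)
The plan is to start from the integral representation \eqref{eq:Vlambda} for $V_t^\lambda$, conjugate it by $\proj{0}$ on both sides, and iterate once more to produce the second-order term carrying $\lambda^2$. First I would write $W_t^\lambda = \proj{0} V_t^\lambda \proj{0}$ and substitute \eqref{eq:Vlambda}, obtaining
\begin{equation}
	W_t^\lambda = \proj{0} U_t^\lambda \proj{0} + \lambda \int_0^t \proj{0} U_{t,t'}^\lambda \Delta_{t'} V_{t'}^\lambda \proj{0} \, dt'.
\end{equation}
By Lemma \ref{lemma:AcommZ}(3) the leading term is $\proj{0} U_{t,0}^\lambda \proj{0} = \proj{0} e^{tZ} \proj{0} = \proj{0} e^{tZ}$, matching the first term of \eqref{eq:Wt}. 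For the integral term, I would insert $\Delta_{t'} = A_{t'}^{10} + A_{t'}^{01}$ and use the fact that $\proj{0} U_{t,t'}^\lambda = \proj{0} e^{(t-t')Z}$ (Lemma \ref{lemma:AcommZ}(3)) together with $e^{(t-t')Z}$ commuting with $\proj{0}$ (Lemma \ref{lemma:AcommZ}(1)); since $\proj{0} e^{(t-t')Z} \proj{1} = e^{(t-t')Z}\proj{0}\proj{1} = 0$ while $A_{t'}^{10}$ maps into $\mathcal{B}_1$ and $A_{t'}^{01}$ maps into $\bzero$, only the $A_{t'}^{01}$ piece survives. This reduces the first-order term to $\lambda \int_0^t e^{(t-t')Z} A_{t'}^{01} V_{t'}^\lambda \proj{0}\, dt'$.

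Next I would handle the factor $V_{t'}^\lambda \proj{0}$ appearing under the integral. The key observation is that $A_{t'}^{01} = A_{t'}^{01}\proj{1}$ projects onto $\mathcal{B}_1$, so I need $\proj{1} V_{t'}^\lambda \proj{0}$. Substituting \eqref{eq:Vlambda} once more for $V_{t'}^\lambda$ and using $\proj{1} U_{t'}^\lambda \proj{0} = 0$ (by Lemma \ref{lemma:AcommZ}(1), $U_{t'}^\lambda$ leaves $\bzero$ invariant, so $\proj{1} U_{t'}^\lambda \proj{0} = 0$), the zeroth-order contribution vanishes and one is left with
\begin{equation}
	\proj{1} V_{t'}^\lambda \proj{0} = \lambda \int_0^{t'} \proj{1} U_{t',t''}^\lambda \Delta_{t''} V_{t''}^\lambda \proj{0}\, dt''.
\end{equation}
Again decomposing $\Delta_{t''}$ and noting that $U_{t',t''}^\lambda$ preserves $\mathcal{B}_1$ (so $\proj{1} U_{t',t''}^\lambda = \proj{1} U_{t',t''}^\lambda \proj{1} = U_{t',t''}^\lambda \proj{1}$), only $A_{t''}^{10}$ survives because $A_{t''}^{01}$ lands in $\bzero$. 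This yields $\proj{1} V_{t'}^\lambda \proj{0} = \lambda \int_0^{t'} U_{t',t''}^\lambda A_{t''}^{10} V_{t''}^\lambda \proj{0}\, dt''$, and since $A_{t''}^{10} = \proj{1} A_{t''}^{10}$ one further replaces $U_{t',t''}^\lambda$ by $\proj{1} U_{t',t''}^\lambda = \proj{1} U_{t',t''}^\lambda\proj{1} = U^\lambda_{t',t''}$ restricted appropriately; the operator appearing is exactly $U_{t',t''}^\lambda$ acting on $\mathcal{B}_1$ as in \eqref{eq:Wt}. Finally, $A_{t''}^{10} V_{t''}^\lambda \proj{0} = A_{t''}^{10}\proj{0} V_{t''}^\lambda \proj{0} = A_{t''}^{10} W_{t''}^\lambda$ because $A_{t''}^{10} = A_{t''}^{10}\proj{0}$.

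Assembling the two levels of iteration gives precisely
\begin{equation}
	W_t^\lambda = \proj{0} e^{tZ} + \lambda^2 \int_0^t dt' \int_0^{t'} e^{(t-t')Z} A_{t'}^{01} U_{t',t''}^\lambda A_{t''}^{10} W_{t''}^\lambda\, dt'',
\end{equation}
which is \eqref{eq:Wt}. I expect the main obstacle to be purely bookkeeping: keeping careful track of which $\proj{i}$-sandwiched pieces of $A_t$ and of the propagators survive at each insertion, and justifying that the order of integration and the composition of the (bounded, strongly continuous) operator families may be freely manipulated — the latter follows from uniform boundedness of $\{U_{t,s}^\lambda\}$ on the relevant compact time intervals and Fubini's theorem for the Bochner integral, together with the strong continuity and differentiability already assumed for $V_t^\lambda$. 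No genuinely hard analytic estimate is needed here; the content is the algebraic structure encoded in Lemmas \ref{lemma:AtProperties} and \ref{lemma:AcommZ}.
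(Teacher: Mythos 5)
Your proposal is correct and follows essentially the same route as the paper's (much more condensed) proof: iterate the integral equation \eqref{eq:Vlambda} twice, sandwich with $\proj{0}$, and use Lemma \ref{lemma:AcommZ} together with $\proj{i}\proj{j}=\delta_{ij}\proj{j}$ to kill all but the $A^{01}\cdots A^{10}$ term. Your detailed bookkeeping of which projected pieces survive at each insertion is exactly the "some manipulations" the paper leaves implicit.
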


\begin{proof}
We just sketch the calculations as they are straightforward. Recursively substitute formula \eqref{eq:Vlambda} in place of $U_{t'}^{\lambda}$ under the integral to obtain $\mathcal{O}(\lambda^2)$ expansion; then, multiply the expression from both sides by $\proj{0}$ and apply lemma \ref{lemma:AcommZ} and complementarity of projections, $\proj{i}\proj{j} = \delta_{ij}\proj{j}$, in order to arrive at \eqref{eq:Wt} after some manipulations.
\end{proof}

In following paragraphs we will investigate closer the reduced dynamics of subsystem S under assumption of asymptotically weak interaction. In particular, two opposing regimes of either very large or very small driving frequency $\Omega$ (compared to system's relevant Bohr frequencies) will be assumed in order to obtain mathematically sound expressions for completely positive dynamics of the reduced system. The first case of fast driving will be examined with application of traditional weak coupling limit procedure due to Davies \cite{Davies1974,Davies1976}. On the other hand, the \emph{adiabatic limit} approach by Davies and Spohn \cite{Davies1978} will then be shown to be well-suited for description of the remaining second case of asymptotically small driving frequency.

\subsection{Weak coupling limit: large driving frequency}
\label{sec:WCL}

In this section, we present a formal \emph{weak coupling limit} procedure following Davies \cite{Davies1974,Davies1976} by considering convergence of certain nets of Volterra integral operators in formal limit $\lambda \to 0^+$. For convenience, we first switch to interaction picture generated by free Hamiltonian $H_{\mathrm{f.}}$, i.e.~we impose an isomorphism
\begin{equation}\label{eq:WtildeDef}
	\tilde{W}_{t}^{\lambda} = e^{-tZ} W_{t}^{\lambda}
\end{equation}
and next introduce a \emph{rescaled time}
\begin{equation}
	\tau = \lambda^2 t,
\end{equation}
which, however, we will be still conventionally denoting by $t$. This, together with \eqref{eq:WtildeDef} after substituting to \eqref{eq:Wt} and changing the integration order leads to expression
\begin{equation}
	\tilde{W}_{t}^{\lambda} = \proj{0}+ \lambda^2 \int\limits_{0}^{\lambda^{-2}t} dt'' \int\limits_{t''}^{\lambda^{-2}t} e^{-t' Z} A_{t'}^{01} U^{\lambda}_{t',t''} A_{t''}^{10} e^{t'' Z }\tilde{W}^{\lambda}_{t''} \, dt' .
\end{equation}
This is further re-written by substitutions $\lambda^2 t'' = s$ and next $x=t' - \lambda^{-2}s$ as
\begin{equation}\label{eq:WtauTilde}
	\tilde{W}_{t}^{\lambda} = \proj{0} + \int\limits_{0}^{t} e^{-\lambda^{-2}s Z} K^{\lambda}_{t-s,s} e^{\lambda^{-2}s Z} \tilde{W}^{\lambda}_{s} \, ds ,
\end{equation}
where  $K^{\lambda}_{t,s}$, the \emph{memory kernel}, is
\begin{equation}\label{KlambdaTauSigma}
			K^{\lambda}_{t,s} = \int\limits_{0}^{\lambda^{-2}t} e^{-xZ} A_{x+\lambda^{-2}s}^{01} U^{\lambda}_{x+\lambda^{-2}s , \lambda^{-2}s} A_{\lambda^{-2}s}^{10} \, dx .
\end{equation}
 We also introduce few additional notions. Notice, that action of derivation $Z$, when restricted to subspace $\bzero$, may by represented as
\begin{equation}
	Z\proj{0}(a) = -i \comm{H}{\rho}\otimes\envstate, \quad \rho = \ptr{\hilberte}{a},
\end{equation}
for any $\rho\in B(\hilbertse)$, where $H$ was the system's part of free Hamiltonian. Applying spectral decomposition \eqref{eq:HspecDecomp} of $H$ and evaluating the commutator, one quickly checks that
\begin{equation}
	Z\proj{0}(a) = -i \sum_{kl} (\epsilon_k-\epsilon_l) P_k \rho P_l \otimes \envstate = -i\sum_{\omega} \omega (\mathcal{E}_\omega \otimes I ) (\rho \otimes \envstate),
\end{equation}
for $\mathcal{E}_\omega (\rho) = \sum_{(k,l)\,\sim\,\omega}P_k \rho P_l$, where the summation in taken only over such pairs of indices $(k,l)$, that $\epsilon_k-\epsilon_l = \omega$. This in turn leads to following spectral decompositions
\begin{equation}\label{eq:Zdecomposition}
	\proj{0}Z = Z\proj{0} = -i \sum_{\omega} \omega Q_\omega, \quad \proj{0}e^{tZ} = \sum_\omega e^{-i\omega t} Q_\omega ,
\end{equation}
where operators $Q_\omega = \mathcal{E}_\omega \otimes I$ project onto different subspaces, $Q_\omega Q_{\omega'} = \delta_{\omega\omega'}Q_{\omega'}$. These allow to define, for any linear map $X$ on $\bzero$,
\begin{equation}\label{eq:TimeAveraging}
	X^{\sharp} = \sum_{\omega} Q_\omega X Q_\omega = \lim_{t\to\infty}\frac{1}{2t}\int\limits_{-t}^{t} e^{-is\comm{H_{\mathrm{f.}}}{\cdot}} X e^{is\comm{H_{\mathrm{f.}}}{\cdot}} ds,
\end{equation}
where the last equality, i.e.~the \emph{time-averaging} \cite{Davies1974}, can be directly shown using \eqref{eq:Zdecomposition}.

For any periodic function $t\mapsto X_t \in B(\bzero)$ we define its \emph{Fourier series} via usual formulas
\begin{equation}
	X_t \sim \sum_{n\in\integers} \hat{X}_{n} e^{in\Omega t}, \quad \hat{X}_{n} = \frac{1}{T}\int\limits_0^T X_t e^{-in\Omega t} dt ,
\end{equation}
regardless of problem of its formal convergence. Accordingly, for any periodic function $f : [0,T) \to \complexes$ we will denote its Fourier series by $\sum_{n\in\integers} \hat{f}(n) e^{in\Omega t}$ for coefficients $\hat{f}(n) = \frac{1}{T}\int_0^T f(t) e^{-in\Omega t} \, dt$. Then, one can in particular express function $t \mapsto A_t$ via its Fourier expansion
\begin{equation}\label{eq:FourierAt}
	A_t \sim \sum_{n\in\integers} \hat{A}_n e^{in\Omega t}, \quad \hat{A}_n = -i \sum_{\mu} \hat{g}_{\mu}(n) \comm{S_\mu \otimes R_\mu}{\,\cdot\,} .
\end{equation}
For technical reasons, we introduce two assumptions: first, we assume that the set of Bohr frequencies is $\Omega$\emph{-congruence free}, namely that no two distinct frequencies $\omega$, $\omega'$ exist such that $\omega - \omega' = k\Omega$ for any $k \in \integers \setminus \{0\}$; this assumption can be met if, for example, the driving frequency $\Omega$ is very large, i.e.~comparable with (or greater than) all Bohr frequencies of the system. Second, we assume that each steering function $g_\mu$ is \emph{bounded}, \emph{piecewise continuous} and of \emph{piecewise-uniformly convergent Fourier series}; the last assumption reads explicitly, that for each $g_\mu$ there exists an increasing sequence $(\tau_{j}^{\mu})_{j\in\naturals} \subset [0,T)$, such that each restriction $\left. g_\mu\right|_{(\tau_j, \tau_{j+1})}$ admits \emph{uniformly convergent Fourier series},
\begin{equation}\label{eq:UniformlyFourier}
	\lim_{N\to\infty} \sup_{t\in (\tau_j, \tau_{j+1})}{\left| g_{\mu}(t) - \sum_{n=-N}^{N} \hat{g}_{\mu}(n) e^{in\Omega t}\right|} = 0.
\end{equation}
This condition is guaranteed if, for example, functions $g_\mu$ are chosen to be piecewise differentiable. This assumption has two important implications: first, $g_\mu (t)$ is representable as its Fourier series at almost every (a.e.) point in $[0,T)$, apart from discrete set $\{\tau_{j}^{\mu}\}$ where the series converges rather to the mean of left-sided and right-sided limits of $g_\mu (t)$ (if, for instance, function is discontinuous at this point). Second, boundedness implies $g_\mu \in L^2 ([0,T))$ and so
\begin{equation}
	\sum_{n\in\integers}|\hat{g}_\mu (n)|^2 < \infty .
\end{equation}
This assumption allows to propose a following straightforward lemma on convergence of Fourier series of map $A_t$:

\begin{lemma}
There exists a strictly increasing sequence $(\theta_j)_{j\in\naturals}\subset [0,T)$ such that Fourier series $\sum_{n\in\integers}\hat{A}_{n} e^{in\Omega t}$ of map $A_t$ converges uniformly (w.r.t. operator norm) on each open interval $(\theta_j, \theta_{j+1})$ and pointwise a.e. in $[0,T)$.
\end{lemma}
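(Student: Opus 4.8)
The plan is to reduce the statement to the corresponding scalar fact about the steering functions $g_\mu$, which is precisely hypothesis \eqref{eq:UniformlyFourier}, exploiting that the family $\{g_\mu\}$ is finite. Write $B_\mu = \comm{S_\mu \otimes R_\mu}{\,\cdot\,}$, viewed as a linear map on $B(\hilbertse)$; since $S_\mu \in \matrd$ and $R_\mu \in B(\hilberte)$ are bounded, each $B_\mu$ is bounded with $\| B_\mu \| \le 2 \| S_\mu \| \, \| R_\mu \| < \infty$. By \eqref{eq:DerivationsDefinition}, \eqref{eq:HintPeriodic} and \eqref{eq:FourierAt} we have $A_t = -i \sum_\mu g_\mu(t) B_\mu$ and $\hat{A}_n = -i \sum_\mu \hat{g}_\mu(n) B_\mu$, so that, the $\mu$-sum being finite, the $N$-th symmetric partial sum of the Fourier series of $A_t$ is
\begin{equation}
  \sum_{n=-N}^{N} \hat{A}_n e^{in\Omega t} = -i \sum_\mu \Bigl( \sum_{n=-N}^{N} \hat{g}_\mu(n) e^{in\Omega t}\Bigr) B_\mu .
\end{equation}

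First I would build the partition. For each $\mu$ let $(\tau^\mu_j)_j \subset [0,T)$ be the sequence furnished by \eqref{eq:UniformlyFourier}, and let $(\theta_j)_{j\in\naturals}$ be the strictly increasing enumeration of the countable set $\{0\} \cup \bigcup_\mu \{\tau^\mu_j\}_j$; in the case of practical interest, namely steering functions with finitely many discontinuities per period (e.g.\ piecewise differentiable ones), this set is finite and the enumeration is unproblematic. By construction no $\tau^\mu_k$ lies strictly between consecutive points $\theta_j, \theta_{j+1}$, so for every $\mu$ the open interval $(\theta_j,\theta_{j+1})$ is contained in a single interval $(\tau^\mu_k,\tau^\mu_{k+1})$ of uniform convergence of $g_\mu$; hence \eqref{eq:UniformlyFourier} still holds with the supremum taken over the smaller set $(\theta_j,\theta_{j+1})$, simply because shrinking the set over which a supremum is taken cannot increase it.

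Next I would estimate the tail in operator norm. On a fixed interval $(\theta_j,\theta_{j+1})$, the displayed identity and the triangle inequality give
\begin{equation}
  \sup_{t \in (\theta_j,\theta_{j+1})} \Bigl\| A_t - \sum_{n=-N}^{N} \hat{A}_n e^{in\Omega t} \Bigr\|
  \le \sum_\mu \| B_\mu \| \, \sup_{t \in (\theta_j,\theta_{j+1})} \Bigl| g_\mu(t) - \sum_{n=-N}^{N} \hat{g}_\mu(n) e^{in\Omega t} \Bigr| ,
\end{equation}
and each summand on the right tends to $0$ as $N\to\infty$ by \eqref{eq:UniformlyFourier}; since the $\mu$-sum is finite, the left-hand side tends to $0$, i.e.\ the Fourier series of $A_t$ converges uniformly on $(\theta_j,\theta_{j+1})$. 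For the pointwise a.e.\ claim, observe that any $t \in [0,T) \setminus \{\theta_j\}$ lies in the interior of a piece of each $g_\mu$, so $\sum_{n} \hat{g}_\mu(n) e^{in\Omega t}$ converges (to $g_\mu(t)$) for every $\mu$, whence the finite combination $\sum_n \hat{A}_n e^{in\Omega t}$ converges as well; as $\{\theta_j\}$ is countable, this holds for almost every $t \in [0,T)$.

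There is no deep obstacle: the lemma is just a packaging of \eqref{eq:UniformlyFourier} together with finiteness of $\{g_\mu\}$ and boundedness of the $B_\mu$. The only point requiring genuine care is the bookkeeping of the breakpoints — merging the several sequences $(\tau^\mu_j)_j$ into one strictly increasing sequence $(\theta_j)_j$ and checking that each resulting open interval sits inside a single interval of uniform convergence for all $\mu$ simultaneously — plus the harmless remark that restricting the supremum in \eqref{eq:UniformlyFourier} to a subinterval only makes it smaller.
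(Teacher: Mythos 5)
Your proposal is correct and follows essentially the same route as the paper: merge the breakpoint sequences $(\tau_j^\mu)_j$ of all steering functions into one sequence $(\theta_j)_j$, observe that each resulting open interval is a subinterval of an interval of uniform convergence for every $g_\mu$ simultaneously, and bound the operator-norm tail of the Fourier series of $A_t$ by the finite sum of scalar tails times $\|B_\mu\|$. Your explicit bound $\|B_\mu\|\le 2\|S_\mu\|\,\|R_\mu\|$ and the remark about the enumeration of the merged breakpoint set are, if anything, slightly more careful than the paper's version of the same argument.
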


\begin{proof}
For all the following, denote partial Fourier sums of $A_t$ and $g_\mu (t)$ by $S_N (A_t)$ and $S_N (g_\mu (t))$, respectively.

The existence of sequence $(\theta_j)$ is simply proved by partially ordering the set $\mathcal{D} = \bigcup_{\mu}\{ \tau_{j}^{\mu} \}_{j\in\naturals}$ of all divergence points, i.e.~for each $\theta_j$ there will be at least one $\mu$ such that $g_\mu (\theta_j)$ is not a pointwise limit of a corresponding Fourier series. Let us denote
\begin{equation}
	\mathcal{C}_{\mu} = [0,T) \setminus \{ \tau_{j}^{\mu}\}_{j\in\naturals}.
\end{equation}
Since $\theta_j$ are all divergence points, intervals $\Theta_j = (\theta_j, \theta_{j+1})$ cover a whole $[0,T)$ as densely as possible. Take any $\theta \in \Theta_j$. Then, there exists some $\mu$, $\mu'$ and $k$, $k'$ such that $\theta_j = \tau_{k}^{\mu}$, $\theta_{j+1} = \tau_{k'}^{\mu'}$ and $\tau_{k}^{\mu} < \theta < \tau_{k'}^{\mu'}$. Clearly,
\begin{equation}
	\tau_{k}^{\mu} = \max_{x < \theta}\{x \in \mathcal{D}\}, \quad \tau_{k'}^{\mu'} = \min_{x > \theta}\{x \in \mathcal{D}\},
\end{equation}
so $\theta \in \mathcal{C}_\mu$ for each $\mu$ and $	\Theta_j \subset \bigcap_{\mu} \mathcal{C}_\mu$ for all $j$. One estimates
\begin{equation}\label{eq:AtEstim}
	\sup_{t\in \Theta_j} \left\| A_t - S_N (A_t) \right\| \leqslant \sup_{t\in \Theta_j} \sum_\mu \left| S_N (g_\mu(t)) - g_\mu (t) \right| \| S_\mu \otimes R_\mu \|
\end{equation}
for $\| \cdot \|$ denoting operator norm over $B(\hilbertse)$. If Fourier series of any function converges uniformly over some interval $\mathcal{I}$, then naturally it must also converge uniformly over any sub-interval $\mathcal{J} \subseteq \mathcal{I}$. Since each $\Theta_j$ is a sub-interval of all sets $\mathcal{D}_\mu$, Fourier series of every function $g_\mu$ must converge uniformly over every $\Theta_j$, and so the upper bound in \eqref{eq:AtEstim} converges to 0 when $N\to\infty$. Hence, Fourier series of $A_t$ converges uniformly over every interval $(\theta_j,\theta_{j+1})$. The pointwise convergence is then immediate.
\end{proof}

Let $\mathscr{V} = \mathcal{C} ([0,t_\ast], \bzero)$ be a Banach space of continuous, $\bzero$-valued functions on compact interval $[0,t_\ast]$, $t_\ast \in \reals_+$, complete with respect to supremum norm
\begin{equation}
	\| \varphi \|_{\mathscr{V}} = \sup_{t \in [0,t_\ast]} \| \varphi (t) \|_{\bzero}.
\end{equation}
We next introduce a following bounded operators acting on $\bzero$,
\begin{subequations}
	\begin{equation}\label{eq:KlambdaSigma}
		\tilde{K}_{s}^{\lambda} = \int\limits_{0}^{\infty} \proj{0} e^{-xZ} A_{x+\lambda^{-2}s} e^{xZ} A_{\lambda^{-2}s} \proj{0} \, dx ,
	\end{equation}
	\begin{equation}\label{eq:Kn}
		K_n = \int\limits_{0}^{\infty} \proj{0} e^{-xZ}\hat{A}_{n} e^{xZ} \hat{A}_{-n} \proj{0} \, dx, \quad n\in\integers .
	\end{equation}
\end{subequations}
After easy manipulations, one can express operators $K_n$ in more direct form as
\begin{equation}\label{eq:KnExplicit}
	K_n = \sum_{\mu\nu\omega} \hat{g}_{\mu}(n) \hat{g}_{\nu}(-n) \proj{0} \comm{S_{\mu\omega}\otimes\hat{R}_{\mu}(\omega)}{\comm{S_\nu \otimes R_\nu}{\,\cdot\,}}\proj{0},
\end{equation}
where we introduced the one-sided Fourier transforms of reservoir operators in interaction picture,
\begin{equation}\label{eq:RmuFourier}
	\hat{R}_\mu (\omega) = \int\limits_{0}^{\infty} e^{-i\omega t} \left(e^{itH_{\mathrm{e}}} R_\mu e^{-itH_{\mathrm{e}}}\right) \, dt,
\end{equation}
as well as operators $S_{\mu\omega}$ defined by
\begin{equation}\label{eq:Smuomega}
	S_{\mu\omega} = \sum_{(k,l)\,\sim\,\omega} P_k S_\mu P_l,
\end{equation}
where orthogonal projections $P_k$ were used to define a spectral decomposition of $H$ \eqref{eq:HspecDecomp}. Following the original construction \cite{Davies1974}, we also define three bounded Volterra-type integral operators on $\mathscr{V}$,
\begin{subequations}
	\begin{equation}\label{eq:VolterraHlambda}
		\mathcal{H}_\lambda (\varphi)(t) = \int\limits_{0}^{t} e^{-\lambda^{-2}s Z} K^{\lambda}_{t-s,s} e^{\lambda^{-2}s Z} (\varphi(s)) \, ds ,
	\end{equation}
	\begin{equation}\label{eq:HtildeOfPhi}
		\tilde{\mathcal{H}}_{\lambda}(\varphi)(t) = \int\limits_{0}^{t} e^{-\lambda^{-2}s Z} \tilde{K}^{\lambda}_{s} e^{\lambda^{-2}s Z} (\varphi(s)) \, ds ,
	\end{equation}
	\begin{equation}\label{eq:KofVarphi}
		\mathcal{K}(\varphi)(t) = \int\limits_{0}^{t} \sum_{n\in\integers} K_{n}^{\sharp} (\varphi(s)) \, ds,
	\end{equation}
\end{subequations}
where convergence of infinite operator series in \eqref{eq:KofVarphi} comes via proposition \ref{prop:KseriesConvergence} (see below). The key point in the construction is to show that the two nets $(\mathcal{H}_\lambda)_\lambda$, $(\tilde{\mathcal{H}}_\lambda)_\lambda$, $\lambda \in\reals_+$ become arbitrarily close to operator $\mathcal{K}$ (in the strong operator topology sense) when $\lambda$ is close to 0 and hence, they approximate an exact solution of Nakajima-Zwanzig equation arbitrarily well. In result, the exact dynamics turns out to be actually described by \emph{quantum dynamical semigroup}, i.e.~the regime of weak coupling limit is stable against fast fluctuations of interaction term.

\begin{proposition}\label{prop:KseriesConvergence}
If all functions $g_\mu \in L^2 ([0,T))$, then $\sum_{n\in\integers}K_n$ converges uniformly in $B(\bzero)$.
\end{proposition}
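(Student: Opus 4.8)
The plan is to prove absolute summability $\sum_{n\in\integers}\|K_n\| < \infty$, which immediately yields uniform (Cauchy) convergence of the series in the Banach space $B(\bzero)$. The starting point is the explicit form \eqref{eq:KnExplicit}, or more conveniently the compact integral representation \eqref{eq:Kn}: each $K_n$ is built from the Fourier coefficients $\hat{A}_n$ and $\hat{A}_{-n}$ of the derivation $A_t$, sandwiched between the (norm-bounded, isometric on $\bzero$) propagators $e^{\pm xZ}\proj{0}$ and integrated over $x\in[0,\infty)$. The first step is therefore to extract a bound of the shape $\|K_n\| \leqslant C\,\|\hat{A}_n\|\,\|\hat{A}_{-n}\|$ times the relevant integral; since $\hat{A}_n = -i\sum_\mu \hat{g}_\mu(n)\,\comm{S_\mu\otimes R_\mu}{\,\cdot\,}$ from \eqref{eq:FourierAt}, this gives $\|\hat{A}_n\| \leqslant 2\sum_\mu |\hat{g}_\mu(n)|\,\|S_\mu\otimes R_\mu\|$, so that $\|K_n\|$ is controlled by a finite bilinear combination of $|\hat{g}_\mu(n)|\,|\hat{g}_\nu(-n)|$ over the finite index set $\{\mu\}$.

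The subtle point is the $x$-integral: unlike the norm factors, $\int_0^\infty \|e^{-xZ}\hat{A}_n e^{xZ}\hat{A}_{-n}\proj{0}\|\,dx$ is not obviously finite, because $\|e^{-xZ}\hat{A}_n e^{xZ}\|$ does not decay ($Z$ generates a norm-bounded one-parameter group). The resolution is to use the explicit form \eqref{eq:KnExplicit}: after passing to the interaction picture, the $x$-dependence sits entirely inside the reservoir operators $e^{ixH_{\mathrm{e}}}R_\mu e^{-ixH_{\mathrm{e}}}$, and the $x$-integral produces precisely the one-sided Fourier transforms $\hat{R}_\mu(\omega)$ of \eqref{eq:RmuFourier}. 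Thus the second step is to bound $\|K_n\| \leqslant c\sum_{\mu\nu\omega}|\hat{g}_\mu(n)|\,|\hat{g}_\nu(-n)|\,\|S_{\mu\omega}\|\,\|\hat{R}_\mu(\omega)\|\,\|S_\nu\|\,\|R_\nu\|$; here the sums over $\mu,\nu$ and over the finitely many Bohr frequencies $\omega$ are all finite, and the constants $\|\hat{R}_\mu(\omega)\|$ are finite (this is the standard implicit regularity assumption on the reservoir correlation functions, guaranteeing existence of \eqref{eq:RmuFourier}). Collecting everything, $\|K_n\| \leqslant M\bigl(\sum_\mu |\hat{g}_\mu(n)|\bigr)\bigl(\sum_\nu |\hat{g}_\nu(-n)|\bigr)$ for a constant $M$ depending only on the fixed system/reservoir operators.

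The third and final step is the summation over $n$. By Cauchy–Schwarz on the finite $\mu$-sum and then on $\integers$,
\begin{equation}
	\sum_{n\in\integers}\|K_n\| \leqslant M\sum_{n\in\integers}\Bigl(\sum_\mu |\hat{g}_\mu(n)|\Bigr)\Bigl(\sum_\nu |\hat{g}_\nu(-n)|\Bigr) \leqslant M\,|\{\mu\}|\sum_{\mu,\nu}\sum_{n\in\integers}|\hat{g}_\mu(n)|\,|\hat{g}_\nu(-n)|,
\end{equation}
and each inner sum is bounded by $\bigl(\sum_n |\hat{g}_\mu(n)|^2\bigr)^{1/2}\bigl(\sum_n |\hat{g}_\nu(-n)|^2\bigr)^{1/2}$, which is finite precisely because the hypothesis $g_\mu\in L^2([0,T))$ gives $\sum_n |\hat{g}_\mu(n)|^2 < \infty$ by Parseval. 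Since $\{\mu\}$ is a finite family, the whole double sum is finite, so $\sum_n\|K_n\|<\infty$ and the series converges uniformly in $B(\bzero)$.

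I expect the only genuine obstacle to be the careful handling of the $x$-integral described in the second step — making sure that the norm estimate is performed \emph{after} the $x$-integration has been absorbed into the operators $\hat{R}_\mu(\omega)$, rather than before (where it would diverge), and confirming that $\hat{R}_\mu(\omega)$ is a well-defined bounded operator under the standing assumptions on the reservoir. Everything else — the bilinearity bookkeeping over the finite index set $\{\mu\}$, the triangle inequality for $\hat{A}_n$, and the Cauchy–Schwarz/Parseval step converting the $L^2$ hypothesis into summability — is routine.
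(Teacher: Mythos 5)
Your proposal is correct and follows essentially the same route as the paper: both reduce the question, via the explicit form \eqref{eq:KnExplicit} in which the $x$-integration has already been absorbed into the bounded operators $\hat{R}_\mu(\omega)$, to the absolute summability of the scalar series $\sum_{n}\hat{g}_\mu(n)\hat{g}_\nu(-n)$, which is then obtained from H\"{o}lder/Cauchy--Schwarz together with Parseval and the hypothesis $g_\mu\in L^2([0,T))$. The only cosmetic difference is that you establish absolute summability of $\|K_n\|$ directly while the paper verifies the Cauchy criterion on partial sums, and your explicit flagging of the implicit boundedness assumption on $\hat{R}_\mu(\omega)$ matches what the paper tacitly assumes in its estimate \eqref{eq:SnSm}.
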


\begin{proof}
Let $S_N = \sum_{n=-N}^{N} K_n$ denote a partial sum of $\sum_{n\in\integers}K_n$. Take $N,M > 0$ and assume $M>N$ with no loss of generality. Using \eqref{eq:KnExplicit} one can estimate
\begin{equation}\label{eq:SnSm}
	\| S_N - S_M \| \leqslant 4 \sum_{\mu\nu\omega} |\sigma_N - \sigma_M| \| S_{\mu\omega}\otimes \hat{R}_{\mu}(\omega) \| \| S_\nu \otimes R_\nu \|,
\end{equation}
where $\sigma_N = \sum_{n=-N}^{N} \hat{g}_{\mu}(n)\hat{g}_{\nu}(-n)$. This series however converges, which we show now: let us denote by $\vec{\hat{g}}_{\mu} = (\hat{g}_{\mu}(n))_{n\in\integers}$ a sequence of Fourier coefficients of function $g_\mu$. By properly restricting the enumeration index $n$, we split $\vec{\hat{g}}_\mu$ into three subsequences, $\vec{\hat{g}}_{\mu} = \vec{\hat{g}}_{\mu}^{-}\oplus\hat{g}_{\mu}(0)\oplus\vec{\hat{g}}_{\mu}^{+}$, where $\vec{\hat{g}}_{\mu}^{-} = (\hat{g}_{\mu}(n))_{n<0}$ and $\vec{\hat{g}}_{\mu}^{+} = (\hat{g}_{\mu}(n))_{n>0}$ are simply the ``left-hand side'' and ``right-hand side'' parts of $\vec{\hat{g}}_{\mu}$. Since $g_\mu \in L^2 ([0,T))$, we obviously have $\vec{\hat{g}}_{\mu} \in l^2 (\integers)$, and since the 2-norm is easily seen to satisfy
\begin{equation}
	\| \vec{\hat{g}}_{\mu} \|_{l^2 (\integers)}^{2} = |\hat{g}_{\mu}(0)|^2 + \| \vec{\hat{g}}_{\mu}^{+} \|_{l^2}^{2} + \| \vec{\hat{g}}_{\mu}^{+} \|_{l^2}^{2},
\end{equation}
we have $\vec{\hat{g}}_{\mu} \in l^2 (\integers)$ iff $\vec{\hat{g}}_{\mu}^{\pm}\in l^2$. The H\"{o}lder's inequality then allows to estimate, after some easy algebra,
\begin{equation}\label{eq:Holder}
	\sum_{n\in\integers} |\hat{g}_{\mu}(n)\hat{g}_{\nu}(-n)| \leqslant |\hat{g}_{\mu}(0)\hat{g}_{\nu}(0)| + \| \vec{\hat{g}}_{\mu}^{+} \|_{l^2} \| \vec{\hat{g}}_{\nu}^{-} \|_{l^2} + \| \vec{\hat{g}}_{\mu}^{-} \|_{l^2} \| \vec{\hat{g}}_{\nu}^{+} \|_{l^2}
\end{equation}
which is finite. Therefore, series $\sum_{n\in\integers} \hat{g}_{\mu}(n)\hat{g}_{\nu}(-n)$ converges (absolutely), so sequence $(\sigma_N)$ of its partial sums is Cauchy, i.e.~$|\sigma_N-\sigma_M|\to 0$ as $N,M\to\infty$. Hence, the upper bound in \eqref{eq:SnSm} also converges to 0 and $(S_N)$ is a Cauchy sequence in $B(\bzero)$.
\end{proof}

\begin{proposition}
Operator $\tilde{K}_{s}^{\lambda}$ admits an explicit Fourier series expansion
\begin{equation}\label{eq:KlambdaSigmaFourier}
	-\sum_{n,m\in\integers} e^{i(n+m)\Omega \lambda^{-2}s} \hat{g}_{\mu}(n)\hat{g}_{\nu}(m) \proj{0} \comm{S_{\mu\omega}\otimes\hat{R}_{\nu}(\omega+n\Omega)}{\comm{S_\nu \otimes R_\nu}{\,\cdot\,}}\proj{0}
\end{equation}
converging uniformly (w.r.t. operator norm in $B(\bzero)$ and pointwise a.e. in $[0,T)$.
\end{proposition}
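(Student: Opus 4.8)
The plan is to obtain \eqref{eq:KlambdaSigmaFourier} by inserting the Fourier expansion \eqref{eq:FourierAt} of $A_t$ into the definition \eqref{eq:KlambdaSigma} of $\tilde{K}_{s}^{\lambda}$ and evaluating the resulting $x$-integral termwise. Writing $A_{x+\lambda^{-2}s}=\sum_{n}\hat{A}_n e^{in\Omega(x+\lambda^{-2}s)}$ and $A_{\lambda^{-2}s}=\sum_{m}\hat{A}_m e^{im\Omega\lambda^{-2}s}$ and pulling the $s$-dependent phases outside the integral, one is led to
\begin{equation*}
	\tilde{K}_{s}^{\lambda}=\sum_{n,m\in\integers}e^{i(n+m)\Omega\lambda^{-2}s}\int\limits_{0}^{\infty}e^{in\Omega x}\,\proj{0}\bigl(e^{-xZ}\hat{A}_n e^{xZ}\bigr)\hat{A}_m\proj{0}\,dx,
\end{equation*}
provided the double summation may be exchanged with the integral --- the delicate point, discussed below. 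Each inner integral is then computed in closed form: since $e^{xZ}$ is the algebra automorphism $a\mapsto e^{-ixH_{\mathrm{f.}}}ae^{ixH_{\mathrm{f.}}}$, conjugating the derivation $\hat{A}_n$ gives $e^{-xZ}\hat{A}_n e^{xZ}=-i\sum_{\mu}\hat{g}_{\mu}(n)\comm{e^{ixH_{\mathrm{f.}}}(S_\mu\otimes R_\mu)e^{-ixH_{\mathrm{f.}}}}{\,\cdot\,}$, and with $H_{\mathrm{f.}}=H\otimes I+I\otimes H_{\mathrm{e}}$ and the spectral decomposition \eqref{eq:HspecDecomp} one has $e^{ixH}S_\mu e^{-ixH}=\sum_{\omega}e^{ix\omega}S_{\mu\omega}$ with $S_{\mu\omega}$ as in \eqref{eq:Smuomega}. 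Hence the operators $S_{\mu\omega}$ come out of the $x$-integral, the phase $e^{in\Omega x}$ merges with $e^{ix\omega}$, and $\int_{0}^{\infty}e^{i(n\Omega+\omega)x}e^{ixH_{\mathrm{e}}}R_\mu e^{-ixH_{\mathrm{e}}}\,dx$ is exactly the one-sided transform \eqref{eq:RmuFourier} of the reservoir operator at the frequency shifted by $n\Omega$; composing with $\hat{A}_m=-i\sum_{\nu}\hat{g}_{\nu}(m)\comm{S_\nu\otimes R_\nu}{\,\cdot\,}$ and collecting the scalar constants reproduces \eqref{eq:KlambdaSigmaFourier}. This is just the $s$-dependent, frequency-shifted refinement of the computation already recorded for $K_n$ in \eqref{eq:KnExplicit}.

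For the convergence of the resulting series I would argue as in Proposition \ref{prop:KseriesConvergence}: the sums over $\mu,\nu,\omega$ are finite, each factor $e^{i(n+m)\Omega\lambda^{-2}s}$ has modulus one, and the operator norms of $S_{\mu\omega}\otimes\hat{R}_{\mu}(\omega+n\Omega)$ and of $S_\nu\otimes R_\nu$ are bounded uniformly in $n$ (as in Davies' original construction the one-sided transforms are uniformly bounded in the frequency shift). Thus the operator norm of the $(n,m)$-term is dominated by a constant times $|\hat{g}_{\mu}(n)|\,|\hat{g}_{\nu}(m)|$, and grouping the terms by $k=n+m$ and using the H\"{o}lder estimate \eqref{eq:Holder} --- legitimate since $g_\mu\in L^2([0,T))$, hence $\vec{\hat{g}}_\mu\in l^2(\integers)$ --- shows every Fourier coefficient of $s\mapsto\tilde{K}_{s}^{\lambda}$ is well defined; the uniform-on-subintervals and pointwise-a.e.\ convergence of the full series then follows by transferring, through the termwise identities above, the corresponding property of the Fourier series of the steering functions $g_\mu$, i.e.\ precisely the property established for $A_t$ in the preceding lemma.

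The main obstacle is the interchange of the double sum with the improper integral over $[0,\infty)$. In contrast with the single-period integrals elsewhere in the paper, the integrand is now evaluated at arguments $x+\lambda^{-2}s$ ranging over an unbounded set, and its crude operator-norm bound does not decay in $x$: the decay that renders \eqref{eq:KlambdaSigma} and each term meaningful is furnished only by the reservoir side through the one-sided transforms $\hat{R}_\mu$. One therefore passes the summation over $n$ through the integral on each open subinterval on which the Fourier series of $A_t$ converges uniformly --- on such an interval \eqref{eq:UniformlyFourier} supplies a genuine uniform bound --- and then treats the countable, measure-zero set of discontinuity points of the $g_\mu$ separately (where the series converges to the symmetrized value), controlling the tails via \eqref{eq:Holder} before letting the truncation order tend to infinity; the sum over $m$ is harmless since $A_{\lambda^{-2}s}$ is evaluated at a single point for a.e.\ $s$. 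Combining these ingredients yields \eqref{eq:KlambdaSigmaFourier} with the asserted mode of convergence.
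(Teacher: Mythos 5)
Your proof is correct and follows essentially the same route as the paper's: Fourier-expand $A_t$ in \eqref{eq:KlambdaSigma}, use the spectral expansion \eqref{eq:SmuIPexpansion} to pull out the $S_{\mu\omega}$ and identify the one-sided transforms $\hat{R}_\nu(\omega+n\Omega)$ from \eqref{eq:RmuFourier}, then deduce convergence from the assumed convergence of the Fourier series of the $g_\mu$ together with the H\"older estimate \eqref{eq:Holder}. Your additional care with the interchange of the double sum and the improper $x$-integral goes beyond the paper's one-line justification but does not change the argument.
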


\begin{proof}
Validity of formula \eqref{eq:KlambdaSigmaFourier} can be checked by simple algebra. First, one checks that operators $S_\mu$ satisfy
\begin{equation}\label{eq:SmuIPexpansion}
	e^{it\comm{H}{\cdot\,}}(S_\mu) = e^{iHt}S_\mu e^{-iHt} = \sum_{\omega} S_{\mu\omega} e^{i\omega t},
\end{equation}
which follows from \eqref{eq:Smuomega} and Baker--Campbell--Haussdorff formula. After Fourier expanding operators $A_t$ in expression \eqref{eq:KlambdaSigma} for $\tilde{K}_{s}^{\lambda}$, inserting \eqref{eq:SmuIPexpansion} and putting $\hat{R}_{\mu}(\omega + n\Omega)$ according to \eqref{eq:RmuFourier}, one arrives at \eqref{eq:KlambdaSigmaFourier}. Convergence of the resulting series follows directly from assumed convergence of appropriate Fourier expansions for functions $g_\mu$.
\end{proof}

At this point, it is necessary to introduce the time dependent \emph{autocorrelation functions of the environment} by setting
\begin{equation}
	f_{\mu\nu}(t) = \tr{\left[\tilde{R}_{\mu}(t)^{\hadj} R_{\nu} \envstate \right]}, \quad \tilde{R}_{\mu}(t) = e^{itH_{\mathrm{e}}} R_{\mu} e^{-itH_{\mathrm{e}}}.
\end{equation}
The following lemma will be of some use:

\begin{lemma}\label{lemma:XiInL1}
Let $x\mapsto \Phi_{y}(x) \in B(\bzero)$ be given by
\begin{equation}
	\Phi_{y}(x)(a) = \proj{0} e^{-xZ} A_{x+y} e^{xZ} A_{y} \proj{0} (a)
\end{equation}
for $a\in\bzero$ and some $y\in\reals_+$, and denote $\xi_y (x) = \| \Phi_{y}(x) \|$ for $\| \cdot \|$ being the operator norm in $B(\bzero)$. Then, if $f_{\mu\nu}\in L^1 ((0,\infty))$ for all $(\mu, \nu)$, then also $\xi_y \in L^1 ((0,\infty))$ for all $y\in\reals_+$.
\end{lemma}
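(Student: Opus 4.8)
The plan is to expand $\Phi_y(x)$ explicitly, strip off the environment degrees of freedom as scalar prefactors that turn out to be the autocorrelation functions $f_{\mu\nu}$ and their conjugates, and then dominate $\xi_y$ pointwise by a finite sum of the $|f_{\mu\nu}|$ with a constant independent of $y$.

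First I would substitute $\hamint(t)=\sum_\mu g_\mu(t)S_\mu\otimes R_\mu$ into $A_t=-i\comm{\hamint(t)}{\cdot}$ and use that $e^{\pm xZ}$ acts as conjugation by $e^{\mp ixH}\otimes e^{\mp ixH_{\mathrm e}}$, which factorizes over $\hilbert\otimes\hilberte$. Writing a generic element of $\bzero$ as $\rho\otimes\envstate$ and using $\proj{0}(b\otimes c)=(\tr{c})\,b\otimes\envstate$, a direct computation gives
\[
\Phi_y(x)(\rho\otimes\envstate)=-\sum_{\mu,\nu}g_\mu(x+y)\,g_\nu(y)\,\bigl(c^{\mu\nu}_1(x)\,\Xi^{\mu\nu}_1(x)(\rho)+c^{\mu\nu}_2(x)\,\Xi^{\mu\nu}_2(x)(\rho)\bigr)\otimes\envstate ,
\]
where the key observation is that the conjugations by $e^{xZ}$ and $e^{-xZ}$ cancel on everything originating from the inner commutator, while the operators inserted by the outer commutator acquire the net conjugation $\tilde{R}_\mu(x)$ on the environment side and $e^{ixH}S_\mu e^{-ixH}$ on the system side. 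Consequently, after using cyclicity of the trace, the two environment prefactors reduce to the scalars $c^{\mu\nu}_1(x)=\tr{[\tilde{R}_\mu(x)R_\nu\envstate]}$ and $c^{\mu\nu}_2(x)=\tr{[R_\nu\tilde{R}_\mu(x)\envstate]}$, whereas each $\Xi^{\mu\nu}_i(x)$ is a nested commutator built from $e^{ixH}S_\mu e^{-ixH}$ and $S_\nu$, hence a bounded operator on $\matrd_{1}$ with $\|\Xi^{\mu\nu}_i(x)\|\leqslant 4\|S_\mu\|\,\|S_\nu\|$ uniformly in $x$ and $y$ (conjugation by a unitary being a trace-norm isometry).

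Next I would identify the $c^{\mu\nu}_i$ with autocorrelation functions. Since $\hamint(t)$ is self-adjoint, one may assume without loss of generality, relabelling if necessary, that the family $\{R_\mu\}$ is closed under taking adjoints; writing $\mu'$ for the index with $R_{\mu'}=R_\mu^{\hadj}$ and using self-adjointness of $\envstate$ together with cyclicity of the trace, one checks $c^{\mu\nu}_1(x)=f_{\mu'\nu}(x)$ and $c^{\mu\nu}_2(x)=\overline{f_{\mu\nu'}(x)}$ (only values on $(0,\infty)$ ever enter, since $x>0$). Combining this with the previous display and the boundedness of the steering functions, which gives $|g_\mu(x+y)|\leqslant\|g_\mu\|_\infty<\infty$, I obtain a constant $C\geqslant 0$ — depending on the $\|S_\mu\|$, the $\|g_\mu\|_\infty$ and the number of terms, but \emph{not} on $y$ — such that
\[
\xi_y(x)=\|\Phi_y(x)\|\leqslant C\sum_{\mu,\nu}|f_{\mu\nu}(x)|\qquad\text{for all }x,y>0 .
\]

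Finally, $\xi_y$ is measurable, being piecewise continuous (the $g_\mu$ are piecewise continuous and $x\mapsto\tilde{R}_\mu(x)$ is norm continuous), and the right-hand side above is a finite sum of $L^1((0,\infty))$ functions by hypothesis, so $\xi_y\in L^1((0,\infty))$ for every $y\in\reals_+$. The only real work is organisational — tracking the four commutator-component terms produced by the two nested commutators and verifying that the environment traces genuinely collapse, via cyclicity and the self-adjointness of $\hamint(t)$ and $\envstate$, to the $f_{\mu\nu}$ and $\overline{f_{\mu\nu}}$ — and no analytic subtlety remains once the pointwise bound $\xi_y(x)\leqslant C\sum_{\mu,\nu}|f_{\mu\nu}(x)|$ is in hand. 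As a byproduct, the $y$-independence of $C$ shows that the operators $\tilde{K}^\lambda_s$ of \eqref{eq:KlambdaSigma} are well defined and bounded uniformly in $s$ and $\lambda$.
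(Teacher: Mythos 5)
Your proof is correct and follows essentially the same route as the paper: expand the double commutator, collapse the environment factors via $\proj{0}$ and cyclicity of the trace into the autocorrelation functions $f_{\mu\nu}$ and their conjugates, bound the system-side nested commutators by $4\|S_\mu\|\,\|S_\nu\|$, and integrate using boundedness of the $g_\mu$. The only (cosmetic) difference is your relabelling of indices under the assumption that $\{R_\mu\}$ is closed under adjoints, whereas the paper's expansion produces $f_{\nu\mu}$ and $\overline{f_{\nu\mu}}$ directly with the adjoint already in place.
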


\begin{proof}
Applying definition \eqref{eq:DerivationsDefinition} of derivation $A_t$ and expanding a resulting double commutator one obtains, for any $\rho \otimes \envstate \in \bzero$,
\begin{align}
	\Phi_{y}(\rho \otimes \envstate) &= \sum_{\mu\nu}g_{\mu}(y)\overline{g_{\nu}(x+y)} f_{\nu\mu}(x)\left(\tilde{S}_{\nu}(x)^{\hadj} S_\mu \rho - S_\mu \rho \tilde{S}_{\nu}(x)^{\hadj}\right) \\
	&+ \sum_{\mu\nu}\overline{g_{\mu}(y)}g_{\nu}(x+y) \overline{f_{\nu\mu}(x)} \left( \rho S_{\mu}^{\hadj} \tilde{S}_{\nu}(x) - \tilde{S}_{\nu}(x) \rho S_{\mu}^{\hadj} \right) \nonumber ,
\end{align}
where $\tilde{S}_{\mu}(x) = e^{ixH} S_\mu e^{-ixH}$, which allows to estimate
\begin{equation}
	\xi_y (x) \leqslant 4 \sum_{\mu\nu} |g_{\mu}(y)g_{\nu}(x+y)| \| S_\mu \| \| S_\nu \| |f_{\nu\mu}(x)| ,
\end{equation}
which, along with boundedness of functions $g_\mu$, leads to
\begin{equation}
	\int\limits_{0}^{\infty} |\xi_y (x)| dx \leqslant 4 \sum_{\mu\nu} \| g_\mu \|_{L^\infty} \| g_\nu \|_{L^\infty} \| S_\mu \| \| S_\nu \| \int\limits_{0}^{\infty} |f_{\nu\mu}(x)| dx
\end{equation}
(for $\| \cdot \|_{L^\infty}$ denoting supremum norm) which is finite by integrability of autocorrelation functions; hence, $\xi_y$ is also integrable on $(0,\infty )$.
\end{proof}

\begin{proposition}\label{proposition:KlambdaConv}
Assume that all reservoir autocorrelation functions $f_{\mu\nu}$ satisfy
\begin{equation}\label{eq:AutocorrFunConditionEpsilon}
	\int\limits_{0}^{\infty} |f_{\mu\nu}(t)| (1+t)^{\epsilon} dt < \infty
\end{equation}
for some $\epsilon > 0$. Then, for any $s,t \in \reals_+$, $s \leqslant t$, net $(K^{\lambda}_{t-s,s}-\tilde{K}_{s}^{\lambda})_\lambda$ converges to 0 uniformly in $B(\bzero)$ as $\lambda\to 0^+$.
\end{proposition}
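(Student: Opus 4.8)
The plan is to decompose $K^{\lambda}_{t-s,s}-\tilde K^{\lambda}_{s}$ into two contributions reflecting the two ways in which \eqref{KlambdaTauSigma} and \eqref{eq:KlambdaSigma} differ: the sandwiched propagator ($U^{\lambda}_{x+\lambda^{-2}s,\lambda^{-2}s}$ versus its $\lambda=0$ value $e^{xZ}$, cf.\ \eqref{eq:UtLambda}) and the range of the $x$-integration ($[0,\lambda^{-2}(t-s)]$ versus $[0,\infty)$). Fix $s,t\in\reals_{+}$ with $s<t$ and write $y=\lambda^{-2}s$. Using $A^{00}_{r}=0$ (lemma \ref{lemma:AtProperties}) and the invariance of $\bzero$ and $\mathcal B_{1}$ under $e^{xZ}$ and under the propagators $U^{\lambda}_{\cdot,\cdot}$ (lemma \ref{lemma:AcommZ}), the superscripts in \eqref{KlambdaTauSigma} may be dropped once the kernel is viewed as a map on $\bzero$, so that
\[
K^{\lambda}_{t-s,s}-\tilde K^{\lambda}_{s}
= -\int\limits_{\lambda^{-2}(t-s)}^{\infty}\proj{0}e^{-xZ}A_{x+y}\,e^{xZ}\,A_{y}\proj{0}\,dx
\;+\;\int\limits_{0}^{\lambda^{-2}(t-s)}\proj{0}e^{-xZ}A_{x+y}\bigl(U^{\lambda}_{x+y,y}-e^{xZ}\bigr)A_{y}\proj{0}\,dx
=: T^{\lambda}+P^{\lambda},
\]
and it suffices to prove $\|T^{\lambda}\|\to0$ and $\|P^{\lambda}\|\to0$ separately as $\lambda\to0^{+}$.

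For the tail $T^{\lambda}$ I would observe that its integrand is exactly the map $\Phi_{y}(x)$ of lemma \ref{lemma:XiInL1}, whose norm is bounded, \emph{uniformly in $y$}, by $h(x):=4\sum_{\mu\nu}\|g_{\mu}\|_{L^{\infty}}\|g_{\nu}\|_{L^{\infty}}\|S_{\mu}\|\,\|S_{\nu}\|\,|f_{\nu\mu}(x)|$. Since \eqref{eq:AutocorrFunConditionEpsilon} in particular gives $f_{\mu\nu}\in L^{1}((0,\infty))$, we have $h\in L^{1}((0,\infty))$, and as $s<t$ the lower limit $\lambda^{-2}(t-s)\to\infty$, so $\|T^{\lambda}\|\le\int_{\lambda^{-2}(t-s)}^{\infty}h(x)\,dx\to0$. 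Plain integrability suffices here; the weight plays no role.

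For $P^{\lambda}$ the key is a dominated-convergence argument. Pointwise in $x$ the integrand tends to $0$: writing $M_{A}=\sup_{r}\|A_{r}\|<\infty$, the Duhamel estimate \eqref{eq:UtLambda} gives $\|U^{\lambda}_{x+y,y}-e^{xZ}\|\le e^{\lambda M_{A}x}-1\to0$ for each fixed $x$. Hence, extending the integrand by $0$ beyond $\lambda^{-2}(t-s)$, it is enough to dominate $x\mapsto\|\proj{0}e^{-xZ}A_{x+y}U^{\lambda}_{x+y,y}A_{y}\proj{0}\|$ by a single $L^{1}((0,\infty))$ function valid for all small $\lambda$ and all $y\ge0$. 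This is where the structure must be exploited: inserting the Dyson series for $U^{\lambda}_{x+y,y}$ one checks that every projection $\proj{1}$ occurring in the factors $A^{11}_{\cdot}$ may be replaced by the identity, because its $\proj{0}$-part is annihilated — either by an adjacent $\proj{1}$ applied to an operator of the form $\rho\otimes\envstate$, or at the far left by $\proj{0}A_{x+y}(\,\cdot\otimes\envstate)$, using $\tr{R_{\mu}\envstate}=0$ together with stationarity of $\envstate$. The $n$-th term of the resulting series is then a plain time-ordered nest of commutators whose $\proj{0}$-contraction is a sum of products of the reservoir correlation functions $f_{\mu\nu}$ (for a quasi-free environment this is Wick's theorem; in general one needs the analogous clustering), and the $(1+t)^{\epsilon}$-weighted bound \eqref{eq:AutocorrFunConditionEpsilon} is precisely what lets one sum the series into a $\lambda$-independent $L^{1}$ majorant: splitting the $x$-range at, say, $x=\lambda^{-\alpha}$ ($0<\alpha<\tfrac12$), on $[0,\lambda^{-\alpha}]$ the crude bound $M_{A}^{2}(e^{\lambda M_{A}x}-1)$ already contributes $\mathcal O(\lambda^{1-2\alpha})\to0$, while on $[\lambda^{-\alpha},\lambda^{-2}(t-s)]$ the extra $(1+x)^{\epsilon}$-decay absorbs the $\mathcal O(\lambda^{-2})$ intermediate-time lengths produced by the Dyson integrals. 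Dominated convergence then gives $\|P^{\lambda}\|\to0$, completing the proof.

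I expect the main obstacle to be exactly this last point: over a rescaled-time window of length $\mathcal O(\lambda^{-2})$ the propagator correction $\|U^{\lambda}_{x+y,y}-e^{xZ}\|$ grows like $e^{\lambda M_{A}x}$, which is not killed by mere $L^{1}$-decay of the $f_{\mu\nu}$; the whole purpose of assuming the stronger, weighted integrability \eqref{eq:AutocorrFunConditionEpsilon} — exactly as in Davies' original construction \cite{Davies1974} — is to supply the quantitative margin needed to beat this growth, and carrying out the associated bookkeeping on the Dyson series is the technical heart of the argument. The remaining ingredients (the two-term split, the reduction of the $A^{11}$'s to commutators, the tail estimate, the passage to the interaction picture) are routine.
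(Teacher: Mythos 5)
Your decomposition into the tail integral over $[\lambda^{-2}(t-s),\infty)$ (killed via lemma \ref{lemma:XiInL1} and plain $L^1$-integrability of the $f_{\mu\nu}$) plus the Dyson-series correction to the propagator (handled term by term, with the weighted condition \eqref{eq:AutocorrFunConditionEpsilon} supplying the margin needed to beat the $\mathcal{O}(\lambda^{-2})$ time window, following Davies) is exactly the paper's proof, which likewise defers the series bookkeeping to Davies' Theorems 2.3, 3.4 and 3.5. The only caveat, which you implicitly acknowledge by taking $s<t$, is that the convergence genuinely requires $s<t$ (for $s=t$ the "tail" is all of $\tilde{K}_{s}^{\lambda}$ and does not vanish), but the paper's own argument shares this.
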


\begin{proof}
By recursive substitutions in formula \eqref{eq:UtLambda}, we re-express the propagator in terms of power series
\begin{subequations}\label{eq:UlambdaPowerSeries}
	\begin{equation}
		U^{\lambda}_{x+\lambda^{-2}s , \lambda^{-2}s} = e^{xZ} + \sum_{n=1}^{\infty} \lambda^n b_{n}(x,s),
	\end{equation}
	\begin{equation}
		b_{n}(x,s ) = \int\limits_{0}^{x} dt_1 \, ... \, \int\limits_{0}^{t_{n-1}} dt_n\, e^{xZ} \tilde{A}_{t_1 + \lambda^{-2}s}^{11}(t_1) \, ... \, \tilde{A}_{t_n + \lambda^{-2}s}^{11}(t_n)
	\end{equation}
\end{subequations}
for $\tilde{A}_{t}^{11} (s) = e^{-sZ} A^{11}_{t} e^{sZ}$. Then, one quickly estimates
\begin{align}\label{eq:KlambdaEst}
	\| K^{\lambda}_{t-s,s}-\tilde{K}_{s}^{\lambda} \| &\leqslant \int\limits_{\lambda^{-2}(t-s)}^{\infty} \| \proj{0} e^{-xZ} A_{x+\lambda^{-2}s} e^{xZ} A_{\lambda^{-2}s} \proj{0} \| \, dx \\
	&+ \sum_{n=1}^{\infty} \lambda^n  \| a_{n}^{\lambda} (t-s) \|, \nonumber
\end{align}
for $a_{n}^{\lambda} (t)$ given by formula
\begin{equation}\label{eq:aN}
	a_{n}^{\lambda} (t) = \int\limits_{0}^{\lambda^{-2}t} dt_0 \, ... \, \int\limits_{0}^{t_{n-1}} dt_n \, \proj{0} \tilde{A}_{t_0 + \lambda^{-2}s}(t_0) \left[\prod_{j=0}^{n}\tilde{A}_{t_j +\lambda^{-2}s}^{11}(t_j)\right] A_{\lambda^{-2}s}\proj{0}.
\end{equation}
The integral at the r.h.s.~of \eqref{eq:KlambdaEst} vanishes as $\lambda \to 0^+$ due to lemma \ref{lemma:XiInL1}. The remaining power series may be then shown to vanish term by term whenever $\lambda\to 0^+$. We do not present a detailed proof of this claim, as it is virtually the same as the proof provided in \cite[Theorems 2.3, 3.4 and 3.5]{Davies1974} with the only difference coming from time dependence of functions $g_\mu$; this however does not pose a problem, neither does alter the general proof guidelines, since all functions $g_\mu$ are bounded and one simply refines the upper bounds of all operator-valued functions by the supremum norm of functions $g_\mu$. In consequence, $\| K^{\lambda}_{t-s,s}-\tilde{K}_{s}^{\lambda}\|\to 0$ regardless of $t$, $s$.
\end{proof}

\begin{proposition}\label{prop:VolterraConvergence}
Let $t_\ast \in \reals_+$ and let again $\mathscr{V} = \mathcal{C} ([0,t_\ast], \bzero)$ (complete with supremum norm). Then, as $\lambda\to 0^+$,
\begin{enumerate}
	\item\label{claim:VolterraConvergence1} net $(\mathcal{H}_\lambda - \tilde{\mathcal{H}}_\lambda)_\lambda$ converges strongly to 0 on $\mathscr{V}$;
	\item\label{claim:VolterraConvergence2} nets $(\mathcal{H}_\lambda)_\lambda$ and $(\tilde{\mathcal{H}}_\lambda)_\lambda$ converge strongly to $\mathcal{K}$ on $\mathscr{V}$.
\end{enumerate}
\end{proposition}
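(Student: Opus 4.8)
The plan is to transpose Davies' weak-coupling-limit argument \cite{Davies1974,Davies1976} to the present setting of a $T$-periodically driven memory kernel, relying on the Fourier-analytic preparation developed above.

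For claim \ref{claim:VolterraConvergence1}: since $e^{\pm\lambda^{-2}sZ}$ restricts to a trace-norm isometry on $\bzero$ by Lemma \ref{lemma:AcommZ}, for every $\varphi\in\mathscr{V}$ and $t\in[0,t_\ast]$ one has
\[
\bigl\| (\mathcal{H}_\lambda-\tilde{\mathcal{H}}_\lambda)(\varphi)(t) \bigr\|_{\bzero} \;\leqslant\; \|\varphi\|_{\mathscr{V}} \int\limits_0^t \bigl\| K^{\lambda}_{t-s,s}-\tilde{K}_{s}^{\lambda} \bigr\| \, ds .
\]
Proposition \ref{proposition:KlambdaConv} makes the integrand vanish for each fixed $s$, and it is bounded uniformly in $\lambda$ and in $s\in[0,t_\ast]$: the tail term in \eqref{eq:KlambdaEst} is dominated by $\int_0^\infty\bar\xi(x)\,dx<\infty$ for the $L^1$-majorant $\bar\xi(x)=4\sum_{\mu\nu}\|g_\mu\|_{L^\infty}\|g_\nu\|_{L^\infty}\|S_\mu\|\|S_\nu\|\,|f_{\nu\mu}(x)|$ of Lemma \ref{lemma:XiInL1}, while the power-series remainder is controlled uniformly as in \cite[Thms.~2.3, 3.4]{Davies1974}, boundedness of the $g_\mu$ allowing one to refine every operator estimate by a supremum norm. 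Dominated convergence then gives pointwise-in-$t$ convergence to $0$; since the integrands are uniformly bounded, the functions $(\mathcal{H}_\lambda-\tilde{\mathcal{H}}_\lambda)(\varphi)$ are uniformly Lipschitz in $t$, so this convergence on the compact interval $[0,t_\ast]$ is automatically uniform, and $(\mathcal{H}_\lambda-\tilde{\mathcal{H}}_\lambda)(\varphi)\to 0$ in $\mathscr{V}$ for every $\varphi$ (a Fubini estimate of $\sup_{t\leqslant t_\ast}\int_0^t\|K^{\lambda}_{t-s,s}-\tilde{K}_{s}^{\lambda}\|\,ds$ by $\int_0^\infty\bar\xi(x)\min(t_\ast,\lambda^2 x)\,dx$ plus a vanishing remainder even gives operator-norm convergence on $\mathscr{V}$).

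For claim \ref{claim:VolterraConvergence2}: by claim \ref{claim:VolterraConvergence1} it suffices to show $\tilde{\mathcal{H}}_\lambda\to\mathcal{K}$ strongly. Insert the Fourier expansion \eqref{eq:KlambdaSigmaFourier} of $\tilde{K}_{s}^{\lambda}$ and conjugate by $e^{\pm\lambda^{-2}sZ}$; using the spectral resolution $\proj{0}e^{vZ}=\sum_\omega e^{-i\omega v}Q_\omega$ from \eqref{eq:Zdecomposition} — under which conjugation multiplies the block $Q_{\omega}X Q_{\omega'}$ by $e^{i(\omega-\omega')\lambda^{-2}s}$ — one arrives at an operator-norm convergent expansion
\[
e^{-\lambda^{-2}sZ}\,\tilde{K}_{s}^{\lambda}\,e^{\lambda^{-2}sZ} \;=\; \sum_{\phi} e^{i\phi\lambda^{-2}s}\,C_\phi ,
\]
whose phases $\phi$ have the form $N\Omega+(\omega-\omega')$ with $N\in\integers$ and $\omega,\omega'$ Bohr frequencies. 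The $\Omega$-congruence-free hypothesis makes $\phi=0$ equivalent to $N=0$ and $\omega=\omega'$ simultaneously, so the stationary coefficient $C_0$ collects precisely the resonant terms; by the definition \eqref{eq:TimeAveraging} of the $\sharp$-averaging together with Proposition \ref{prop:KseriesConvergence} (for summability) one identifies $C_0$ with the stationary operator $\sum_{n\in\integers}K_n^{\sharp}$ entering $\mathcal{K}$, whence $\int_0^t C_0(\varphi(s))\,ds=\mathcal{K}(\varphi)(t)$, cf.~\eqref{eq:KofVarphi}. Each remaining term carries a factor $e^{i\phi\lambda^{-2}s}$ with $\phi\neq 0$, and since $\varphi$ is continuous on $[0,t]$ and $\lambda^{-2}\to\infty$, $\int_0^t e^{i\phi\lambda^{-2}s}\varphi(s)\,ds\to 0$ by the Riemann--Lebesgue lemma. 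Exchanging the sum over $\phi$ with the $s$-integral and with the limit (legitimate by the uniform norm-convergence of \eqref{eq:KlambdaSigmaFourier}, by the absolute summability $\sum_\phi\|C_\phi\|<\infty$ obtained from the H\"{o}lder/$\ell^2$ estimate in Proposition \ref{prop:KseriesConvergence}, and by dominated convergence for series, the Lebesgue-null set of divergence points of the $g_\mu$ being irrelevant to the integral) yields $\tilde{\mathcal{H}}_\lambda(\varphi)(t)\to\mathcal{K}(\varphi)(t)$ for each $t$. Finally, since $\|\tilde{K}_{s}^{\lambda}\|\leqslant\int_0^\infty\bar\xi(x)\,dx$ and $\bigl\|\sum_n K_n^{\sharp}\bigr\|\leqslant\sum_n\|K_n\|$ are finite uniformly, $\tilde{\mathcal{H}}_\lambda(\varphi)$ and $\mathcal{K}(\varphi)$ are uniformly Lipschitz on $[0,t_\ast]$, and equicontinuity on this compact interval promotes the pointwise convergence to convergence in $\mathscr{V}$.

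I expect the crux to be the exchange of limits in claim \ref{claim:VolterraConvergence2}: the Riemann--Lebesgue cancellation of the non-resonant terms must be made uniform over the infinitely many Fourier modes, which needs the absolutely summable majorant $\sum_\phi\|C_\phi\|<\infty$ of the kind furnished by the $\ell^2$-bookkeeping already used for Proposition \ref{prop:KseriesConvergence}, together with the non-uniform behaviour of $K^{\lambda}_{t-s,s}-\tilde{K}_{s}^{\lambda}$ near $s=t$ handled as in claim \ref{claim:VolterraConvergence1}. Beyond this bookkeeping, the argument is a step-by-step transcription of \cite[Thms.~2.3, 3.4, 3.5]{Davies1974}, with the supremum norms of the $g_\mu$ inserted into every operator bound.
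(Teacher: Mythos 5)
Your proposal is correct and follows essentially the same route as the paper: the same kernel estimate via Proposition \ref{proposition:KlambdaConv} for claim (\ref{claim:VolterraConvergence1}), and for claim (\ref{claim:VolterraConvergence2}) the same decomposition into resonant and oscillatory Fourier modes, with $\Omega$-congruence freedom identifying the stationary part with $\sum_{n}K_{n}^{\sharp}$ and the Riemann--Lebesgue lemma suppressing the phases $\omega_n-\omega'_m\neq 0$. You are somewhat more explicit than the paper about justifying the interchange of the infinite mode sum with the $\lambda\to 0^+$ limit (via an absolutely summable majorant), a point the paper's proof leaves implicit in the bound \eqref{eq:HminusKupperBound}.
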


\begin{proof}
Let $\varphi \in \mathscr{V}$. Claim (\ref{claim:VolterraConvergence1}) follows from estimation
\begin{equation}
	\| \mathcal{H}_\lambda (\varphi ) - \tilde{\mathcal{H}}_\lambda (\varphi) \|_{\mathscr{V}} \leqslant \sup_{t\in [0,t_\ast]} \int\limits_{0}^{t} \| K^{\lambda}_{t-s,s}-\tilde{K}^{\lambda}_{s} \| \cdot \| \varphi(s ) \| \, ds.
\end{equation}
By proposition \ref{proposition:KlambdaConv}, the integrand converges to 0 with $\lambda\to 0^+$, and so $(\mathcal{H}_\lambda - \tilde{\mathcal{H}}_\lambda)_\lambda$ converges to 0 pointwise (strongly). For claim (\ref{claim:VolterraConvergence2}) it suffices to show $(\tilde{\mathcal{H}}_\lambda)_\lambda \to \mathcal{K}$ strongly. Decomposing $e^{\pm \lambda^{-2}s Z}$ similarly to \eqref{eq:Zdecomposition} and using Fourier decomposition \eqref{eq:FourierAt} of $A_t$, we put \eqref{eq:HtildeOfPhi} for any $\varphi \in \bzero$ as
\begin{align}\label{eq:HtildeLambdaExplicit}
	\tilde{\mathcal{H}}_\lambda (\varphi)(t ) &= \int\limits_{0}^{t} ds \int\limits_{0}^{\infty} e^{-\lambda^{-2}s Z} \proj{0} e^{-xZ} A_{x+\frac{s}{\lambda^2}} e^{xZ} A_{\frac{s}{\lambda^2}} e^{\lambda^{-2}s Z} \proj{0} (\varphi(s)) \, dx \\ 
	&= \sum_{\omega\omega'}\sum_{n,m\in\integers} \int\limits_{0}^{t} e^{i(\omega_{n} -\omega'_{m})\lambda^{-2}s} Q_\omega W_{mn} Q_{\omega'} (\varphi (s)) \, ds \nonumber ,
\end{align}
where $\omega_{n} = \omega + n\Omega$ was used as a shorthand for shifted Bohr frequencies and
\begin{equation}
	W_{mn} = \int\limits_{0}^{\infty}\proj{0} e^{-xZ} \hat{A}_m e^{xZ}\hat{A}_{-n} \proj{0} \, dx.
\end{equation}
Notice, that we have $\sum_{m}\delta_{mn} W_{mn} = K_n$. This, together with \eqref{eq:HtildeLambdaExplicit} yields
\begin{equation}\label{eq:HminusKupperBound}
	\| \tilde{\mathcal{H}}_\lambda (\varphi ) - \mathcal{K}(\varphi) \|_{\mathscr{V}} \leqslant \sup_{t\in [0,t_\ast]}{\left\| \sum_{\omega\omega'}\sum_{m,n\in\integers} Q_\omega W_{mn} Q_{\omega'}\left( \zeta^{\lambda}_{\omega\omega' mn}(t)\right) \right\|},
\end{equation}
where
\begin{equation}\label{eq:Zeta}
	\zeta_{\omega\omega' mn}^{\lambda}(t) = \int\limits_{0}^{t} \left(e^{i(\omega_n-\omega'_{m})\lambda^{-2}s}-\delta_{\omega\omega'}\delta_{mn}\right) \varphi(s ) \, ds .
\end{equation}
By \emph{Riemann-Lebesgue theorem}, 
\begin{equation}\label{eq:RL}
	\lim_{\lambda\to 0^+}\int\limits_{0}^{t}e^{i(\omega_n-\omega'_{m})\lambda^{-2}s}\varphi(s )\, ds = \delta_{\omega_n \omega'_m}\int\limits_{0}^{t}\varphi(s)\, ds
\end{equation}
uniformly over $[0,t_\ast]$. In presence of our earlier assumption of $\Omega$-congruence freedom of Bohr frequencies, we have $\omega_n = \omega'_m$ iff $\omega=\omega'$ and $n=m$, i.e.~$\delta_{\omega_n \omega'_m}=\delta_{\omega\omega'}\delta_{mn}$ and the upper bound in \eqref{eq:HminusKupperBound} converges to 0 uniformly for any $\varphi \in \mathscr{V}$ and hence, the strong convergence of $(\tilde{\mathcal{H}}_\lambda)_\lambda$ to $\mathcal{K}$ is clear.
\end{proof}

\begin{remark}\label{remark:WCLapplicability}
In presence of the direct application of Riemann-Lebesgue theorem, it is important to demand $\Omega$ to be \emph{large}, i.e.~at least comparable with typical Bohr frequencies $\omega$, in order to ensure that the weak coupling limit result represents a well enough approximation of the exact dynamics. By direct calculation one can in fact show, assuming $\varphi (s)$ is continuously differentiable, that
\begin{equation}
	\int\limits_{0}^{t}e^{i\Delta\lambda^{-2}s}\varphi(s )\, ds = \mathcal{O}\left( \frac{\lambda^2}{\Delta} \right)
\end{equation}
for $\Delta = \omega_n -\omega'_{m} \neq 0$; this naturally yields the integral converges to 0 in the limit $\lambda\to 0^+$. However, the larger $|\Delta|$ becomes, the closer the integral is to 0 on complex plane for fixed $\lambda > 0$, since the function $f(\lambda) = \lambda^2/\Delta$ rescales and becomes much more ``stretched'' horizontally. But this also means that $f(\lambda)$ changes more rapidly for lower values of $\Delta$ while shifting $\lambda$ towards 0. This means, that if $|\Delta|$ is small, then the error introduced by performing the limit $\lambda\to 0^+$ over map $\tilde{\mathcal{H}}_\lambda$ with respect to exact dynamics (still characterized by some finite, non-zero parameter $\lambda$) could be potentially greater than in case of slowly changing $f(\lambda)$. The weak coupling limit of $\tilde{\mathcal{H}}_\lambda$ can therefore be a worse approximation of exact solution if $|\Delta|$ is small and hence a large $\Omega$ regime should be preferable.
\end{remark}

The above result in fact concludes the formal construction of reduced dynamics in weak coupling limit regime. Below we give a closing remark:

\begin{proposition}\label{prop:FinalWCL}
Let us define a function $t\mapsto W_t \in B(\bzero)$, $t\in [0,t_\ast]$, by
\begin{equation}
	W_{t} = \proj{0}e^{t Z} + \int\limits_{0}^{t} e^{t Z} \tilde{K} e^{-s Z} W_{s}\, ds , \quad \tilde{K} = \sum_{n\in\integers} K_{n}^{\sharp}
\end{equation}
and let $\tilde{W}_{t}^{\lambda}$ be given by \eqref{eq:WtauTilde} nad \eqref{KlambdaTauSigma}. If all functions $g_\mu$ admit \emph{piecewise-uniformly convergent Fourier series} (see the assumptions introduced prior to equation \eqref{eq:UniformlyFourier}) and if all autocorrelation functions satisfy condition \eqref{eq:AutocorrFunConditionEpsilon} for some $\epsilon > 0$, then for every $t_\ast \in \reals_+$ and every initial state $v_0 = \rho_0 \otimes \envstate \in \bzero$, the following claims hold:
\begin{enumerate}
	\item\label{claim:WCLresultLargeOmegaOne} Net $(v_\lambda)_\lambda\subset \mathscr{V}$ of functions given via
	\begin{equation}\label{eq:vLambda}
		v_\lambda (t) = W_{t}^{\lambda}(v_0)
	\end{equation}
	for $W_{t}^{\lambda} = e^{t Z}\tilde{W}_{t}^{\lambda}$, converges uniformly over $[0,t_\ast]$ to function 
	\begin{equation}\label{eq:vK}
		v(t) = W_t (v_0).
	\end{equation}
	In other words, $W_t$ provides a solution of Nakajima-Zwanzig integral equation in the weak coupling limit $\lambda \to 0^+$.
	\item\label{claim:WCLresultLargeOmegaTwo} Family $\{\Lambda_t : t\in\reals_+\}$ of dynamical maps defined by 
	\begin{equation}
		W_t (v_0) = \Lambda_t (\rho_0) \otimes \envstate
	\end{equation}
	is completely positive and trace preserving contraction semigroup.
\end{enumerate}
\end{proposition}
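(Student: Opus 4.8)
The plan is to treat claim \ref{claim:WCLresultLargeOmegaOne} as a statement about convergence of resolvents of Volterra integral operators, and claim \ref{claim:WCLresultLargeOmegaTwo} as a consequence of the convolution structure of the limiting memory kernel together with closedness of the set of completely positive, trace preserving maps.

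For claim \ref{claim:WCLresultLargeOmegaOne}, I would first observe that evaluating \eqref{eq:WtauTilde}--\eqref{KlambdaTauSigma} on the initial state $v_0\in\bzero$ and writing $\tilde{v}_\lambda(t)=\tilde{W}_t^\lambda(v_0)$ exhibits $\tilde{v}_\lambda$ as a solution in $\mathscr{V}$ of the Volterra equation of the second kind $\tilde{v}_\lambda=\iota(v_0)+\mathcal{H}_\lambda(\tilde{v}_\lambda)$, where $\iota(v_0)\in\mathscr{V}$ is the constant function $t\mapsto v_0$ and $\mathcal{H}_\lambda$ is given by \eqref{eq:VolterraHlambda}. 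Likewise, after passing to the interaction picture in the defining formula for $W_t$ and using $\comm{\proj{0}}{Z}=0$ together with the fact that $\tilde{K}=\sum_n K_n^\sharp$ commutes with every $Q_\omega$ (hence with $e^{tZ}\proj{0}$), the function $\tilde{v}(t)=e^{-tZ}W_t(v_0)$ solves $\tilde{v}=\iota(v_0)+\mathcal{K}(\tilde{v})$ with $\mathcal{K}$ as in \eqref{eq:KofVarphi}. Both $\mathcal{H}_\lambda$ and $\mathcal{K}$ are Volterra operators whose kernels are bounded uniformly: conjugation by the isometries $e^{\pm\lambda^{-2}sZ}$ (Lemma \ref{lemma:AcommZ}) leaves operator norms unchanged, and $\|K^\lambda_{t-s,s}\|\leqslant M$ for all $0\leqslant s\leqslant t$ and all sufficiently small $\lambda$ by the estimate used in the proof of Lemma \ref{lemma:XiInL1} to bound $\|\tilde{K}^\lambda_s\|$, together with Proposition \ref{proposition:KlambdaConv} to control $\|K^\lambda_{t-s,s}-\tilde{K}^\lambda_s\|$. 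The usual simplex estimate for iterated kernels then gives $\|\mathcal{H}_\lambda^k\|\leqslant(Mt_\ast)^k/k!$, so $\mathcal{H}_\lambda$ (and likewise $\mathcal{K}$) is quasi-nilpotent on $\mathscr{V}$, $\id-\mathcal{H}_\lambda$ is invertible with $\|(\id-\mathcal{H}_\lambda)^{-1}\|\leqslant e^{Mt_\ast}$ uniformly in $\lambda$, and $\tilde{v}_\lambda=(\id-\mathcal{H}_\lambda)^{-1}\iota(v_0)$, $\tilde{v}=(\id-\mathcal{K})^{-1}\iota(v_0)$. Applying the resolvent identity $(\id-\mathcal{H}_\lambda)^{-1}-(\id-\mathcal{K})^{-1}=(\id-\mathcal{H}_\lambda)^{-1}(\mathcal{H}_\lambda-\mathcal{K})(\id-\mathcal{K})^{-1}$ to $\iota(v_0)$ gives
\[
\|\tilde{v}_\lambda-\tilde{v}\|_{\mathscr{V}}\leqslant e^{Mt_\ast}\,\|(\mathcal{H}_\lambda-\mathcal{K})(\tilde{v})\|_{\mathscr{V}},
\]
and the right-hand side tends to $0$ as $\lambda\to0^+$ because $\mathcal{H}_\lambda\to\mathcal{K}$ strongly on $\mathscr{V}$ by Proposition \ref{prop:VolterraConvergence}, applied to the fixed element $\tilde{v}$. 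Since $v_\lambda(t)=e^{tZ}\tilde{v}_\lambda(t)$, $v(t)=e^{tZ}\tilde{v}(t)$ and $e^{tZ}$ restricts to an isometry on $\bzero$, this yields $\|v_\lambda-v\|_{\mathscr{V}}\leqslant\|\tilde{v}_\lambda-\tilde{v}\|_{\mathscr{V}}\to0$, i.e.\ uniform convergence $v_\lambda\to v$ on $[0,t_\ast]$; in particular $W_t$ solves the Nakajima--Zwanzig equation in the weak coupling limit.

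For claim \ref{claim:WCLresultLargeOmegaTwo} I would first extract the semigroup law. Differentiating the defining integral equation for $W_t$ in $t$ (legitimate since all operators are bounded and the integrand is continuous) and using $\comm{\proj{0}}{Z}=0$ and $e^{tZ}\tilde{K}e^{-tZ}=\tilde{K}$ on $\bzero$ yields, on $\bzero$, the equation $\dot{W}_t=(Z+\tilde{K})W_t$ with $W_0=\proj{0}$; hence $W_t|_{\bzero}=e^{t(Z|_{\bzero}+\tilde{K})}$ is a uniformly continuous one-parameter semigroup ($Z|_{\bzero}$ has finite spectrum $\{-i\omega\}$ and $\tilde{K}$ is bounded by Proposition \ref{prop:KseriesConvergence}), and transporting this through the isometric identification $\bzero\simeq\matrd_1$, $\rho\otimes\envstate\leftrightarrow\rho$, shows that $\{\Lambda_t\}$ is a uniformly continuous semigroup, $\Lambda_{t+s}=\Lambda_t\Lambda_s$, $\Lambda_0=\id$. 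Complete positivity and trace preservation I would obtain by stability under the limit: for each $\lambda>0$,
\[
\Lambda_t^\lambda=\ptr{\hilberte}{\proj{0}V_t^\lambda\proj{0}(\,\cdot\,\otimes\envstate)}
\]
is a composition of completely positive trace preserving maps ($\rho\mapsto\rho\otimes\envstate$, $\proj{0}$, the reversible evolution $V_t^\lambda$, and the partial trace), hence is itself completely positive and trace preserving; by claim \ref{claim:WCLresultLargeOmegaOne}, $\Lambda_t^\lambda(\rho_0)\to\Lambda_t(\rho_0)$ in trace norm for every $\rho_0$ and fixed $t$; and since in finite dimension the set of completely positive trace preserving maps is closed (complete positivity being the closed condition that $\Lambda\otimes\id$ send a fixed maximally entangled rank-one projector to a positive operator, trace preservation a closed affine condition), the limit $\Lambda_t$ is completely positive and trace preserving, and therefore automatically a trace-norm contraction.

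The step I expect to require the most care is the uniform-in-$\lambda$ kernel bound $\|K^\lambda_{t-s,s}\|\leqslant M$, and with it the uniform bound on $\|(\id-\mathcal{H}_\lambda)^{-1}\|$: the integration ranges in \eqref{KlambdaTauSigma} and in the propagator expansion \eqref{eq:UlambdaPowerSeries} stretch to infinity as $\lambda\to0^+$, so this bound cannot be obtained by brute force and genuinely relies on the decay encoded in hypothesis \eqref{eq:AutocorrFunConditionEpsilon} and in the piecewise-uniform Fourier convergence of the steering functions, i.e.\ on Lemma \ref{lemma:XiInL1} and Propositions \ref{proposition:KlambdaConv}--\ref{prop:VolterraConvergence}. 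In effect the genuinely hard analytic content is already quarantined in those earlier results, and the present proposition amounts to assembling them through the Volterra resolvent calculus together with the soft limiting argument for complete positivity.
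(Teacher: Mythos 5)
Your argument is correct, and it splits naturally into a part that mirrors the paper and a part that genuinely departs from it. For claim (\ref{claim:WCLresultLargeOmegaOne}) your resolvent-identity argument is the paper's Neumann-series argument in different clothing: the paper writes $v_\lambda=\sum_n(\mathcal{E}\mathcal{H}_\lambda\mathcal{E}^{-1})^n(w)$, $v=\sum_n(\mathcal{E}\mathcal{K}\mathcal{E}^{-1})^n(w)$ and passes to the limit term by term under the simplex bounds $\|\mathcal{H}_\lambda^n\|\leqslant C^nt_\ast^n/n!$, which is exactly your $(\id-\mathcal{H}_\lambda)^{-1}$ expanded out; both versions lean on the same uniform-in-$\lambda$ kernel bound that you rightly single out as the delicate input. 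For claim (\ref{claim:WCLresultLargeOmegaTwo}) you take a genuinely different route. The paper computes $\tilde{K}$ explicitly on $\bzero$, producing the Lamb shift $\delta H$ and the dissipator $G$ with Kossakowski matrix $[\gamma_{\mu\nu}(\omega)]$ whose positive semi-definiteness (via Bochner/Stone) yields complete positivity through the GKLS theorem, and extracts the covariance property from the commutation relation $\comm{H}{S_{\mu\omega}}=\omega S_{\mu\omega}$. You instead obtain the covariance $e^{tZ}\tilde{K}e^{-tZ}=\tilde{K}$ abstractly from $Q_\omega K_n^\sharp=K_n^\sharp Q_\omega$ (which is cleaner), read off the semigroup law from the resulting autonomous ODE, and get complete positivity and trace preservation ``for free'' as a pointwise limit of the manifestly CPTP exact reduced maps, using closedness of the CPTP set in finite dimension. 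What your soft argument buys is brevity and independence from the explicit spectral computation; what it loses is precisely that computation, which the paper needs anyway for the Corollary stating the Master Equation in standard form and for the uniform-steering analysis of Section \ref{sec:UPS}, so the paper's longer route is not wasted effort.

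One small bookkeeping point: after the rescaling $\tau=\lambda^2 t$ and the twist $W_t^\lambda=e^{tZ}\tilde{W}_t^\lambda$, the map whose limit you take is $e^{(1-\lambda^{-2})tZ}\,\proj{0}V^\lambda_{\lambda^{-2}t}\proj{0}$ restricted to $\bzero$, not $\proj{0}V_t^\lambda\proj{0}$ itself; this is still a unitary conjugation composed with the exact reduced dynamics, hence CPTP, so your limiting argument survives, but the formula as you wrote it should carry the rescaled time and the residual free rotation.
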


\begin{proof}
The proof follows general guidelines drawn by Davies \cite{Davies1974} and therefore we will not be overly explicit here. Let us introduce an isometry $\mathcal{E}:\mathscr{V}\to\mathscr{V}$ by setting $\mathcal{E}(\varphi)(t) = e^{t Z}(\varphi(t))$, which is a formal switch to the Schroedinger picture from the interaction picture and $\mathcal{E}^{-1}$ is its reversal. Then, by \eqref{eq:vLambda}, \eqref{eq:vK} and finally \eqref{eq:VolterraHlambda} and \eqref{eq:KofVarphi}, the Schroedinger picture solutions $v,v_\lambda \in \mathscr{V}$ satisfy equations
\begin{equation}
	v_\lambda = \mathcal{E}(w) + \mathcal{E}\mathcal{H}_{\lambda}\mathcal{E}^{-1} (v_\lambda), \quad v = \mathcal{E}(w) + \mathcal{E}\mathcal{K}\mathcal{E}^{-1} (v)
\end{equation}
for $w = \mathcal{E}\proj{0}(v_0)$. By recursive substitutions, one easily checks that these can be cast into
\begin{equation}
	v_\lambda = \sum_{n=0}^{\infty} (\mathcal{E}\mathcal{H}_\lambda\mathcal{E}^{-1})^{n} (w), \quad v = \sum_{n=0}^{\infty} (\mathcal{E}\mathcal{K}\mathcal{E}^{-1})^{n} (w),
\end{equation}
which, together with isometry condition $\| \mathcal{E} \| = 1$ allows to estimate
\begin{equation}\label{eq:vLambdaVest}
		\| v_\lambda - v \| \leqslant \sum_{n=0}^{\infty} \| \mathcal{H}_{\lambda}^{n}(w) - \mathcal{K}^{n}(w) \|.
\end{equation}
Both $\mathcal{H}_\lambda$, $\mathcal{K}$ are Volterra operators and so they can be shown to satisfy inequalities $\|\mathcal{H}_{\lambda}^{n}\| \leqslant C^n t_{*}^{n}
/n!$, $\| \mathcal{K}^n\| \leqslant D^n t_{*}^{n}/n!$ for some constants $C,D > 0$ and, since one clearly has
\begin{equation}
	\| \mathcal{H}_{\lambda}^{n}(w) - \mathcal{K}^{n}(w) \| \leqslant (\|\mathcal{H}_{\lambda}^{n}\|+\|\mathcal{K}^n\|) \| w \|,
\end{equation}
the series in \eqref{eq:vLambdaVest} converges for all $\lambda > 0$. However, as $\mathcal{H}_\lambda \to \mathcal{K}$ strongly by proposition \ref{prop:VolterraConvergence}, each term in the series converges to 0 as $\lambda \to 0^+$ and in the result, $v_\lambda \to v$ uniformly; this proves claim (\ref{claim:WCLresultLargeOmegaOne}). As differentiability of $W_t$ is obvious, after elementary calculations we have
\begin{equation}\label{eq:Dwtau}
	\frac{dW_t}{dt} = \left( Z + e^{t Z}\tilde{K}e^{-t Z} \right) W_t .
\end{equation}
By definition \eqref{eq:Kn} of operators $K_n$ and by time-averaging procedure \eqref{eq:TimeAveraging}, the action of $\tilde{K}$ on $v_0 \in \bzero$ can be expressed as
\begin{align}\label{eq:KtildeOverB0}
	\tilde{K}(v_0) &= - \lim_{t\to\infty}\frac{1}{2t}\int\limits_{-t}^{t} ds \int\limits_{0}^{\infty} \proj{0} \comm{e^{(s-x)Z}(\hat{H}_{n})}{\comm{e^{sZ}(\hat{H}_{-n})}{v_0}} \, dx \\
	&= \lim_{t\to\infty}\frac{1}{2t}\int\limits_{-t}^{t} ds \int\limits_{0}^{\infty} \proj{0} \Bigg( -\sum_{n\in\integers} e^{(s-x)Z}(\hat{H}_n)e^{sZ}(\hat{H}_{-n})\,v_0 \nonumber \\
	&+ \sum_{n\in\integers} e^{sZ}(\hat{H}_{-n})\,v_0\, e^{(s-x)Z}(\hat{H}_{n}) + \sum_{n\in\integers} e^{(s-x)Z}(\hat{H}_n)\,v_0 \, e^{sZ}(\hat{H}_{-n}) \nonumber \\
	&- \left.\sum_{n\in\integers} v_0 \, e^{sZ}(\hat{H}_{-n})e^{(s-x)Z}(\hat{H}_{n}) \right). \nonumber
\end{align}
Next, we change the summation index $n$ to $-n$ in first two sums in \eqref{eq:KtildeOverB0} and replace $\hat{H}_{-n}$ with $\hat{H}_{n}^{\hadj} = \sum_{\mu\nu}\overline{\hat{g}_{\mu}(n)}S_{\mu}^{\hadj}\otimes R_{\mu}^{\hadj}$. The remaining computations then follow the canonical textbook path and therefore we only sketch them briefly. First, we employ expansions \eqref{eq:SmuIPexpansion} for expressions of a form $e^{it\comm{H}{\cdot}} (S_\mu)$; second, we perform a limiting procedure, i.e.~the time-averaging, $\lim_{t\to\infty}\frac{1}{2t}\int_{-t}^{t}e^{is(\omega-\omega')}ds = \delta_{\omega\omega'}$, which allows to suppress fast-oscillating terms of a form $e^{is(\omega-\omega')}$. Third, we re-express the one-sided Fourier transforms of reservoir autocorrelation functions as
\begin{equation}\label{eq:OneSidedFourier}
	\int\limits_{0}^{\infty} e^{-i\omega x} f_{\mu\nu}(x) \, dx = \frac{1}{2} h_{\mu\nu}(\omega) + i \zeta_{\mu\nu}(\omega),
\end{equation}
where $h_{\mu\nu}(\omega)$ is the usual Fourier transform of $f_{\mu\nu}(t)$, and $\zeta_{\mu\nu}(\omega)$ is often obtained with appropriate use of Sochozki formulas \cite{Rivas2012}. Finally, after some algebra we arrive at 
\begin{align}\label{eq:LinIP}
	\tilde{K}(\rho\otimes\envstate) &= \Big( -i\comm{\delta H}{\rho} + G(\rho) \Big) \otimes\envstate,
\end{align}
where $\delta H$ and $G$ admit forms
\begin{subequations}
	\begin{equation}
		\delta H = \sum_{\mu\nu\omega}\xi_{\nu\mu}(\omega) S_{\nu\omega}^{\hadj} S_{\mu\omega},
	\end{equation}
	\begin{equation}
		G(\rho) = \sum_{\mu\nu\omega}\gamma_{\nu\mu}(\omega) \left( S_{\mu\omega}\rho S_{\nu\omega}^{\hadj} - \frac{1}{2}\acomm{S_{\nu\omega}^{\hadj}S_{\mu\omega}}{\rho} \right) ,
	\end{equation}
\end{subequations}
for matrices
\begin{equation}
	\gamma_{\nu\mu}(\omega) = \sum_{n\in\integers}\hat{g}_{\mu}(n)\overline{\hat{g}_{\nu}(n)} h_{\nu\mu}(\omega), \quad \xi_{\nu\mu}(\omega) = \sum_{n\in\integers}\hat{g}_{\mu}(n)\overline{\hat{g}_{\nu}(n)} \zeta_{\nu\mu}(\omega).
\end{equation}
By H\"{o}lder's inequality and estimation \eqref{eq:Holder}, all above series converge. Stone's theorem guarantees that matrices $[\gamma_{\mu\nu}(\omega)]_{\mu\nu}$ and $[\xi_{\mu\nu}(\omega)]_{\mu\nu}$ are positive-semidefinite. Matrix $\delta H$, being explicitly Hermitian, is commonly called the \emph{Lamb shift Hamiltonian} and expresses change of system's energy levels due to environment's influence. By positive semi-definiteness of matrix $[\gamma_{\mu\nu}(\omega)]_{\mu\nu}$, formula \eqref{eq:LinIP} defines a generator of completely positive and trace preserving semigroup. Operators $S_{\mu\omega}$ satisfy commutation relation
\begin{equation}\label{eq:ScommRel}
	\comm{H}{S_{\mu\omega}} = \omega S_{\mu\omega},
\end{equation}
which can be verified by \eqref{eq:HspecDecomp} and \eqref{eq:Smuomega}; this in turn leads to conditions
\begin{equation}\label{eq:dHGcommRel}
	\comm{\delta H}{H} = 0, \quad \comm{G}{\comm{H}{\cdot\,}} = 0,
\end{equation}
i.e.~$G$ and $\comm{H}{\cdot\,}$ commute as maps on $\matrd$, which is sometimes referred to as the \emph{covariance property} \cite{Alicki2006a,Breuer2002,Rivas2012}. Then, from \eqref{eq:Dwtau} we have, for $\rho_t = \Lambda_t (\rho_0)$,
\begin{align}
	\frac{dW_t}{dt}(v_0) &= \left( -i\comm{H + e^{-it\comm{H}{\cdot\,}}(\delta H)}{\rho_t} +  e^{-it\comm{H}{\cdot\,}}\, G\, e^{it\comm{H}{\cdot\,}}(\rho_t)\right)\otimes\envstate \\
	&= \Big( -i\comm{H+\delta H}{\rho_t} + G(\rho_t)\Big) \otimes\envstate \nonumber \\
	&= \frac{d\Lambda_t}{dt}(\rho_0)\otimes\envstate \nonumber,
\end{align}
i.e.~$\Lambda_t = e^{t L}$ for $L = -i\comm{H+\delta H}{\cdot\,}+G$ in standard form.
\end{proof}

\begin{corollary}
To summarize, in the regime of large modulation frequency $\Omega$, the reduced density operator $\rho_t$ of subsystem S satisfies, under weak coupling limit, the Markovian Master Equation
\begin{equation}
	\frac{d\rho_t}{dt} = -i \comm{H_\mathrm{eff.}}{\rho_t} + \sum_{\mu\nu\omega}\gamma_{\nu\mu}(\omega) \left( S_{\mu\omega}\rho_t S_{\nu\omega}^{\hadj} - \frac{1}{2}\acomm{S_{\nu\omega}^{\hadj}S_{\mu\omega}}{\rho_t} \right),
\end{equation}
where $H_\mathrm{eff.} = H+\delta H$ is the physical, \emph{effective} Hamiltonian of S, including the Lamb shift correction term.
\end{corollary}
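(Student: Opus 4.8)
The plan is to read the statement off directly from Proposition \ref{prop:FinalWCL} and its proof. There it was established that, under exactly the two hypotheses recalled in the corollary --- piecewise-uniformly convergent Fourier series for each steering function $g_\mu$, and the weighted integrability condition \eqref{eq:AutocorrFunConditionEpsilon} on the autocorrelation functions $f_{\mu\nu}$ --- the family $\{\Lambda_t : t \in \reals_+\}$ of reduced dynamical maps is a completely positive, trace preserving contraction semigroup whose generator is $L = -i\comm{H+\delta H}{\cdot\,} + G$ in standard (GKLS) form, with $G$ the dissipator built from the operators $S_{\mu\omega}$ of \eqref{eq:Smuomega} and the Kossakowski matrices $[\gamma_{\nu\mu}(\omega)]_{\mu\nu}$.

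First I would invoke the identity $\Lambda_t = e^{tL}$ from the final line of that proof, so that $\rho_t = \Lambda_t(\rho_0)$ is strongly differentiable on the finite-dimensional space $\matrd_1$ with $\tfrac{d\rho_t}{dt} = L(\rho_t)$; the differentiation is legitimate because $\{e^{tL}\}$ is a uniformly continuous semigroup. Next I would substitute the explicit form of $L$, separating the Hamiltonian part $-i\comm{H+\delta H}{\rho_t}$ from the dissipative part $G(\rho_t) = \sum_{\mu\nu\omega}\gamma_{\nu\mu}(\omega)\bigl(S_{\mu\omega}\rho_t S_{\nu\omega}^{\hadj} - \tfrac12\acomm{S_{\nu\omega}^{\hadj}S_{\mu\omega}}{\rho_t}\bigr)$, which is precisely the right-hand side displayed in the corollary once one sets $H_{\mathrm{eff.}} = H + \delta H$. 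Positivity of $[\gamma_{\nu\mu}(\omega)]_{\mu\nu}$, obtained via Stone's theorem as in the proof of Proposition \ref{prop:FinalWCL}, certifies that the equation is a genuine Markovian Master Equation of GKLS type, while Hermiticity of $\delta H$ identifies $H_{\mathrm{eff.}}$ as a bona fide effective Hamiltonian carrying the Lamb shift correction.

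Since the corollary is only a reformulation, I do not anticipate a genuine obstacle; the single point deserving a word of care is the transition from the semigroup law $\Lambda_t = e^{tL}$ to the master equation, which rests on the covariance relations $\comm{\delta H}{H} = 0$ and $\comm{G}{\comm{H}{\cdot\,}} = 0$ recorded in \eqref{eq:dHGcommRel}: these make the interaction-picture and Schr\"{o}dinger-picture generators coincide, so that no residual time dependence survives in $L$ and the equation is autonomous as stated.
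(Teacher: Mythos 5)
Your proposal is correct and follows essentially the same route as the paper, which presents this corollary as an immediate restatement of the final line of the proof of Proposition \ref{prop:FinalWCL}, where $\Lambda_t = e^{tL}$ with $L = -i\comm{H+\delta H}{\cdot\,}+G$ was established. You rightly flag the covariance relations \eqref{eq:dHGcommRel} as the step that removes the residual time dependence and makes the generator autonomous; that is exactly the mechanism used in the paper's computation of $dW_t/dt$.
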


\subsection{Adiabatic limit: small driving frequency}
\label{sec:WCLA}

As was already mentioned (see remark \ref{remark:WCLapplicability}), the weak coupling limit procedure should be expected to produce a well-enough approximation of the exact dynamics in case of rather large driving frequency $\Omega$, i.e.~comparable with Bohr frequencies of the system of interest. A case of small $\Omega$, characterizing much more physically accessible scenario, has to be treated differently. If $\lambda$ is still a small ($0<\lambda \ll 1$) parameter, horizontal rescaling of steering functions $g_\mu (t)$ by defining
\begin{equation}
	\tilde{g}_\mu (t) = g_\mu (\lambda^2 t) = \sum_{n\in\integers} \hat{g}_\mu e^{in(\lambda^2 \Omega)t}
\end{equation}
effectively mirrors the case of \emph{small} driving frequency $\lambda^2\Omega$, i.e.~of \emph{slowly varying} interaction term. By accordingly re-writing the initial value problem \eqref{eq:vonNeumann} as
\begin{equation}\label{eq:vonNeumannA}
	\frac{dV^{\lambda}_{t}}{dt} = (Z+\lambda A_{\lambda^2 t})V_{t}^{\lambda},
\end{equation}
where 
\begin{equation}
	A_{\lambda^2 t}(\rho ) = -i \comm{H_{\mathrm{int.}}(\lambda^2 t)}{\rho}, \quad H_{\mathrm{int.}}(\lambda^2 t) = \sum_{\mu}g_{\mu}(\lambda^2 t) S_\mu \otimes R_\mu ,
\end{equation}
the formal limit $\lambda\to 0^+$ is in such case governed rigorously by the \emph{quantum adiabatic theorem}, introduced by Davies and Spohn already in 1978 \cite{Davies1978}. Validity of this approach will be granted provided that the frequency of the interaction Hamiltonian is small and the steering functions vary slowly, or, if the actual observation time is short. We will follow the guidelines of the original paper, however as a majority of required proofs resembles the weak coupling limit case, we will vastly limit our analysis to necessary steps.

Let us introduce a rescaled time $\tau = \lambda^2 t$; as opposed to the previous regime of weak coupling limit, the adiabatic limit now prefers \emph{small} values of $\tau$, i.e.~short observation times. The new variable yields a new von Neumann equation
\begin{equation}\label{eq:vonNeumannA2}
	\frac{dV^{\lambda}_{t}}{dt} = (\lambda^{-2} Z+\lambda^{-1} A_{t}) V_{t}^{\lambda},
\end{equation}
where again $\tau$ was replaced by symbol $t$. After performing all the computational steps similarly to proposition \ref{prop:Wt}, a new Nakajima-Zwanzig equation is then derived,
\begin{equation}
	W^{\lambda}_{t} = \proj{0} e^{\lambda^{-2}t Z}+ \int\limits_{0}^{t} e^{\lambda^{-2}(t-s)} K^{\lambda}_{t,s} W^{\lambda}_{s} \, ds,
\end{equation}
for operator kernel $K^{\lambda}_{t,s}$ given by
\begin{equation}
	K^{\lambda}_{t,s} = \int\limits_{0}^{\lambda^{-2}(t-s)} e^{-xZ} A^{01}_{s+\lambda^{2}x} U^{\lambda}_{s+\lambda^{2}x,s} A^{10}_{s}\, dx.
\end{equation}

\begin{proposition}
If all environment autocorrelation functions satisfy \eqref{eq:AutocorrFunConditionEpsilon} for some $\epsilon > 0$, then there exists a continuous function $t\mapsto K_t \in B(\bzero)$ such that, if $t\mapsto\hat{W}^{\lambda}_{t}$ is a fundamental solution of ODE of a form
	\begin{equation}
		\frac{d\hat{W}^{\lambda}_{t}}{dt} = (\lambda^{-2} Z+K_t) \hat{W}^{\lambda}_{t}, \quad \hat{W}_0 = \id{},
	\end{equation}
	then for every $t_\ast \in\reals_+$ and every $v_0 \in \bzero$, nets $(v_\lambda)_\lambda,(g_\lambda)_\lambda \subset \mathscr{V}$ given by
	\begin{equation}\label{eq:vgA}
		v_\lambda (t) = W_{t}^{\lambda}(v_0), \quad g_\lambda (t) = \hat{W}^{\lambda}_{t} (v_0),
	\end{equation}
	converge to each other uniformly over $[0,t_\ast]$.
\end{proposition}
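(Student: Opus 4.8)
The plan is to run the Davies--Spohn adiabatic comparison scheme \cite{Davies1978}, whose skeleton is identical to Propositions \ref{proposition:KlambdaConv}--\ref{prop:FinalWCL}: guess the limiting kernel, pass to the interaction picture, realise both $v_\lambda$ and $g_\lambda$ as Neumann series of Volterra operators, and compare them term by term. The adiabatic rescaling sends the argument $s+\lambda^{2}x$ of the outer derivation in $K^{\lambda}_{t,s}$ to $s$, and the upper limit $\lambda^{-2}(t-s)$ to $\infty$, as $\lambda\to 0^{+}$, so the natural candidate is the memory kernel with the steering functions \emph{frozen} at time $t$:
\[
	K_{t} \;=\; \int_{0}^{\infty} \proj{0}\, e^{-xZ} A_{t}^{01}\, e^{xZ} A_{t}^{10}\, \proj{0} \, dx \;=\; \int_{0}^{\infty} \proj{0}\, e^{-xZ} A_{t}\, e^{xZ} A_{t}\, \proj{0}\, dx ,
\]
the second equality using $A_{t}^{00}=0$ (Lemma \ref{lemma:AtProperties}) and $\proj{i}\proj{j}=\delta_{ij}\proj{j}$. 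Expanding the double commutator exactly as in Lemma \ref{lemma:XiInL1} writes $K_{t}$ as a finite sum of reservoir autocorrelation functions $f_{\mu\nu}$ weighted by products $g_{\mu}(t)\overline{g_{\nu}(t)}$, so \eqref{eq:AutocorrFunConditionEpsilon} makes $K_{t}$ a well-defined bounded operator depending on $t$ only through the $g_{\mu}$; in particular $t\mapsto K_{t}$ is continuous whenever the steering functions are (and at worst piecewise continuous, which is all the argument uses).

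Next I would move to the interaction picture, $\tilde{W}_{t}^{\lambda}=e^{-\lambda^{-2}tZ}W_{t}^{\lambda}$ and $\tilde{g}_{\lambda}(t)=e^{-\lambda^{-2}tZ}g_{\lambda}(t)$. Since $e^{\pm\lambda^{-2}tZ}$ restricts to a trace-norm isometry on $\bzero$ (Lemma \ref{lemma:AcommZ}), one has $\|v_{\lambda}-g_{\lambda}\|_{\mathscr{V}}=\|\tilde{W}^{\lambda}(v_{0})-\tilde{g}_{\lambda}\|_{\mathscr{V}}$, and substituting into the adiabatic Nakajima--Zwanzig equation and into the defining ODE of $\hat{W}_{t}^{\lambda}$ respectively shows that both functions solve Volterra equations on $\mathscr{V}$,
\[
	\tilde{W}_{t}^{\lambda}(v_{0}) = v_{0} + \int_{0}^{t} e^{-\lambda^{-2}sZ} K^{\lambda}_{t,s} e^{\lambda^{-2}sZ}\, \tilde{W}_{s}^{\lambda}(v_{0})\, ds, \qquad \tilde{g}_{\lambda}(t) = v_{0} + \int_{0}^{t} e^{-\lambda^{-2}sZ} K_{s} e^{\lambda^{-2}sZ}\, \tilde{g}_{\lambda}(s)\, ds .
\]
The core estimate is then $\|K^{\lambda}_{t,s}-K_{s}\|\to 0$ as $\lambda\to 0^{+}$ for every fixed $s<t$, with a bound uniform in $\lambda,t,s$ — the adiabatic analogue of Proposition \ref{proposition:KlambdaConv}, proved the same way: (i) after extending the $x$-integral to $\infty$, the tail $\int_{\lambda^{-2}(t-s)}^{\infty}$ vanishes by the integrable majorant $4\sum_{\mu\nu}\|g_{\mu}\|_{L^{\infty}}\|g_{\nu}\|_{L^{\infty}}\|S_{\mu}\|\|S_{\nu}\|\,|f_{\nu\mu}(x)|$ of Lemma \ref{lemma:XiInL1}; (ii) $U^{\lambda}_{s+\lambda^{2}x,s}=e^{xZ}+\sum_{n\geqslant 1}\lambda^{n}b_{n}(x,s)$ by Dyson iteration (analogous to \eqref{eq:UlambdaPowerSeries}), and the correction series vanishes term by term as $\lambda\to 0^{+}$ by the estimates of \cite[Thms.~2.3, 3.4--3.5]{Davies1974}, the only new feature being the time dependence of the $g_{\mu}$, which is harmless since they are bounded and one merely refines the operator bounds by their supremum norms; (iii) $g_{\mu}(s+\lambda^{2}x)\to g_{\mu}(s)$ pointwise in $x$ at every continuity point $s$ (hence a.e.), so dominated convergence against the same majorant sends the integrand to $\proj{0}e^{-xZ}A_{s}e^{xZ}A_{s}\proj{0}$ and the integral to $K_{s}$. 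Because conjugation by $e^{\pm\lambda^{-2}sZ}$ is isometric, $\|e^{-\lambda^{-2}sZ}(K^{\lambda}_{t,s}-K_{s})e^{\lambda^{-2}sZ}\|=\|K^{\lambda}_{t,s}-K_{s}\|$, so the rapid oscillation — common to both Volterra operators — is irrelevant to the comparison.

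To finish, let $\mathcal{H}_{\lambda},\mathcal{G}_{\lambda}$ denote the two Volterra operators above; uniform boundedness of their kernels by some $C$ gives $\|\mathcal{H}_{\lambda}^{n}\|,\|\mathcal{G}_{\lambda}^{n}\|\leqslant C^{n}t_{\ast}^{n}/n!$, so $\tilde{W}^{\lambda}(v_{0})=\sum_{n}\mathcal{H}_{\lambda}^{n}(v_{0})$ and $\tilde{g}_{\lambda}=\sum_{n}\mathcal{G}_{\lambda}^{n}(v_{0})$ converge in $\mathscr{V}$ uniformly in $\lambda$. The telescoping identity $\mathcal{H}_{\lambda}^{n}-\mathcal{G}_{\lambda}^{n}=\sum_{j=0}^{n-1}\mathcal{H}_{\lambda}^{\,j}(\mathcal{H}_{\lambda}-\mathcal{G}_{\lambda})\mathcal{G}_{\lambda}^{\,n-1-j}$ together with $\|\mathcal{H}_{\lambda}-\mathcal{G}_{\lambda}\|\leqslant\sup_{t\leqslant t_{\ast}}\int_{0}^{t}\|K^{\lambda}_{t,s}-K_{s}\|\,ds\to 0$ (dominated convergence, using the a.e.\ pointwise limit and the bounded integrand) makes each $\|(\mathcal{H}_{\lambda}^{n}-\mathcal{G}_{\lambda}^{n})(v_{0})\|_{\mathscr{V}}\to 0$, and the uniform-in-$\lambda$ summability upgrades this to $\|\tilde{W}^{\lambda}(v_{0})-\tilde{g}_{\lambda}\|_{\mathscr{V}}\to 0$; undoing the interaction picture gives $\|v_{\lambda}-g_{\lambda}\|_{\mathscr{V}}\to 0$. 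I expect the main obstacle to lie in steps (ii)--(iii) of the kernel estimate near discontinuities of the $g_{\mu}$: there $g_{\mu}(s+\lambda^{2}x)$ need not converge, but such $s$ form a null set and the Volterra integrand stays bounded, so the outer $ds$-integral still tends to $0$. The one conceptual point worth flagging — in contrast with the large-$\Omega$ weak coupling case — is that here one deliberately does \emph{not} identify the strong limits of $\mathcal{H}_{\lambda}$ or $\mathcal{G}_{\lambda}$ individually (those would require a Riemann--Lebesgue/time-averaging step and produce a secularised limit); the oscillatory conjugation cancels in the difference, so the whole comparison reduces to the kernel convergence $K^{\lambda}_{t,s}\to K_{s}$.
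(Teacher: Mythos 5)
Your proposal is correct and follows essentially the same route as the paper: the same frozen-time candidate kernel $K_s=\int_0^\infty e^{-xZ}A_s^{01}e^{xZ}A_s^{10}\,dx$, the same three-term bound on $\|K^\lambda_{t,s}-K_s\|$ (time-shift of $A$, tail of the $x$-integral via Lemma \ref{lemma:XiInL1}, and the term-by-term vanishing Dyson corrections), followed by the standard Volterra/Neumann-series comparison that the paper delegates to Davies--Spohn. Your extra care at discontinuity points of the $g_\mu$ (a.e.\ pointwise convergence plus dominated convergence in the $ds$-integral) and your remark that no time-averaging enters at this stage are both accurate refinements of what the paper only sketches.
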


\begin{proof}
Both the existence of appropriate map $K_t$ and the above proposition are essentially the core of quantum adiabatic theorem as formulated in \cite{Davies1978}. After noting, that the unitary propagator $U^{\lambda}_{s+\lambda^{2}x,s}$ formally satisfies
\begin{equation}
	\lim_{\lambda\to 0^+} U^{\lambda}_{s+\lambda^{2}x,s} = e^{s Z},
\end{equation}
as a \emph{solution} to differential equation \eqref{eq:vonNeumannUnperturbed}, and since $A_t$ is piecewise continuous, we also have a following asymptotic equality
\begin{equation}\label{eq:KtauA}
	K_s = \lim_{\lambda\to 0^+} K_{t,s}^{\lambda} = \int\limits_{0}^{\infty} e^{-xZ} A_{s}^{01} e^{xZ} A_{s}^{10} \, dx
\end{equation}
satisfied formally. Showing that \eqref{eq:KtauA} is the right choice for $K_t$ involves, similarly to propositions \ref{proposition:KlambdaConv} and \ref{prop:VolterraConvergence}, showing uniform convergence of net $(K^{\lambda}_{t,s}-K_s)$ to 0; this is again achieved by expanding propagator $U^{\lambda}_{s+\lambda^{2}x,s}$ into power series similar to \eqref{eq:UlambdaPowerSeries} and estimating
\begin{align}\label{eq:KminusKA}
	\| K^{\lambda}_{t,s} - K_s\| &\leqslant \int\limits_{0}^{\lambda^{-2}(t-s)} \left\| e^{-xZ} (A^{01}_{s}-A^{01}_{s+\lambda^2 x}) e^{xZ} A^{10}_{s} \right\| \, dx  \\
	&+ \int\limits_{\lambda^{-2}(t-s)}^{\infty} \left\|e^{-xZ} A_{s}^{01} e^{xZ} A_{s}^{10}\right\| \, dx + \sum_{n=1}^{\infty} \lambda^n \| a_{n}^{\lambda}(t-s) \|\nonumber
\end{align}
for $a_{n}^{\lambda} (t)$ given in a manner similar to \eqref{eq:aN}. Employing lemma \ref{lemma:XiInL1} one easily shows that both the integrands at the r.h.s. of \eqref{eq:KminusKA} are $L^1 ((0,\infty))$ and, by continuity of function $t\mapsto A_t$, both integrals disappear in limit $\lambda\to 0^+$. The remaining series can be then also shown to vanish term-wise by the same arguments as in the weak coupling case, and so \eqref{eq:KminusKA} converges to 0. This in turn yields the uniform convergence as claimed.
\end{proof}

Following the original construction, in order to express the approximated dynamics in possibly most accessible way, we introduce a semigroup
\begin{equation}\label{eq:Xexplicit}
	e^{\lambda^{-2}Zt} = \sum_{\omega} e^{-i\lambda^{-2}\omega t} Q_{\omega},
\end{equation}
as well as a function $t\mapsto Y_t$, satisfying
\begin{equation}\label{eq:YdiffEq}
	\frac{dY_t}{dt} = K_{t}^{\sharp}Y_t, \quad Y_0 = \id{},
\end{equation}
where $\{\omega\}$ again stands for a set of Bohr frequencies of system's self-Hamiltonian, $Q_\omega$ are corresponding spectral projection operators of derivation $Z$ and $X^\sharp$ is again defined as a time-averaging procedure given by \eqref{eq:TimeAveraging}. Then, the following approximation theorem applies:

\begin{proposition}\label{prop:Afinal}
Let $(g_\lambda)_\lambda \subset \mathscr{V}$ be given by \eqref{eq:vgA} and let us define a function $t \mapsto F_{t}^{\lambda} \in B(\bzero)$ by setting
\begin{equation}\label{eq:FlambdaDef}
	F_{t}^{\lambda} = Y_t e^{\lambda^{-2}Zt}.
\end{equation}
Then, for every initial state $v_0 \in \bzero$ and every $t_\ast \in \reals_+$, the following claims hold:
\begin{enumerate}
	\item\label{claim:WCLresultSmallOmegaOne} Net $(h_\lambda)_\lambda \subset \mathscr{V}$ of functions defined via
	\begin{equation}
		h_\lambda (t ) = F_{t}^{\lambda} (v_0)
	\end{equation}
	becomes arbitrarily close to $(g_\lambda)$ as $\lambda\to 0^+$, uniformly over $[0,t_\ast]$. In other words, $F_{t}^{\lambda}$ provides an approximate solution of Nakajima-Zwanzig integral equation, suitable in adiabatic limit regime $\lambda\to 0^+$.
	\item\label{claim:WCLresultSmallOmegaTwo} Family $\{\Lambda_t : t\in\reals_+\}$ of dynamical maps defined by 
	\begin{equation}
		F_{t}^{\lambda} (v_0) = \Lambda_t (\rho_0) \otimes \envstate
	\end{equation}
	is completely positive and trace preserving on $\matrd_1$ and satisfies Markovian Master Equation for time-local, periodic Lindbladian in standard form. Moreover, it may be always cast in a form
	\begin{equation}\label{eq:FloquetLambda}
		\Lambda_t = P_t e^{tX},
	\end{equation}
	where $P_t, X$ are linear maps on $\matrd_1$ and $t\mapsto P_t$ is periodic.
	\end{enumerate}
\end{proposition}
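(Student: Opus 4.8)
The plan is to carry out the open-systems adiabatic argument of Davies and Spohn, recycling the Volterra apparatus already built for the weak-coupling case. For claim~(\ref{claim:WCLresultSmallOmegaOne}) I would first observe that $F_{t}^{\lambda} = Y_t e^{\lambda^{-2}Zt}$ is the principal fundamental solution of an ODE close to the one that defines $\hat W_{t}^{\lambda}$. By \eqref{eq:Zdecomposition}, $Z$ acts as the scalar $-i\omega$ on the range of each $Q_\omega$ and commutes with every $Q_\omega$, so $ZQ_\omega K_t Q_\omega = -i\omega\,Q_\omega K_t Q_\omega = Q_\omega K_t Q_\omega Z$ and hence $\comm{Z}{K_{t}^{\sharp}} = 0$; uniqueness of solutions of \eqref{eq:YdiffEq} then forces $\comm{Z}{Y_t}=0$ as well. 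Differentiating \eqref{eq:FlambdaDef} and using this commutation gives $\dot F_{t}^{\lambda} = (\lambda^{-2}Z + K_{t}^{\sharp})F_{t}^{\lambda}$ with $F_{0}^{\lambda}=\id{}$, to be contrasted with $\dot{\hat W}_{t}^{\lambda} = (\lambda^{-2}Z + K_t)\hat W_{t}^{\lambda}$ from the preceding proposition. Passing to the interaction picture, $e^{-\lambda^{-2}Zt}F_{t}^{\lambda} = Y_t$ (again by $\comm{Z}{Y_t}=0$), whereas $\tilde W_{t}^{\lambda} := e^{-\lambda^{-2}Zt}\hat W_{t}^{\lambda}$ solves the Volterra equation $\tilde W_{t}^{\lambda} = \id{} + \int_{0}^{t}\mathfrak K_{s}^{\lambda}\tilde W_{s}^{\lambda}\,ds$ with kernel $\mathfrak K_{s}^{\lambda} = e^{-\lambda^{-2}Zs}K_s e^{\lambda^{-2}Zs} = \sum_{\omega\omega'}e^{i\lambda^{-2}(\omega-\omega')s}Q_\omega K_s Q_{\omega'}$, whose diagonal ($\omega=\omega'$) part is exactly $K_{s}^{\sharp}$. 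Subtracting the two Volterra equations, inserting $\comm{Q_{\omega'}}{Y_s}=0$, and applying the uniform Riemann--Lebesgue estimate \eqref{eq:RL} to each off-diagonal term $\int_{0}^{t}e^{i\lambda^{-2}(\omega-\omega')s}Q_\omega K_s Y_s Q_{\omega'}\,ds$ --- whose phase does not vanish precisely because $\omega\neq\omega'$, so that \emph{no $\Omega$-congruence hypothesis is needed here} --- a Gr\"{o}nwall (equivalently, Dyson-series) argument structurally identical to Propositions~\ref{prop:VolterraConvergence} and~\ref{prop:FinalWCL} yields $\sup_{t\in[0,t_\ast]}\|\tilde W_{t}^{\lambda}-Y_t\|\to 0$ as $\lambda\to 0^+$. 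Since $e^{\lambda^{-2}Zt}$ restricts to a trace-norm isometry of $\bzero$, this is exactly $\|h_\lambda - g_\lambda\|_{\mathscr V}\to 0$.

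For claim~(\ref{claim:WCLresultSmallOmegaTwo}) I would first identify $K_{t}^{\sharp}$ on $\bzero\simeq\matrd_1$: restricting \eqref{eq:KtauA}, expanding the resulting double commutator as in Lemma~\ref{lemma:XiInL1}, substituting the interaction-picture expansion \eqref{eq:SmuIPexpansion} of the $S_\mu$, performing the time-averaging $\sharp$ (which kills the cross-terms $e^{is(\omega-\omega')}$, $\omega\neq\omega'$, exactly as in the proof of Proposition~\ref{prop:FinalWCL}), and splitting the one-sided Fourier transforms of the autocorrelation functions $f_{\mu\nu}$ into real and imaginary parts as in \eqref{eq:OneSidedFourier}, one arrives at $K_{t}^{\sharp}(\rho\otimes\envstate) = \big(-i\comm{\delta H(t)}{\rho}+G_t(\rho)\big)\otimes\envstate$, with $\delta H(t)$ Hermitian (a time-dependent Lamb shift) and $G_t(\rho) = \sum_{\mu\nu\omega}\gamma_{\nu\mu}(t,\omega)\big(S_{\mu\omega}\rho S_{\nu\omega}^{\hadj}-\tfrac{1}{2}\acomm{S_{\nu\omega}^{\hadj}S_{\mu\omega}}{\rho}\big)$, where the rate matrix $[\gamma_{\mu\nu}(t,\omega)]$ is --- by the same Schur-product argument as in Proposition~\ref{prop:FinalWCL} --- positive semidefinite for each $t$ and $\omega$, being a Hadamard product of a rank-one positive matrix formed from the values $\{g_\mu(t)\}$ with the positive matrix $[h_{\mu\nu}(\omega)]$ of Fourier-transformed autocorrelations; moreover $\delta H(\cdot)$ and $\gamma_{\cdot\cdot}(\cdot,\omega)$ are piecewise continuous and $T$-periodic because the $g_\mu$ are. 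Thus $t\mapsto K_{t}^{\sharp}|_{\bzero}$ is a $T$-periodic family of generators in standard form. With $\Lambda_t$ read off from $F_{t}^{\lambda}(\rho_0\otimes\envstate) = \Lambda_t(\rho_0)\otimes\envstate$, the factorisation $F_{t}^{\lambda} = Y_t e^{\lambda^{-2}Zt}$ gives $\Lambda_t = \mathcal Y_t\circ\mathcal U_t$, with $\mathcal Y_t$ the propagator of $\dot{\mathcal Y}_t = (-i\comm{\delta H(t)}{\cdot}+G_t)\mathcal Y_t$ and $\mathcal U_t(\rho) = e^{-i\lambda^{-2}tH}\rho\,e^{i\lambda^{-2}tH}$; the first map is completely positive and trace preserving as the solution of a Lindblad equation in standard form with bounded, piecewise-continuous coefficients, the second is a unitary conjugation, and their composition is therefore CPTP. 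Differentiating and using the covariance property $\comm{-i\comm{H}{\cdot}}{-i\comm{\delta H(t)}{\cdot}+G_t}=0$ (which follows from \eqref{eq:ScommRel}--\eqref{eq:dHGcommRel} exactly as in Proposition~\ref{prop:FinalWCL}) shows $\dot{\Lambda}_t = L_t\Lambda_t$ with the time-local, $T$-periodic, standard-form Lindbladian $L_t = -i\lambda^{-2}\comm{H}{\cdot} - i\comm{\delta H(t)}{\cdot} + G_t$.

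To exhibit the Floquet normal form \eqref{eq:FloquetLambda}, note that $\mathcal Y_t$ is the fundamental solution of a linear ODE with $T$-periodic, piecewise-continuous coefficients, so $\mathcal Y_T$ is invertible and Floquet's theorem gives $\mathcal Y_t = \Pi_t e^{tB}$ with $\Pi_t$ $T$-periodic and $TB$ a logarithm of $\mathcal Y_T$ chosen through the holomorphic functional calculus. Since $-i\comm{H}{\cdot}$ commutes with every $K_{t}^{\sharp}$ it commutes with $\mathcal Y_T$, hence with $B$; therefore $e^{tB}$ and $\mathcal U_t$ commute and $\Lambda_t = \Pi_t\, e^{t(B - i\lambda^{-2}\comm{H}{\cdot})} =: P_t e^{tX}$ with $P_t := \Pi_t$ $T$-periodic, which proves claim~(\ref{claim:WCLresultSmallOmegaTwo}).

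The main obstacle is claim~(\ref{claim:WCLresultSmallOmegaOne}): controlling the $\lambda\to 0^+$ averaging uniformly on $[0,t_\ast]$ through the \emph{entire} Dyson expansion of $\tilde W_{t}^{\lambda}$, i.e.~verifying that the iterated integrals of the fast-oscillating kernel $\mathfrak K_{s}^{\lambda}$ collapse to those of $K_{s}^{\sharp}$ and that the series tail is uniformly summable. This is precisely the content of the Davies--Spohn adiabatic theorem; since the requisite Volterra/Gr\"{o}nwall estimates were already assembled for the weak-coupling regime, they transfer here after only bookkeeping changes (replacing operator-norm bounds by supremum norms of the bounded steering functions $g_\mu$), so no genuinely new difficulty arises. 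In claim~(\ref{claim:WCLresultSmallOmegaTwo}) the only real subtlety is the positive semidefiniteness of $[\gamma_{\mu\nu}(t,\omega)]$, which is settled by the Schur-product observation above.
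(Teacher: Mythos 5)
Your proposal is correct and follows essentially the same route as the paper: claim (\ref{claim:WCLresultSmallOmegaOne}) is the Davies--Spohn adiabatic averaging argument (which the paper simply cites, while you spell out the diagonal/off-diagonal splitting of the interaction-picture kernel and the Riemann--Lebesgue/Gr\"{o}nwall step), and claim (\ref{claim:WCLresultSmallOmegaTwo}) rests on the same explicit computation of $K_t^{\sharp}$ in standard Lindblad form, the commutation $\comm{Y_t}{e^{\lambda^{-2}Zt}}=0$, and Floquet's theorem. Your Schur-product justification of the positive semidefiniteness of $[\gamma_{\mu\nu}(t,\omega)]$ and your explicit assembly of $P_t e^{tX}$ are merely more detailed than the paper's corresponding steps, not a different method.
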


\begin{proof}
Claim (\ref{claim:WCLresultSmallOmegaOne}) is just a slightly modified version of similar result in adiabatic framework \cite[Theorem 2]{Davies1978} and is proved in exactly same way, after putting constant projection operators $Q_\omega$ and simplifying accordingly. Proof of claim (\ref{claim:WCLresultSmallOmegaOne}), by its obvious similarity to proposition \ref{prop:FinalWCL}, is virtually identical. First, equation \eqref{eq:YdiffEq} allows to express $Y_t$ as a series
\begin{equation}\label{eq:Yexplicit}
	Y_t = \id{} + \sum_{n=1}^{\infty} \int\limits_{0}^{t} dt_1 \int\limits_{0}^{t_1} dt_2 \, ... \, \int\limits_{0}^{t_{n-1}} K_{t_1}^{\sharp} K_{t_2}^{\sharp} \, ... \, K_{t_n}^{\sharp}\,dt_n.
\end{equation}
By formula \eqref{eq:Xexplicit} and by property of spectral projections $Q_\omega Q_{\omega'} = \delta_{\omega\omega'}Q_{\omega'}$, one quickly shows that $e^{\lambda^{-2}Zt}$ commutes with every element of the series in \eqref{eq:Yexplicit} and hence $\comm{Y_t}{e^{\lambda^{-2}Zt}} = 0$.

A computation similar to \eqref{eq:KtildeOverB0} allows to find an action of time-averaged memory kernel $K_{t}^{\sharp}$ on $\rho\otimes\envstate$,
\begin{align}
	K_{t}^{\sharp}(\rho\otimes\envstate) &= \lim_{a\to\infty} \frac{1}{2a} \int\limits_{-a}^{a} dy \, e^{yZ} \int\limits_{0}^{\infty} \proj{0} e^{-xZ} A_t e^{xZ} A_t \proj{0} e^{-yZ} \, dx \\
	&= \Big( -i \comm{\delta H_t}{\rho} + G_t (\rho) \Big) \otimes \envstate \nonumber,
\end{align}
where time-dependent Lamb shift term $\delta H_t$ and map $G_t$ are given by
\begin{subequations}
	\begin{equation}
		\delta H_t = \sum_{\omega \mu\nu} \xi_{\nu\mu}(\omega,t) S_{\nu\omega}^{\hadj} S_{\mu\omega},
	\end{equation}
	\begin{equation}
		G_t (\rho) = \sum_{\gamma_{\nu\mu}}  \gamma_{\nu\mu}(\omega,t) \left( S_{\mu\omega} \rho S_{\nu\omega}^{\hadj} - \frac{1}{2}\acomm{S_{\nu\omega}^{\hadj}S_{\mu\omega}}{\rho} \right),
	\end{equation}
\end{subequations}
for functions
\begin{equation}
	\xi_{\nu\mu}(\omega,t) = g_{\mu}(t) \overline{g_{\nu}(t)} \zeta_{\nu\mu}(\omega), \quad \gamma_{\nu\mu}(\omega,t) = g_{\mu}(t)\overline{g_{\nu}(t)} h_{\nu\mu}(\omega).
\end{equation}
In the above, $h_{\nu\mu}(\omega)$ and $\zeta_{\nu\mu}(\omega)$ again stand for real and imaginary parts of one-sided Fourier transforms of environment's autocorrelation functions \eqref{eq:OneSidedFourier}. Matrices $S_{\mu\omega}$ are defined identically as in earlier section and satisfy \eqref{eq:SmuIPexpansion}. By direct check, matrices $[\xi_{\mu\nu}(\omega,t)]_{\mu\nu}$ and $[\gamma_{\mu\nu}(\omega,t)]_{\mu\nu}$ are positive semi-definite for all frequencies $\omega$ and all $t\in\reals_+$. Note, that due to periodicity of functions $g_\mu$, function $t\mapsto K_{t}^{\sharp}$ is periodic as well. Hence,
\begin{equation}
	-i \comm{\delta H_t}{\cdot\,} + G_t, \quad t\in[0,T),
\end{equation}
constitutes for periodic Lindbladian in standard (Lindblad-Gorini-Kossakowski-Sudarshan) form. Now, by applying \eqref{eq:ScommRel}, one obtains, similarly as before, commutation relations equivalent to \eqref{eq:dHGcommRel} with $\delta H$ and $G$ replaced by their appropriate time-dependent counterparts, i.e.~the covariance property
\begin{equation}\label{eq:CovPropA}
	\comm{K_{t}^{\sharp}}{Z} = 0
\end{equation}
holds on $\bzero$ in adiabatic regime, leading immediately to $\comm{Y_t}{Z} = 0$ on $\bzero$ due to \eqref{eq:Yexplicit}. This, together with \eqref{eq:YdiffEq}, \eqref{eq:FlambdaDef} and commutativity condition $\comm{Y_t}{e^{\lambda^{-2}Zt}} = 0$ finally allows to compute
\begin{align}
	\frac{dF^{\lambda}_{t}}{dt} &= K_{t}^{\sharp}Y_t e^{\lambda^{-2}Zt} + \lambda^{-2} Y_t Z e^{\lambda^{-2}Zt} = (\lambda^{-2}Z+K_{t}^{\sharp})F_{t}^{\lambda},
\end{align}
which, for $v_0 = \rho_0 \otimes \envstate$ and $\Lambda_t (\rho_0) = \rho_t$, yields
\begin{align}
	\frac{dF^{\lambda}_{t}}{dt}(v_0) &= (\lambda^{-2}Z+K_{t}^{\sharp})F_{t}^{\lambda}(\rho_0 \otimes \envstate) \\
	&= \left( -i\lambda^{-2}\comm{H}{\rho_t} -i\comm{\delta H_t}{\rho_t} + G_t (\rho_t) \right)\otimes\envstate \nonumber \\
	&= \frac{d\Lambda_t}{dt}(\rho_0)\otimes\envstate , \nonumber
\end{align}
i.e.~$\dot{\Lambda}_t = L_t \Lambda_t$ for periodic Lindladian $L_t = -i\comm{\lambda^{-2}H+\delta H_t}{\cdot\,} + G_t$ in standard form. Finally, due to periodicity of $L_t$, decomposition \eqref{eq:FloquetLambda} of $\Lambda_t$ is a direct consequence of celebrated \emph{Floquet theorem} \cite{Chicone2006} and is known as the \emph{Floquet normal form}.
\end{proof}

\begin{corollary}
Again, we summarize by noting, that the Markovian Master Equation satisfied by $\rho_t$ in the regime of small $\Omega$ will be given, under the adiabatic limit approach, by
\begin{equation}
	\frac{d\rho_t}{dt} = -i \comm{H_\mathrm{eff.}(t)}{\rho_t} + \sum_{\mu\nu\omega}\gamma_{\nu\mu}(\omega,t) \left( S_{\mu\omega}\rho_t S_{\nu\omega}^{\hadj} - \frac{1}{2}\acomm{S_{\nu\omega}^{\hadj}S_{\mu\omega}}{\rho_t} \right),
\end{equation}
for periodic effective Hamiltonian $H_\mathrm{eff.}(t)$ and periodic dissipation term.
\end{corollary}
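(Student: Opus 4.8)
The plan is to follow the two-step pattern of the Davies--Spohn adiabatic theorem \cite{Davies1978}: first establish the asymptotic equivalence of the explicit approximation $F_{t}^{\lambda}$ with the net $(g_\lambda)_\lambda$, and then read off complete positivity, the Lindblad structure and the Floquet decomposition directly from the closed form of $F_{t}^{\lambda}$. For claim (\ref{claim:WCLresultSmallOmegaOne}) I would observe that the present setting differs from \cite[Theorem 2]{Davies1978} only in that the spectral projections $Q_\omega$ of $Z$ are now \emph{constant}, which can only simplify the argument; the proof then reduces to showing that $F_{t}^{\lambda}$ solves $\dot F_{t}^{\lambda} = (\lambda^{-2}Z + K_{t}^{\sharp})F_{t}^{\lambda}$ whereas $g_\lambda$ solves the equation with generator $\lambda^{-2}Z + K_t$, and that the off-diagonal part $K_t - K_{t}^{\sharp}$ of the memory kernel (in the $\omega$-grading) is averaged to $0$ against the rapidly rotating semigroup $e^{\lambda^{-2}Zt}$. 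This is a two-timescale / Riemann--Lebesgue estimate of precisely the same type as in Propositions \ref{proposition:KlambdaConv} and \ref{prop:VolterraConvergence}, now carried out for the rescaled-periodic, piecewise-continuous steering functions $g_\mu$.

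For claim (\ref{claim:WCLresultSmallOmegaTwo}) I would first set up the commutation algebra. Expanding \eqref{eq:YdiffEq} gives the Dyson series \eqref{eq:Yexplicit} for $Y_t$; each $K_{t}^{\sharp} = \sum_\omega Q_\omega K_t Q_\omega$ commutes with $Z$ by the covariance property, which itself follows from the eigenrelation \eqref{eq:ScommRel} of the operators $S_{\mu\omega}$. Hence every term of the series commutes with $Z$, so $\comm{Y_t}{Z}=0$ and therefore $\comm{Y_t}{e^{\lambda^{-2}Zt}}=0$. Differentiating $F_{t}^{\lambda} = Y_t e^{\lambda^{-2}Zt}$ with the help of this commutativity yields $\dot F_{t}^{\lambda} = (\lambda^{-2}Z + K_{t}^{\sharp})F_{t}^{\lambda}$, which on a factorised state $v_0 = \rho_0\otimes\envstate$ becomes $\dot\Lambda_t = L_t\Lambda_t$ once the action of $K_{t}^{\sharp}$ on $\bzero$ is made explicit (block-diagonality $\rho\otimes\envstate \mapsto (\cdot)\otimes\envstate$ is inherited from $e^{\lambda^{-2}Zt}$ and from $K_t^{\sharp}$, so the map $\Lambda_t$ on $\matrd_1$ is well defined).

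That explicit form is obtained by a time-averaging computation mirroring \eqref{eq:KtildeOverB0}: expand the double commutator in $K_{t}^{\sharp}(\rho\otimes\envstate)$, insert the interaction-picture expansion \eqref{eq:SmuIPexpansion} of the $S_\mu$, suppress the fast-oscillating factors $e^{is(\omega-\omega')}$ via the averaging $\lim_{a\to\infty}\frac{1}{2a}\int_{-a}^{a}$, and split the one-sided Fourier transforms of the autocorrelation functions into real and imaginary parts through \eqref{eq:OneSidedFourier}. This produces $K_{t}^{\sharp}(\rho\otimes\envstate) = \bigl(-i\comm{\delta H_t}{\rho} + G_t(\rho)\bigr)\otimes\envstate$ with the stated $\delta H_t$ and $G_t$. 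Complete positivity then follows because the coefficient matrices $[\gamma_{\mu\nu}(\omega,t)]_{\mu\nu}$ and $[\xi_{\mu\nu}(\omega,t)]_{\mu\nu}$ are Hadamard products of the rank-one positive-semidefinite matrices $[\overline{g_\mu(t)}g_\nu(t)]_{\mu\nu}$ with $[h_{\mu\nu}(\omega)]_{\mu\nu}$, respectively $[\zeta_{\mu\nu}(\omega)]_{\mu\nu}$, the latter being positive semi-definite by Bochner/Stone; the Schur product theorem preserves positivity. Thus $L_t = -i\comm{\lambda^{-2}H + \delta H_t}{\cdot\,} + G_t$ is a Lindbladian in standard form for every $t$ (hence $\Lambda_t$ is trace preserving), and it is $T$-periodic since the $g_\mu$ are. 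Finally, Floquet's theorem \cite{Chicone2006} applied to the linear $T$-periodic ODE $\dot\Lambda_t = L_t\Lambda_t$ gives the Floquet normal form $\Lambda_t = P_t e^{tX}$ with $X = T^{-1}\log\Lambda_T$ a logarithm of the monodromy operator and $P_t = \Lambda_t e^{-tX}$ the $T$-periodic factor, which is claim (\ref{claim:WCLresultSmallOmegaTwo}).

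The step I expect to be the main obstacle is claim (\ref{claim:WCLresultSmallOmegaOne}): unlike in the large-$\Omega$ regime, the separation of timescales here is genuine, so one must control the slow, merely piecewise-continuous modulation simultaneously with the fast free rotation. This, however, is exactly the situation the Davies--Spohn adiabatic theorem is designed for, so the genuinely new work is confined to the Lindblad computation of $K_{t}^{\sharp}$ and the complete-positivity check, both of which are routine once the earlier convergence results are in hand.
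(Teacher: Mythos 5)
Your proposal is correct and takes essentially the same route as the paper: the corollary is simply read off from Proposition~\ref{prop:Afinal}, whose proof you reproduce step for step (Dyson series for $Y_t$, covariance property $\comm{K_t^{\sharp}}{Z}=0$, time-averaging of the memory kernel to obtain $\delta H_t$ and $G_t$ with coefficients $\gamma_{\nu\mu}(\omega,t)=g_\mu(t)\overline{g_\nu(t)}h_{\nu\mu}(\omega)$, and Floquet's theorem for the normal form). Your only addition is to make the paper's ``direct check'' of positive semi-definiteness explicit via the Schur product theorem applied to the rank-one matrix $[\overline{g_\mu(t)}g_\nu(t)]$ and $[h_{\mu\nu}(\omega)]$, which is a welcome but minor refinement rather than a different argument.
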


\section{Uniform periodic steering}
\label{sec:UPS}

In this section we present some closer insight into properties of quantum dynamical maps under \emph{uniform} periodic steering scheme, i.e.~in presence of only one steering function $g : [0,T) \mapsto \reals$. Namely, we assume the interaction Hamiltonian is of simpler form
\begin{equation}
	H_{\mathrm{int.}}(t) = g(t) \sum_\mu S_\mu \otimes R_\mu .
\end{equation}
We focus only on the second case of asymptotically small modulation frequency $\Omega$. Proposition \ref{prop:Afinal} allows to put the Markovian Master Equation (under adiabatic limit regime) for subsystem S in a form
\begin{equation}\label{eq:PeriodicUniLind1}
	\frac{d\rho_t}{dt} = L_t (\rho_t) = -i\comm{H_\mathrm{eff.}(t)}{\rho_t} + G_t (\rho_t),
\end{equation}
where $G_t$ is a periodic Lindbladian in standard form,
\begin{equation}\label{eq:PeriodicUniLind2}
	G_t = |g(t)|^2 D, \quad D(\rho)=\sum_{\mu\nu\omega}  h_{\nu\mu}(\omega) \left( S_{\mu\omega} \rho S_{\nu\omega}^{\hadj} - \frac{1}{2}\acomm{S_{\nu\omega}^{\hadj}S_{\mu\omega}}{\rho} \right).
\end{equation}

\subsection{Floquet normal form of dynamical map}

The reason standing behind such severe simplification is the feasibility and accessibility of mathematical description of resulting reduced dynamics. We emphasize that despite the fact that the existence of general, product structure
\begin{equation}\label{eq:FloquetNormalForm}
	\Lambda_t = P_t e^{tX}
\end{equation}
of fundamental solution of the Markovian Master Equation given in proposition \ref{prop:Afinal}, or the so-called \emph{Floquet normal form}, is guaranteed by Floquet theorem \cite{Chicone2006}, not much can be said about both maps $P_t$ and $e^{tX}$ regarding its mutual complete positivity or trace preservation. This restriction can be however entirely lifted by assuming the family $\{L_t : t\in\reals_+\}$ of Lindbladians is \emph{commutative}, i.e.~it satisfies, for all $t,s \in\reals_+$ and all $\rho\in\matrd$, the condition
\begin{equation}\label{eq:CommCond}
	L_t L_s (\rho) = L_s L_t (\rho).
\end{equation}
In such commutative case, one can find explicit expressions for pair $(P_t,e^{tX})$ as solving the Markovian Master Equation (MME) does not require invoking any cumbersome time-ordering procedure or considering convergence of perturbative series expansions, and is simply reduced to calculating
\begin{equation}
	\Lambda_t = \exp{\int\limits_{0}^{t} L_{t'} \, dt'},
\end{equation}
which may be always achieved by considering the isomorphic representation of maps over $\matrd_1$ as matrices of size $d^2$ and then proper exponentiating.

One easily shows that the commutativity condition \eqref{eq:CommCond} indeed applies in the scheme of uniform steering:

\begin{proposition}\label{prop:CLF}
Family $\{L_t : t\in\reals_+\}$ given by \eqref{eq:PeriodicUniLind1} and \eqref{eq:PeriodicUniLind2} is commutative. The Floquet normal form \eqref{eq:FloquetNormalForm} of induced dynamical map is then given by
\begin{subequations}
	\begin{equation}\label{eq:FloquetPt}
		P_t = \exp{\left[ -i\comm{\mathcal{H}_t - \frac{t}{T}\mathcal{H}_T}{\cdot\,} + \left( \Gamma(t) - \frac{t}{T}\Gamma(T) \right) D \right]},
	\end{equation}
	\begin{equation}\label{eq:FloquetX}
		X = -\frac{i}{T}\comm{\mathcal{H}_T}{\cdot\,} + \frac{\Gamma(T)}{T} D,
	\end{equation}
\end{subequations}
where $\Gamma(t)$ and $\mathcal{H}_t$ are given as antiderivatives
\begin{equation}
	\Gamma(t) = \int\limits_{0}^{t} |g(t')|^2 \, dt', \quad \mathcal{H}_t = \int\limits_{0}^{t} H_\mathrm{eff.}(t') \, dt' .
\end{equation}
\end{proposition}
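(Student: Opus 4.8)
The plan is to prove the two assertions of Proposition \ref{prop:CLF} in sequence: first the commutativity of the Lindbladian family, and then the explicit Floquet factorization, which will follow from the commutative Floquet theory developed in \cite{Szczygielski_2021}.

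\textbf{Step 1: Commutativity.} By \eqref{eq:PeriodicUniLind1} and \eqref{eq:PeriodicUniLind2} we have $L_t = -i\comm{H_\mathrm{eff.}(t)}{\cdot\,} + |g(t)|^2 D$. The key structural input is that $H_\mathrm{eff.}(t) = H + \delta H_t$ where, in the uniform-steering case, $\delta H_t = |g(t)|^2 \sum_{\mu\nu\omega}\zeta_{\nu\mu}(\omega) S_{\nu\omega}^{\hadj}S_{\mu\omega}$, so that $\delta H_t = |g(t)|^2 \, \delta H$ for a fixed Hermitian operator $\delta H$. Thus $L_t = -i\comm{H}{\cdot\,} + |g(t)|^2\bigl(-i\comm{\delta H}{\cdot\,} + D\bigr) =: Z_0 + |g(t)|^2 M$, where $Z_0 = -i\comm{H}{\cdot\,}$ and $M$ are both \emph{time-independent}. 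Then $L_tL_s - L_sL_t = |g(t)|^2\comm{Z_0}{M} + |g(s)|^2\comm{M}{Z_0} = (|g(t)|^2 - |g(s)|^2)\comm{Z_0}{M}$, so commutativity reduces to $\comm{Z_0}{M} = 0$, i.e.\ $\comm{\comm{H}{\cdot\,}}{-i\comm{\delta H}{\cdot\,}+D} = 0$. This is exactly the covariance property: $\comm{\delta H}{H}=0$ and $\comm{D}{\comm{H}{\cdot\,}}=0$ follow from the commutation relation \eqref{eq:ScommRel}, $\comm{H}{S_{\mu\omega}} = \omega S_{\mu\omega}$, precisely as established in \eqref{eq:dHGcommRel}. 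Hence $L_tL_s = L_sL_t$ for all $t,s$.

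\textbf{Step 2: The propagator as a single exponential.} Since $\{L_t\}$ is commutative, the standard argument (e.g.\ differentiating and using that $\int_0^t L_{t'}\,dt'$ commutes with $L_t$) gives $\Lambda_t = \exp\bigl(\int_0^t L_{t'}\,dt'\bigr)$. Using the decomposition $\int_0^t L_{t'}\,dt' = -i\comm{\mathcal{H}_t}{\cdot\,} + \Gamma(t) D$ with $\mathcal{H}_t = \int_0^t H_\mathrm{eff.}(t')\,dt'$ and $\Gamma(t)=\int_0^t|g(t')|^2\,dt'$ as defined in the statement, we get $\Lambda_t = \exp\bigl(-i\comm{\mathcal{H}_t}{\cdot\,} + \Gamma(t)D\bigr)$.

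\textbf{Step 3: Extracting the Floquet normal form.} Following the commutative Floquet construction of \cite{Szczygielski_2021}, one sets $X := \frac{1}{T}\int_0^T L_{t'}\,dt' = -\frac{i}{T}\comm{\mathcal{H}_T}{\cdot\,} + \frac{\Gamma(T)}{T}D$, which is \eqref{eq:FloquetX}, and defines $P_t := \Lambda_t e^{-tX}$. It remains to verify that $P_t$ is periodic and equals \eqref{eq:FloquetPt}. Because all the generators in sight ($-i\comm{\mathcal{H}_t}{\cdot\,}$, $D$, $X$) pairwise commute --- this again rests on the covariance property $\comm{D}{\comm{H}{\cdot\,}}=0$ together with $\comm{\delta H}{H}=0$, which makes the $\mathcal{H}_t$-commutators mutually commuting for all $t$ --- the exponentials multiply additively, giving $P_t = \exp\bigl(\int_0^t L_{t'}\,dt' - tX\bigr) = \exp\bigl(-i\comm{\mathcal{H}_t - \tfrac{t}{T}\mathcal{H}_T}{\cdot\,} + (\Gamma(t) - \tfrac{t}{T}\Gamma(T))D\bigr)$, which is exactly \eqref{eq:FloquetPt}. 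Periodicity $P_{t+T} = P_t$ follows since $\mathcal{H}_{t+T} - \mathcal{H}_t = \mathcal{H}_T$ (as $H_\mathrm{eff.}$ is $T$-periodic) and likewise $\Gamma(t+T) - \Gamma(t) = \Gamma(T)$, so the argument of the exponential is unchanged under $t\mapsto t+T$.

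\textbf{Main obstacle.} The only genuinely substantive point is Step 1 --- verifying that the covariance property forces $\comm{Z_0}{M}=0$ and thereby commutativity --- and, relatedly, the observation in Step 3 that \emph{all} the building-block generators commute pairwise (not merely $L_t$ with $L_s$), which is what licenses the clean additive splitting of the exponentials; without covariance one would be stuck with a Dyson-series time-ordering and no closed form. Everything else is bookkeeping: assembling the antiderivatives, invoking $\Lambda_t = \exp\int_0^t L_{t'}$ in the commutative case, and reading off $P_t$, $X$ from the commutative Floquet theorem of \cite{Szczygielski_2021}.
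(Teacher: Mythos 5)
Your proof is correct and follows essentially the same route as the paper: commutativity via the covariance property $\comm{\delta H}{H}=0$, $\comm{D}{\comm{H}{\cdot\,}}=0$, the exact exponential solution $\Lambda_t=\exp\int_0^t L_{t'}\,dt'$, and reading off $X$ and $P_t=\Lambda_t e^{-tX}$. Your decomposition $L_t = Z_0 + |g(t)|^2 M$ and the explicit periodicity check merely spell out details the paper leaves as "immediate" and "straightforward."
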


\begin{proof}
Commutativity condition \eqref{eq:CommCond} comes immediately after employing equalities $\comm{\delta H_t}{H} = 0$ and $\comm{G_t}{\comm{H}{\cdot\,}} = 0$. Next, direct differentiation of operator-valued function
\begin{equation}
	\Phi_t = \exp\Big( -i \comm{\mathcal{H}_t}{\cdot\,} + \Gamma(t) D \Big)
\end{equation}
shows that $\Phi_t$ is the actual \emph{solution} of Master Equation, $\dot{\Phi}_t = L_t \Phi_t$, which means that $\Lambda_t = \Phi_t$. Taking the logarithm of $\Lambda_T$, one arrives at \eqref{eq:FloquetX}. Then, remaining formula \eqref{eq:FloquetPt} is obtained by calculating $P_t = \Lambda_t e^{-tX}$. Periodicity of $P_t$ is then straightforward.
\end{proof}

It was recently shown in \cite{Szczygielski_2021} that commutativity condition imposed on $L_t$ brings some interesting results regarding mutual algebraic properties of both maps $P_t$, $e^{tX}$ of Floquet normal form of $\Lambda_t$. In particular, it was shown that $e^{tX}$ \emph{always} constitutes for completely positive and trace preserving contraction semigroup (i.e.~the quantum dynamical semigroup) and is Markovian in commutative setting. Surprisingly however, \emph{global} Markovianity of $P_t$ is forbidden. The following result is then an immediate corollary of these observations and is proven by employing theorems 2 and 3 of \cite{Szczygielski_2021}:

\begin{proposition}\label{prop:FloquetNormalForm}
The Floquet normal form $(P_t, e^{tX})$ satisfies the following:
\begin{enumerate}
	\item Family $\{e^{tX} : t\in\reals_+\}$ is a Markovian contraction semigroup;
	\item $P_t$ is Markovian in some interval $[t_1, t_2]\subset [0,T)$ if and only if
	\begin{equation}
		|g(t)|^2 \geqslant \frac{\Gamma(T)}{T}
	\end{equation}
	for all $t\in [t_1, t_2]$, and is completely positive for some $t\in [0,T)$, if
	\begin{equation}
		\Gamma(t) \geqslant \frac{t}{T}\Gamma(T) ;
	\end{equation}
	\item $P_t$ is globally Markovian only in trivial case of constant function $g(t)$. Otherwise, there exists a non-empty union $\mathcal{N}\subset[0,T)$ of intervals such that $P_t$ is not allowed to be Markovian anywhere in $\mathcal{N}$.
\end{enumerate}
\end{proposition}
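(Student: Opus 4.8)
The plan is to read the statement off from \cite[Theorems~2 and~3]{Szczygielski_2021}, which describe exactly when the two factors of the Floquet normal form of a \emph{commutative} periodic Lindbladian family are completely positive, respectively Markovian, and to feed in the explicit pair $(P_t,e^{tX})$ computed in proposition \ref{prop:CLF}. The single structural input behind everything is that $D$ in \eqref{eq:PeriodicUniLind2} is a dissipator in standard Lindblad form: its Kossakowski matrix $[h_{\nu\mu}(\omega)]_{\mu\nu}$ is positive semi-definite by the Bochner/Stone argument recalled in section \ref{sec:WCL}. Consequently, for $K$ Hermitian and $c\in\reals$, the map $-i\comm{K}{\cdot\,}+cD$ is a Lindbladian (hence generates a completely positive trace preserving semigroup) if and only if $c\geqslant0$ --- provided $D\neq0$, the degenerate case $D=0$ making every claim hold vacuously, which I would dispose of first. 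First I would also rewrite \eqref{eq:FloquetX}--\eqref{eq:FloquetPt} using $\int_0^t L_{s}\,ds=-i\comm{\mathcal H_t}{\cdot\,}+\Gamma(t)D$, so that $X=\tfrac1T\int_0^TL_s\,ds$ and $P_t=\exp(M_t)$ with $M_t=\int_0^tL_s\,ds-\tfrac tT\int_0^TL_s\,ds$.

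Claim~(1) and the complete-positivity half of claim~(2) then follow simply by exhibiting Lindbladians. Since $\Gamma(T)=\int_0^T|g(t')|^2\,dt'\geqslant0$, the map $X$ in \eqref{eq:FloquetX} is of Lindblad form, so $\{e^{tX}:t\in\reals_+\}$ is a quantum dynamical semigroup; being a semigroup, it is trivially CP-divisible, i.e.~Markovian, a contraction in trace norm, and trace preserving. Likewise, whenever $\Gamma(t)-\tfrac tT\Gamma(T)\geqslant0$ the exponent $M_t=-i\comm{\mathcal H_t-\tfrac tT\mathcal H_T}{\cdot\,}+\big(\Gamma(t)-\tfrac tT\Gamma(T)\big)D$ is a Lindbladian, and therefore $P_t=\exp(M_t)$ is completely positive and trace preserving.

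For the Markovianity criterion in claim~(2) the key point is to compute the instantaneous generator of $P_t$. Because the family $\{L_t\}$ is commutative (proposition \ref{prop:CLF}), all the $M_t$ commute with one another and with their derivatives, so $\dot P_t=\dot M_tP_t$ and
\begin{equation}
	\dot P_t P_t^{-1} = \dot M_t = L_t - \tfrac1T\int_0^T L_s\,ds = -i\comm{H_{\mathrm{eff.}}(t)-\tfrac1T\mathcal H_T}{\cdot\,} + \Big(|g(t)|^2-\tfrac{\Gamma(T)}{T}\Big)D .
\end{equation}
By \cite[Theorem~2]{Szczygielski_2021}, $P_t$ is Markovian on $[t_1,t_2]\subset[0,T)$ precisely when this generator is a standard Lindbladian for a.e.~$t$ there; by the structural fact above this holds if and only if $|g(t)|^2\geqslant\tfrac{\Gamma(T)}{T}$ on $[t_1,t_2]$, which is the asserted condition.

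The last claim is where the only genuinely new (but elementary) argument enters, and the step I expect to demand the most care. Global Markovianity of $P_t$ would force $|g(t)|^2\geqslant\tfrac{\Gamma(T)}{T}$ for every $t\in[0,T)$; but $\tfrac{\Gamma(T)}{T}=\tfrac1T\int_0^T|g(t')|^2\,dt'$ is exactly the mean of $|g|^2$ over a period, so integrating the pointwise inequality forces it to be an equality a.e., whence $|g|^2$ --- and, $g$ being real and piecewise continuous, $g$ itself --- is constant. Conversely, if $g$ is not constant then $\mathcal N=\{t\in[0,T):|g(t)|^2<\tfrac{\Gamma(T)}{T}\}$ has positive Lebesgue measure (else $|g|^2$ would again equal its mean a.e.) and, by piecewise continuity of $|g|^2$, contains a non-empty finite union of open intervals; on each of these the dissipative coefficient $|g(t)|^2-\tfrac{\Gamma(T)}{T}$ of $\dot M_t$ is strictly negative, so $\dot M_t$ is not a Lindbladian and \cite[Theorem~3]{Szczygielski_2021} forbids $P_t$ from being Markovian anywhere in $\mathcal N$. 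The main obstacle is conceptual rather than computational: one must fix the reading of ``Markovian'' as CP-divisibility, i.e.~the instantaneous generator $\dot P_tP_t^{-1}$ being of Lindblad form a.e., be precise that it is commutativity of $\{L_t\}$ that legitimises the identity $\dot P_tP_t^{-1}=\dot M_t$, and isolate the degenerate case $D=0$ before asserting any ``only if'' direction.
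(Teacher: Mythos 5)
Your proposal is correct and follows essentially the same route as the paper, which gives no separate argument for this proposition beyond declaring it an immediate corollary of Theorems 2 and 3 of \cite{Szczygielski_2021} applied to the explicit pair $(P_t,e^{tX})$ from proposition \ref{prop:CLF}; you have simply filled in the details (instantaneous generator $\dot P_tP_t^{-1}=L_t-\tfrac1T\int_0^TL_s\,ds$ via commutativity, the mean-value argument for claim (3), and the $D=0$ degeneracy). The only nitpick is that constancy of $|g|^2$ a.e.\ does not literally force the real, piecewise continuous $g$ to be constant (it could flip sign), but since every object in the Lindbladian depends on $g$ only through $|g(t)|^2$ this matches the paper's own (equally loose) phrasing and is immaterial.
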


\subsection{Asymptotic stability of solutions}
\label{sec:Asymptotics}

Let $\boldsymbol{\Phi}(t)$ stand for a fundamental matrix solution of linear ODE of a form
\begin{equation}
	\dot{\vec{x}}(t) = \boldsymbol{A}(t) \vec{x}(t)
\end{equation}
over $\reals\times\complexes^n$, for $\boldsymbol{A}(t)\in\matr{n}$ being periodic with period $T$. It is well known \cite{Chicone2006} that the long-time behavior of solutions $\vec{x}(t)$ of the ODE can be fully characterized in terms of spectral properties of matrix $\boldsymbol{\Phi}(T)$, i.e.~a fundamental solution evaluated after one period. Such object gives rise to the \emph{stroboscopic} description of evolution in space of all solutions and is commonly called the \emph{monodromy matrix} of a system. Since the Markovian Master Equation \eqref{eq:PeriodicUniLind1} over $\matrd_1$ in question can be, by suitable \emph{vectorization} procedure, translated to an ODE for functions with values in space $\complexes^{d^2}$, the same stability discussion applies to quantum dynamical maps as well, and is characterized by monodromy operator $\Lambda_T = e^{TX}$, by formula \eqref{eq:FloquetNormalForm}. Seeing a correspondence between monodromy operator and asymptotic stability is immediate if one considers solutions \emph{induced} by the eigenequation of monodromy operator
\begin{equation}
	\Lambda_T (\varphi_j) = \lambda_j \varphi_j ,
\end{equation}
for some $\varphi_j \in \matrd$, $\lambda_j\in\complexes$. One can then define a set of solutions
\begin{equation}
	\rho_j (t) = \Lambda_t (\varphi_j ) = e^{\mu_j t}\phi_j (t) ,
\end{equation}
where functions $\phi_j (t) = P_t (\varphi_j )$ are periodic and $\mu_j\in\complexes$ satisfy $\lambda_j = e^{\mu_j T}$ for any $\lambda_j \in \spec{\Lambda_T}$. Numbers $\mu_j$ and $\lambda_j$ are then called the \emph{characteristic exponents} and \emph{characteristic multipliers} of the ODE (note, that characteristic exponents are not uniquely determined due to non-uniqueness of $\log{\Lambda_T}$), respectively. Under additional assumption of diagonalizability of $\Lambda_T$, set $\{\varphi_j\}$ spans $\matrd$ and $\{\rho_j\}$ is a basis in space of all solutions, i.e.~one can expand any solution $\rho_t$ as
\begin{equation}\label{eq:LimitState}
	\rho_t = \sum_{j=1}^{d^2} c_j \rho_j (t) = \sum_{j=1}^{d^2} c_j e^{\mu_j t}\phi_j (t),
\end{equation}
with coefficients $c_j \in\complexes$ determined by initial condition $\rho_0 = \sum_{j} c_j \varphi_j$. A classical result in theory of linear ODEs then states, that all solutions $\rho_j (t)$ fall into one of three categories, depending on their asymptotic behavior as $t\to\infty$. First, if $|\lambda_j| < 1$ (or, if $\operatorname{Re}\mu_j < 0$), the solution $\rho_j (t)$ vanishes exponentially. If, on the opposite $|\lambda_j| > 1$, the solution grows infinitely in norm, or ``blows up'' at large times (which turns out to be impossible; see below). If $\lambda_j = 1$, then $\rho_j (t)$ oscillates periodically. If, finally $|\lambda_j|=1$ and $\lambda_j \neq 1$, then $\rho_j (t)$ undergoes a \emph{phase shift}, $\rho_j (t+T) = e^{i\theta} \rho_j (t)$ for some $\theta \in [0,2\pi )$. Naturally, solutions falling into first and last category ($|\lambda_j|\leqslant 1$) are called \emph{stable}, and \emph{unstable} otherwise.

\begin{proposition}
The following claims hold for fundamental solution $\Lambda_t$ of Markovian Master Equation \eqref{eq:PeriodicUniLind1}:
\begin{enumerate}
	\item All solutions $\rho_j (t)$ are asymptotically stable, i.e.~no multipliers satisfying condition $|\lambda| > 1$ exist;
	\item Dynamical map $\Lambda_t$ admits an asymptotic limit cycle;
	\item If there are no characteristic multipliers satisfying $|\lambda_j|=1$ other than $1$, then $\Lambda_t$ admits a periodic limit cycle.
\end{enumerate}
\end{proposition}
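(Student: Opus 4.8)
The plan is to leverage the fact that $\Lambda_t$ is a completely positive, trace-preserving (CPTP) map for every $t$, and that $\Lambda_T = e^{TX}$ is the monodromy operator whose spectrum dictates the long-time behaviour via the expansion \eqref{eq:LimitState}. I would structure the proof in three parts, one per claim.

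For claim (1), the key observation is that $\Lambda_T$, being CPTP on $\matrd_1$, is a contraction in trace norm: $\tnorm{\Lambda_T(\rho)} \leqslant \tnorm{\rho}$ for all $\rho \in \matrd$. This forces the spectral radius of $\Lambda_T$ to be at most $1$, hence every characteristic multiplier satisfies $|\lambda_j| \leqslant 1$; no multiplier with $|\lambda_j| > 1$ can occur, and all solutions $\rho_j(t)$ are stable in the sense defined above. I would also note that the ``blow up'' scenario flagged parenthetically in the text is thereby excluded.

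For claims (2) and (3), the point is that a CPTP map always has $1$ as an eigenvalue — equivalently, there exists at least one stationary state. For the time-dependent periodic case, the relevant statement is that the monodromy map $\Lambda_T$ has $1$ in its spectrum: this follows because $\Lambda_T$ is CPTP, and every CPTP map on a finite-dimensional space fixes at least one density operator (a standard fixed-point argument, e.g. via Brouwer's theorem applied to the compact convex set of states, or by noting the adjoint map fixes the identity so $1 \in \spec{\Lambda_T^\dagger} = \spec{\Lambda_T}$). Let $\varphi_\ast$ be a corresponding eigenvector (which may be taken to be a state); then $\rho_\ast(t) = \Lambda_t(\varphi_\ast) = P_t(\varphi_\ast)$ is a periodic solution, and this is the asymptotic limit cycle: under the diagonalizability assumption, decompose an arbitrary initial state as $\rho_0 = c_\ast \varphi_\ast + \sum_{|\lambda_j| < 1} c_j \varphi_j + \sum_{|\lambda_j| = 1,\, \lambda_j \neq 1} c_j \varphi_j$; the middle sum decays exponentially, so as $t \to \infty$ the solution approaches the (generally quasi-periodic) combination of the unimodular modes. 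If the only unimodular multiplier is $1$ itself, the surviving part is exactly the periodic solution $c_\ast \rho_\ast(t)$, giving claim (3); in general the surviving part is a sum of phase-shifted periodic terms, which constitutes the limit cycle of claim (2).

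The main obstacle I anticipate is making the notion of ``limit cycle'' precise enough to prove claim (2) cleanly without the diagonalizability hypothesis, and handling the case where unimodular multipliers $\lambda_j \neq 1$ are present: there the $\omega$-limit set of a trajectory is the closure of a quasi-periodic orbit on a torus, not a single closed curve, so one must either impose diagonalizability of $\Lambda_T$ (as the text does for \eqref{eq:LimitState}) or phrase the conclusion in terms of the attracting invariant set spanned by the peripheral spectrum. A secondary technical point is confirming $1 \in \spec{\Lambda_T}$ rather than merely $|\spec{\Lambda_T}| \le 1$ — this rests on trace preservation ($\Lambda_T^\dagger(I) = I$), which is exactly what proposition \ref{prop:Afinal} guarantees, so it is available. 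I would present claims (1) and (3) in full and state claim (2) under the diagonalizability assumption already in force, with the peripheral-spectrum description as the general statement.
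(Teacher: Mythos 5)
Your proposal is correct and follows essentially the same route as the paper: complete positivity and trace preservation of $\Lambda_T$ force its spectrum into the closed unit disc and guarantee $1\in\spec{\Lambda_T}$ (the paper states exactly this ``core observation'' and defers the details to \cite{Szczygielski_2021}), after which the asymptotics are read off from the expansion \eqref{eq:LimitState} by discarding the decaying modes and keeping the periodic and phase-shifted unimodular ones. Your residual worry about non-diagonalizable $\Lambda_T$ is resolvable by noting that $\Lambda_{nT}=\Lambda_T^{\,n}$ is uniformly bounded, so unimodular eigenvalues cannot carry nontrivial Jordan blocks, but since the paper itself works under the diagonalizability assumption of \eqref{eq:LimitState} your treatment matches its level of generality.
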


\begin{proof}
The above result comes as a natural implication of both complete positivity and trace preservation of $e^{TX}$ as a map on $\matrd_1$ and is proved in \cite{Szczygielski_2021}. The core observation here is that spectrum of $\Lambda_T$ lays inside unit disc in complex plane and necessarily contains 1; this then allows to conclude on stability. In a present context, a \emph{limit cycle} of dynamical map denotes such a function $t\mapsto\rho_{t}^{\infty}\in\matrd$, $t\in\reals_+$, that for each initial point $\rho_0$ and $t_0 > 0$ large enough, the restriction of solution $\rho_t = \Lambda_t (\rho_0)$ to time interval $[t_0,\infty )$ is arbitrarily close (in uniform sense) to $\rho_{t}^{\infty}$ in space $\mathcal{C}([t_0,\infty),\matrd_1)$ of continuous, matrix-valued functions, i.e.
\begin{equation}
	\lim_{t_0\to\infty} \sup_{t\geqslant t_0} \tnorm{\rho_t - \rho_{t}^{\infty}} = 0 .
\end{equation}
The exact form of $\rho_{t}^{\infty}$ can be quickly deciphered from \eqref{eq:LimitState} by letting $t$ grow infinitely; then, all terms of a form $e^{\mu t}$ for exponents $\mu$ satisfying $\operatorname{Re}{\mu_j} < 0$ vanish and the only remaining terms are such that $|e^{\mu T}| = 1$. Note, that if $e^{\mu T} = 1$, i.e.~we have $\mu \in 2\pi i T^{-1} \integers$, the corresponding solutions are \emph{periodic}, and if $e^{\mu T}$ lays on the unit circle minus point $\{1\}$, then they are \emph{pseudo-periodic}; pseudo-periodicity in this context means that shifting the solution from time $t$ to $t+T$ shifts coefficient $c_j$ by a phase factor $e^{i\operatorname{Im}{\mu_j}t}$. If there are no multipliers on the unit circle other than $1$, then $\rho_t^\infty$ is simply a periodic limit cycle (steady state).
\end{proof}

\section{Conclusions}

We have presented a formal construction of Markovian evolution of finite dimensional open quantum system under bounded, periodic interaction Hamiltonian in weak coupling limit regime. In particular, we have shown that the problem is well-posed in two opposite regimes of very small and very large driving frequency, where the former case is governed by quantum adiabatic theorem and the latter one is developed under additional assumption of $\Omega$-congruence freedom of Bohr frequencies. As a special case, we have considered a uniform periodic steering scenario, where the resulting Lindbladian was shown to constitute for a commutative family and, in result, more properties of the induced Floquet normal form of the solution were revealed. Below, we also briefly discuss potential extensions of this formalism, as well as possible future research directions:

\begin{enumerate}[label=\arabic*., wide, labelwidth=0pt, labelindent=0pt, itemsep=6pt]
	\item \emph{Quasiperiodic interaction Hamiltonian.} It is natural to extend our results onto the case of the interaction Hamiltonian being \emph{quasiperiodic}, i.e.~expressible in a form $H_{\mathrm{int.}}(t) = H_{\mathrm{int.}}(\Omega_1 t , \, ... \, , \, \Omega_r t)$ for some vector of frequencies $(\Omega_i)\in\reals_{+}^{r}$. Recently, developments towards Markovian evolution under quasiperiodicity of system's self Hamiltonian $H(t)$ were already made \cite{Szczygielski2020a} under assumption of rational independence of frequencies $\{\Omega_i\}$. This potentially suggests that generalization of our results onto a case of quasiperiodic coupling is indeed feasible (perhaps with some technical assumptions).
	\item \emph{Intermediate frequency range.} Analysis which we carried out in Sections \ref{sec:WCL} nad \ref{sec:WCLA} covers the opposing ranges of very large and very small driving frequency which is mainly due to employed time rescaling scheme. The question remains on possible intermediate range of frequencies (for example, comparable with Bohr frequencies) with possibility of Non-markovian nature of induced dynamics.
	\item \emph{Infinite dimensional systems.} Generalization of our results onto the infinite dimensional case is possible, however demands for different treatment (since, for example, the time averaging operation may be ill-defined in this case; see \cite{Davies1976a}).
	\item \emph{Coupling to multiple reservoirs.} Coupling to a finite number of reservoirs, all subject to integrability condition \eqref{eq:AutocorrFunConditionEpsilon}, is straightforward by linear structure of interaction term, as long as the driving frequency remains the same among all the couplings (otherwise the interaction is quasiperiodic).
	\item \emph{Relaxing the $\Omega$-congruence freedom assumption.} In the proof of Proposition \ref{prop:VolterraConvergence}, an assumption of $\Omega$-congruence freedom of Bohr frequencies was introduced. This greatly simplifies the analysis, since in such case the only possibility for condition $\omega_n = \omega^{\prime}_{m}$ to be satisfied is when $\omega=\omega'$ and $n=m$. If this condition is relaxed, then it may happen that there exists a pair $(\omega,\omega')$ of frequencies such that $\omega = \omega' + k_0 \Omega$ for some $k_0 \in \integers\setminus\{0\}$. Then, the condition $\omega_n = \omega^{\prime}_{m}$ is satisfied for infinitely many integers $m,n$ such that $k_0 = m-n$ and all resulting formulas complicate in a significant way. However, it would be still interesting to examine Markovianity properties of the dynamics in such case.
	\item \emph{Generalization onto broader class of reservoirs.} A possibly open question arises concerning applicability of the formalism, and especially Propositions \ref{proposition:KlambdaConv} and \ref{prop:FinalWCL}, in cases of more general reservoirs than those satisfying \eqref{eq:AutocorrFunConditionEpsilon}; a purely bosonic reservoir being a model example. In such cases however, one could still argue for applicability if some certain technical conditions are imposed on the autocorrelation functions (for example, being compactly supported or decaying rapidly enough via introduction of \emph{cut-off frequency}).
\end{enumerate}

\section*{Acknowledgments}
The author is indebted to Prof. Robert Alicki for discussions and to the anonymous Reviewer for comments, which led to improvement of the initial version of the manuscript. Support by the National Science Centre, Poland, via grant No. 2016/23/D/ST1/02043 is greatly acknowledged.

\section*{Data availability}
Data sharing is not applicable to this article as no new data were created or analyzed in this study.


\begin{thebibliography}{10}

\bibitem{Alicki2006b}
R.~Alicki, D.~A. Lidar, and P.~Zanardi.
\newblock {Internal consistency of fault-tolerant quantum error correction in
  light of rigorous derivations of the quantum {M}arkovian limit}.
\newblock {\em Phys. Rev. A}, 73(5):052311, 2006.

\bibitem{Szczygielski2014}
K.~Szczygielski.
\newblock {On the application of {F}loquet theorem in development of
  time-dependent {L}indbladians}.
\newblock {\em J. Math. Phys.}, 55(8):083506, 2014.

\bibitem{Szczygielski2020}
K.~Szczygielski and R.~Alicki.
\newblock {On Howland time-independent formulation of CP-divisible quantum
  evolutions}.
\newblock {\em Reviews in Mathematical Physics}, 32:2050021, 2020.

\bibitem{Szczygielski_2021}
Krzysztof Szczygielski.
\newblock {On the Floquet analysis of commutative periodic Lindbladians in
  finite dimension}.
\newblock {\em Linear Algebra Appl.}, 609:176--202, jan 2021.

\bibitem{Szczygielski2013}
K.~Szczygielski, D.~Gelbwaser-Klimovsky, and R.~Alicki.
\newblock {Markovian master equation and thermodynamics of a two-level system
  in a strong laser field}.
\newblock {\em Phys. Rev. E}, 87(012120):012120, 2013.

\bibitem{Szczygielski2015}
K.~Szczygielski and R.~Alicki.
\newblock {Markovian theory of dynamical decoupling by periodic control}.
\newblock {\em Phys. Rev. A}, 92(2):022349, 2015.

\bibitem{Gelbwaser-Klimovsky2015}
D.~Gelbwaser-Klimovsky, K.~Szczygielski, U.~Vogl, A.~Sa\ss{}, R.~Alicki,
  G.~Kurizki, and M.~Weitz.
\newblock {Laser-induced cooling of broadband heat reservoirs}.
\newblock {\em Phys. Rev. A}, 91:023431, 2015.

\bibitem{Alicki2015a}
R.~Alicki, D.~Gelbwaser-Klimovsky, and K.~Szczygielski.
\newblock Solar cell as a self-oscillating heat engine.
\newblock {\em J. Phys. A: Math. Theor.}, 49(1):015002, 2015.

\bibitem{Alicki2017}
R.~Alicki, D.~Gelbwaser-Klimovsky, and A.~Jenkins.
\newblock A thermodynamic cycle for the solar cell.
\newblock {\em Ann. Phys.}, 378:71--87, 2017.

\bibitem{Alicki2017a}
R.~Alicki.
\newblock {From the GKLS Equation to the Theory of Solar and Fuel Cells}.
\newblock {\em Open. Syst. Inf. Dyn.}, 24(03):1740007, 2017.

\bibitem{Alicki2018a}
R.~Alicki and A.~Jenkins.
\newblock Interaction of a quantum field with a rotating heat bath.
\newblock {\em Ann. Phys.}, 395:69--83, 2018.

\bibitem{Alicki2019}
R.~Alicki.
\newblock A quantum open system model of molecular battery charged by excitons.
\newblock {\em J. Chem. Phys.}, 150(21):214110, 2019.

\bibitem{Merkli_2008}
M.~Merkli and S.~Starr.
\newblock {A Resonance Theory for Open Quantum Systems with~Time-Dependent
  Dynamics}.
\newblock {\em J. Stat. Phys.}, 134(5-6):871--898, 2008.

\bibitem{Davies1974}
E.~B. Davies.
\newblock {Markovian master equations}.
\newblock {\em Commun. Math. Phys.}, 39(2):91--110, 1974.

\bibitem{Davies1978}
E.~B. Davies and H.~Spohn.
\newblock {Open quantum systems with time-dependent Hamiltonians and their
  linear response}.
\newblock {\em J. Stat. Phys.}, 19(5):511--523, 1978.

\bibitem{Nakajima1958}
S.~Nakajima.
\newblock {On Quantum Theory of Transport Phenomena}.
\newblock {\em Progr. Theor. Phys.}, 20(6):948--959, 1958.

\bibitem{Zwanzig1960}
R.~Zwanzig.
\newblock {Ensemble Method in the Theory of Irreversibility}.
\newblock {\em J. Chem. Phys.}, 33(5):1338--1341, 1960.

\bibitem{Davies1976}
E.~B. Davies.
\newblock {\em {Q}uantum {T}heory of {O}pen {S}ystems}.
\newblock Academic Press, London, 1976.

\bibitem{Rivas2012}
{\'{A}}.~Rivas and S.~F. Huelga.
\newblock {\em {Open {Q}uantum {S}ystems: {A}n {I}ntroduction}}.
\newblock Springer, Berlin Heidelberg, 2012.

\bibitem{Kato1966}
T.~Kato.
\newblock {\em Perturbation theory for linear operators}.
\newblock Springer Berlin Heidelberg, 1966.

\bibitem{Alicki2006a}
R.~Alicki and K.~Lendi.
\newblock {\em {Quantum {D}ynamical {S}emigroups and {A}pplications}}.
\newblock Springer, Berlin Heidelberg, 2006.

\bibitem{Breuer2002}
H.-P. Breuer and F.~Petruccione.
\newblock {\em {The theory of open quantum systems}}.
\newblock Oxford University Press, New York, 2002.

\bibitem{Chicone2006}
C.~Chicone.
\newblock {\em {Ordinary {D}ifferential {E}quations with {A}pplications}}.
\newblock Springer, New York, 2006.

\bibitem{Szczygielski2020a}
K.~Szczygielski.
\newblock {On the Lyapunov-Perron reducible Markovian Master Equation}.
\newblock arXiv:2012.01877.

\bibitem{Davies1976a}
E.~B. Davies.
\newblock {Markovian master equations. II}.
\newblock {\em Math. Ann.}, 219(2):147--158, jun 1976.

\end{thebibliography}
\end{document}